\newtheorem{theorem}{Theorem}
\newtheorem{lemma}{Lemma}
\newtheorem{definition}{Definition}
\newcommand{\qed}{\hfill $\Box$ \medbreak}
\newenvironment{proof}{\noindent {\bf Proof.}}{\qed}
\newcommand{\dMAM}{\mathsf{dMAM}}
\newcommand{\T}{\mathbb{T}}
\newcommand{\R}{\mathbb{R}}
\newcommand{\N}{\mathbb{N}}
\newcommand{\id}{\mbox{\rm id}}
\renewcommand{\P}{\mathbb{P}}
\newcommand{\Sk}{\mathrm{Sk}}
\newcommand{\G}{\mathcal{G}}
\newcommand{\NT}{\mathbb{P}}
\newcommand{\hist}{\mathsf{h}}
\newcommand{\p}{\mathsf{pred}}
\newcommand{\s}{\mathsf{succ}}
\newcommand{\type}{\mathsf{T}}
\newcommand{\polylog}{\mathrm{polylog\,}}
\begin{document}

\title{Local Certification of Graphs with Bounded Genus}

\author[1]{Laurent Feuilloley\thanks{Additional support from MIPP and Jos\'e Correa's Amazon Research Award.}}
\author[2]{Pierre Fraigniaud\thanks{Additional support from ANR projects DESCARTES, FREDDA, and QuData. }}
\author[3]{Pedro Montealegre\thanks{Additional support from ANID via  PAI + Convocatoria Nacional Subvenci\'on a la Incorporaci\'on en la Academia A\~no 2017 + PAI77170068 and FONDECYT  11190482.}}
\author[4]{\\ Ivan Rapaport\thanks{Additional support from ANID via PIA/Apoyo a Centros Cient\'{\i}ficos y Tecnol\'ogicos de Excelencia AFB
170001 and Fondecyt 1170021.}}
\author[5]{\'Eric R\'emila\thanks{Additional support from IDEXLYON (project INDEPTH) within the Programme Investissements  d'Avenir (ANR-16-IDEX-0005) and  ``Math\'ematiques de la D\'ecision pour l'Ing\'enierie Physique et Sociale'' (MODMAD) }}
\author[6]{Ioan Todinca}

\affil[1]{\small Departamento de Ingenier\'ia Industrial, Universidad de Chile, \texttt{feuilloley@dii.uchile.cl}}
\affil[2]{IRIF, Universit\'e de Paris and CNRS, France, \texttt{pierre.fraigniaud@irif.fr} }
\affil[3]{Facultad de Ingenier\'ia y Ciencias, Universidad Adolfo Ib\'a\~nez, Santiago, Chile, \texttt{p.montealegre@uai.cl}}
\affil[4]{DIM-CMM (UMI 2807 CNRS), Universidad de Chile, \texttt{rapaport@dim.uchile.cl}}
\affil[5]{GATE Lyon St-Etienne (UMR 5824), Universit\'e  de Lyon -- UJM St-Etienne,  France, \texttt{eric.remila@univ-st-etienne.fr}}
\affil[6]{LIFO, Universit\'e d'Orl\'eans and INSA Centre-Val de Loire, France, \texttt{ioan.todinca@univ-orleans.fr}}

\date{}

\maketitle

\begin{abstract}


Naor, Parter, and Yogev [SODA 2020]~recently designed a compiler for automatically translating standard centralized interactive protocols to \emph{distributed} interactive protocols, as introduced by Kol, Oshman, and Saxena [PODC 2018]. In particular, by using this compiler, every linear-time algorithm for deciding the membership to some fixed graph class can be translated into a $\dMAM(O(\log n))$ protocol for this class, that is, a distributed interactive protocol with $O(\log n)$-bit proof size in $n$-node graphs, and three interactions between the (centralizer) computationally-unbounded but non-trustable \emph{prover} Merlin, and the (decentralized) randomized  computational\-ly-limited \emph{verifier} Arthur. As a corollary, there is a $\dMAM(O(\log n))$ protocol for the class of planar graphs, as well as for the class of graphs with bounded genus.  

We show that there exists a distributed interactive protocol for the class  of graphs with bounded genus performing just a \emph{single} interaction, from the prover to the verifier, yet preserving proof size of $O(\log n)$ bits.  This result also holds for the class of graphs with bounded \emph{demi-genus}, that is, graphs that can be embedded on a \emph{non-orientable} surface of bounded genus. The interactive protocols described in this paper are actually \emph{proof-labeling schemes}, i.e., a subclass of interactive protocols, previously introduced by Korman, Kutten, and Peleg [PODC 2005]. In particular, these schemes do \emph{not} require any randomization from the verifier, and the proofs may often be computed a priori, at low cost, by the nodes themselves. Our results thus extend the recent proof-labeling scheme for planar graphs by Feuilloley et al.~[PODC 2020], to graphs of bounded genus, and to graphs of bounded demigenus.

\end{abstract}

\vfill


\thispagestyle{empty}
\pagebreak
\setcounter{page}{1}


\section{Introduction}

\subsection{Context and Objective}

The paper considers the standard setting of distributed network computing, in which processing elements are nodes of a  network modeled as a simple connected graph $G=(V,E)$, and the nodes exchange information along the links of that network (see, e.g.,~\cite{Peleg00}). As for centralized computing, distributed algorithms often assume promises on their inputs, and many algorithms are designed for specific families of graphs, including regular graphs, planar graphs, graphs with bounded arboricity, bipartite graphs, graphs with bounded treewidth, etc. Distributed \emph{decision} refers to the problem of checking that the actual input graph (i.e., the network itself) satisfies a given predicate. One major objective of the check up is avoiding erroneous behaviors such as livelocks or deadlocks resulting from running an algorithm dedicated to a specific graph family on a graph that does not belong to this family. The decision rule typically specifies that, if the predicate is satisfied, then all nodes must accept, and otherwise at least one node must reject. A single rejecting node can indeed trigger an alarm (in, e.g., hardwired networks), or launch a recovery procedure (in, e.g., virtual networks such as overlay networks).  The main goal of distributed decision is to design efficient checking protocols, that is, protocols where every node exchange information  with nodes in its vicinity only, and where the nodes exchange a small volume of information between neighbors. 

\paragraph{Proof-Labeling Schemes.}

Some graph predicate are trivial to check locally (e.g., regular graphs), but others do not admit local decision algorithms. For instance, deciding whether the network is bipartite may require long-distance communication for detecting the presence of an odd cycle. \emph{Proof-labeling schemes}~\cite{KormanKP10} provide a remedy to this issue. These mechanisms have a flavor of NP-computation, but in the distributed setting. That is, a non-trustable oracle provides each node with a \emph{certificate}, and the collection of certificates is supposed to be a \emph{distributed proof} that the graph satisfies the given predicate. The nodes check locally the correctness of the proof. The specification of a proof-labeling scheme for a given predicate is that, if the predicate is satisfied, then there must exist a certificate assignment leading all nodes to accept, and, otherwise, for every certificate assignment, at least one node rejects.  As an example, for the case of the bipartiteness predicate, if the graph is bipartite, then an oracle can color blue the nodes of one of the partition, and color red the nodes of the other partition. It is then sufficient for each node to  locally check that all its neighbors have the same color, different from its own color, and to accept or reject accordingly. Indeed, if the graph is not bipartite, then there is no way that a dishonest oracle can fool the nodes, and make them all accept the graph.

Interestingly, the certificates provided to the nodes by the oracle can often be computed by the nodes themselves, at low cost, during some pre-computation. For instance, a spanning tree construction algorithm is usually simply asked to encode the tree $T$ locally at each node~$v$, say by a pointer $p(v)$ to the parent of $v$ in the tree. However, it is possible to ask the algorithm to also provide a distributed proof that $T$ is a spanning tree. Such a proof may be encoded distributedly by providing each node with a certificate containing, e.g., the ID of the root of~$T$, and the distance $d(v)$ from $v$ to the root (see, e.g., \cite{AfekKY97,AwerbuchPV91,ItkisL94}). Indeed, every node~$v$ but the root can  simply check that $d(p(v))=d(v)-1$ (to guarantee the absence of cycles), and that it was given the same root-ID as  all its neighbors in the network (for guaranteeing the unicity of the tree). 

\paragraph{Distributed Interactive Protocols.}

The good news is that all (Turing-decidable) predicates on graphs admit a proof-labeling scheme~\cite{KormanKP10}. The bad news is that there are simple graphs properties (e.g., existence of a non-trivial automorphism~\cite{KormanKP10}, non 3-colorability~\cite{GoosS16}, bounded diameter~\cite{Censor-HillelPP20}, etc.) which require certificates on  $\widetilde{\Omega}(n^2)$ bits  in $n$-node graphs. Such huge certificates do not fit with the requirement that checking algorithms must not only be local,  but they must also consume little bandwidth. Randomized proof-labeling schemes~\cite{FraigniaudPP19} enable to limit the bandwidth consumption, but this is often to the cost of increasing the space-complexity of the nodes. However, motivated by cloud computing, which may provide large-scale distributed systems with the ability to  interact with an external party, Kol, Oshman, and Saxena~\cite{KolOS18} introduced the notion of \emph{distributed interactive protocols}. In such protocols, a centralized non-trustable oracle with unlimited computation power (a.k.a.~Merlin) exchanges messages with a randomized distributed algorithm (a.k.a.~Arthur). Specifically, Arthur and Merlin perform a sequence of exchanges during which every node queries the oracle by sending a random bit-string, and the oracle replies to each node by sending a bit-string called \emph{proof}. Neither the random strings nor the proofs need to be the same for each node. After a certain number of rounds, every node exchange information with its neighbors in the network, and decides (i.e., it outputs accept or reject). It was proved that many predicate requiring large certificate whenever using proof-labeling schemes, including the existence of a non-trivial automorphism, have distributed interactive protocols with proofs on $O(\log n)$ bits~\cite{KolOS18}. 

Naor, Parter, and Yogev~\cite{NaorPY20} recently designed a compiler for automatically translating standard centralized interactive protocols to distributed interactive protocols. In particular, by using this compiler, every linear-time algorithm for deciding the membership to some fixed graph class can be translated into a $\dMAM(O(\log n))$ protocol, that is, a distributed interactive protocol with $O(\log n)$-bit proof size in $n$-node graphs, and three interactions between Merlin and Arthur: Merlin provides every node with a first part of the proof, on $O(\log n)$ bits, then every node challenges Merlin with a random bit-string on $O(\log n)$ bits,  and finally Merlin replies to every node with the second part of the proof, again on $O(\log n)$ bits. As a corollary, there is a $\dMAM(O(\log n))$ protocol for many graph classes, including planar graphs, graphs with bounded genus, graphs with bounded treewidth, etc. 

\paragraph{The Limits of Distributed Interactive Protocols.}

Although the compiler in~\cite{NaorPY20} is quite generic and powerful, it remains that the practical implementation of distributed interactive protocols may be complex, in particular for the ones based on several interactions between Merlin and Arthur. This raises the question of whether there exist protocols based on fewer interactions for the aforementioned classes of graphs, while keeping the proof size small (e.g., on $O(\polylog n)$ bits). Note that, with this objective in mind, proof-labeling schemes are particularly desirable as they do not require actual interactions. Indeed, as mentioned before, the certificates may often be constructed a priori by the nodes themselves. Unfortunately, under the current knowledge, establishing lower bounds on the number of interactions between the prover Merlin and the distributed verifier Arthur, as well as lower bounds on the proof size, not to speak about tradeoffs between these two complexity measures, remains challenging.  Therefore, it is not known whether $\dMAM(O(\log n))$ protocols are the best that can be achieved for graph classes such as graphs with bounded genus, or graphs with bounded treewidth.

\paragraph{Graphs with Bounded Genus.}

In this paper, we focus on the class of graphs with bounded genus, for several reasons. First, this class is among the prominent representative of sparse graphs~\cite{NesetrilO12}, and the design of fast algorithms for sparse graphs is not only of the utmost interest for centralized, but also for distributed computing  (see, e.g.,
 \cite{AmiriMRS18,BonamyGP20,CH06-esa,CHW08-disc,GhaffariH16a,GhaffariH16b,GhaffariP17,LenzenOW08,LenzenPW13, Wawrzyniak14}), 
as many real-world physical or logical networks are sparse. Second, graphs of bounded genus, including planar graphs, have attracted lots of attention recently in the distributed computing framework (see, e.g., 
\cite{AmiriSS16,AmiriSS19,GavoilleH99}), 
and it was shown that  this large class of graphs enjoys distributed exact or approximation algorithms that overcome several known lower bounds for general graphs~\cite{KuhnMW04,PelegR00,SarmaHKKNPPW12}. Last but not least, it appears that the graph classes for which proof-labeling schemes require certificates of large size are not closed under node-deletion, which yields the question of whether every graph family closed under node-deletion (in particular graph families closed under taking minors) have proof-labeling schemes with certificates of small size. This was recently shown  to be true for planar graphs~\cite{FeuilloleyFRRMT}, but the question is open beyond this class, putting aside simple classes such as bipartite graphs, forests, etc. 

As for the class of planar graphs, and for the class of graphs with bounded genus, every graph class~$\G$  that is closed under taking minors has a finite set of forbidden minors. As a consequence, as established in~\cite{FeuilloleyFRRMT}, there is a simple proof-labeling scheme with $O(\log n)$-bit certificates for $\overline{\G}$, i.e., for \emph{not} being in~$\G$. The scheme simply encodes a forbidden minor present in $G$ in a distributed manner for certifying that $G\notin\G$. Therefore, for every $k\geq 0$, there exists a simple proof-labeling scheme with $O(\log n)$-bit certificates for genus or demi-genus \emph{at least}~$k$. The difficulty is to design a proof-labeling scheme with $O(\log n)$-bit certificates for genus or demi-genus \emph{at most}~$k$. 

\subsection{Our Results}

\subsubsection{Compact Proof-Labeling Schemes for Graphs of Bounded Genus}

Recall that planar graphs are graphs embeddable on the 2-dimensional sphere $S^2$ (without edge-crossings). Graphs with genus~1 are embeddable on the torus~$\T_1$, and, more generally, graphs with genus $k\geq 0$ are embeddable on the  closed surface $\T_k$ obtained from $S^2$ by adding $k$ \emph{handles}. We show that, for every $k\geq 0$, there exists a proof-labeling scheme for the class of graphs with genus at most~$k$, using certificates on $O(\log n)$ bits. This extends a recent  proof-labeling scheme for planar graphs~\cite{FeuilloleyFRRMT} to graphs with arbitrary genus~$k\geq 0$. Note that the certificate-size of our proof-labeling schemes is optimal, in the sense there are no proof-labeling schemes using certificates on $o(\log n)$ bits, even for planarity~\cite{FeuilloleyFRRMT}.

For every~$k\geq 1$, our proof-labeling schemes also apply to the class of graphs with \emph{demi-genus}  (a.k.a., Euler genus) at most~$k$, that is, they also hold for graphs embeddable on a \emph{non-orientable} surface with genus~$k$. Graphs with demi-genus~$k$ are indeed graphs embeddable on the closed surface~$\NT_k$ obtained from $S^2$ by adding $k$ \emph{cross-caps}. For instance, graphs with demi-genus~1 are embeddable on the projective plane $\NT_1$, while graphs with demi-genus~2 are  embeddable on the Klein bottle~$\NT_2$. 

\medbreak

This paper therefore demonstrates that the ability of designing proof-labeling schemes with small certificates for planar graphs is not a coincidental byproduct of planarity, but this ability extends to much wider classes of sparse graphs closed under taking minors. This provides hints that proof-labeling schemes with small certificates can also be designed for very many (if not all) natural  classes of sparse graphs closed under vertex-deletion. 

\begin{figure}[tb]
\centering
\input{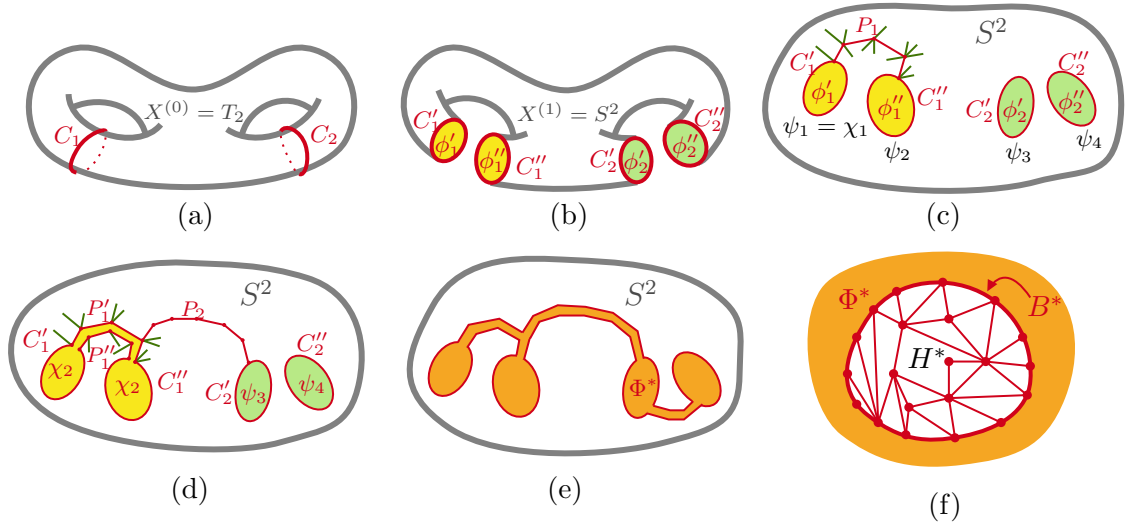}
\caption{An idealistic scenario where a graph $G$ embedded on $\T_2$ has disjoint non-separating cycles}
\label{fig:pierreT2} 
\end{figure} 

\subsubsection{Our Techniques}

Our proof-labeling schemes are obtained thanks to a local encoding of a mechanism enabling to ``unfold'' a graph~$G$ of genus or demi-genus~$k$ into a planar graph~$\widehat{G}$, by a series of vertex-duplications. Specifically, for graphs of genus~$k$, i.e., embeddable on an orientable surface~$\T_k$, we construct a sequences $G^{(0)},\dots,G^{(k)}$ where $G^{(0)}=G$, $G^{(k)}=\widehat{G}$, and, for every $i=0,\dots,k$, $G^{(i)}$ has genus~$k-i$. For $i\geq 1$, the graph $G^{(i)}$ is obtained from $G^{(i-1)}$ by identifying a non-separating cycle $C_i$ in $G^{(i-1)}$, and duplicating the vertices and cycles of $C_i$ (see Figure~\ref{fig:pierreT2}(a-b)). This enables to ``cut'' a handle of the surface $\T_{k-i+1}$, resulting in a closed surface $\T_{k-i}$ with genus one less than $\T_{k-i+1}$, while the embedding of $G^{(i-1)}$ on $\T_{k-i+1}$ induces an embedding of $G^{(i)}$ on $\T_{k-i}$. The graph $\widehat{G}$ is planar, and has $2k$ special faces $\phi'_1,\phi''_1,\dots,\phi'_k,\phi''_k$, where, for $i=0,\dots,k$, the faces  $\phi'_i$ and $\phi''_i$ results from the duplication of the face $C_i$ (see Figure~\ref{fig:pierreT2}(c)). 

The proof-labeling scheme needs  to certify not only the planarity of~$\widehat{G}$, but also the existence of the faces $\phi'_1,\phi''_1,\dots,\phi'_k,\phi''_k$, and a proof that they are indeed faces, which is non-trivial. Therefore, instead of keeping the $2k$ faces as such, we connect them by a sequence of paths $P_1,\dots,P_{2k-1}$. By duplicating each path $P_i$ into $P'_i$ and $P''_i$, the two faces $\chi$ and $\psi$ connected by a path $P_i$ is transformed into a single face, while planarity is preserved. Intuitively, the new face is the ``union'' of $\chi$, $\psi$, and the  ``piece in between'' $P'_i$ and $P''_i$ (see Figure~\ref{fig:pierreT2}(d)). The whole process eventually results in a planar graph~$H$ with a single special face $\phi$ (see Figure~\ref{fig:pierreT2}(e-f)). In fact, the paths $P_i$, $i=1,\dots,2k-1$ do not only serve the objective of merging the $2k$ faces $\phi'_1,\phi''_1,\dots,\phi'_k,\phi''_k$ into a single face~$\phi$, but also serve the objective of keeping track of consistent orientations of the boundaries of these faces. The purpose of these orientations  is to provide the nodes with the ability to locally check that the $2k$ faces can indeed be paired for forming $k$ handles. 

The planarity of $H$ and the existence of the special face $\phi$ can be certified by a slight adaptation of the proof-labeling scheme for planarity in~\cite{FeuilloleyFRRMT}. It then remains to encode the sequence of cycle and path-duplications locally, so that every node can roll back the entire process, for identify the cycles $C_i$, $i=1,\dots,k$, and the paths $P_j$, $j=1,\dots,2k-1$, and for checking their correctness. 

\begin{figure}[tb]
\centering
\input{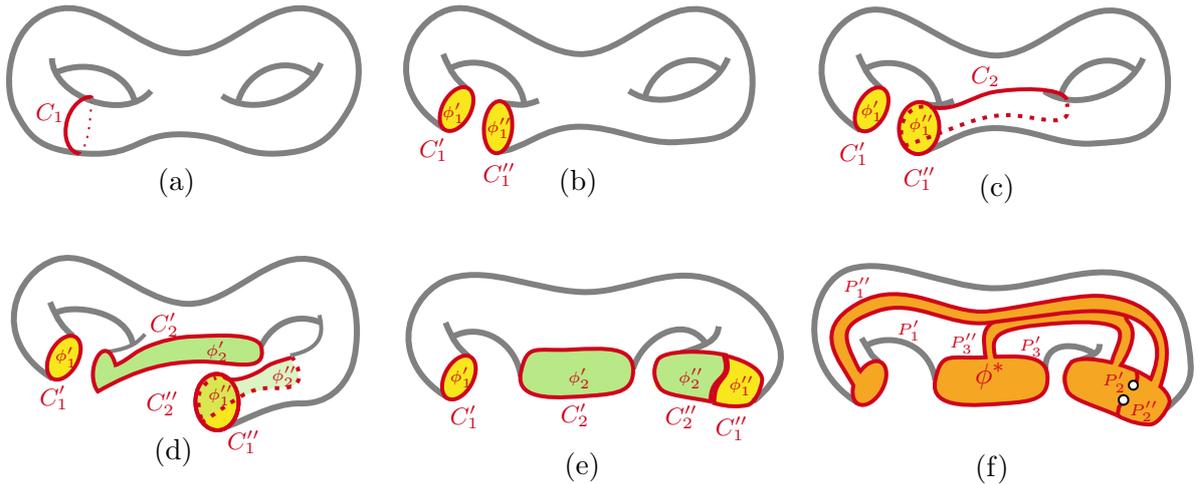}
\caption{A more complex unfolding a graph $G$ embedded on $\T_2$.}
\label{fig:pierreT2-complex} 
\end{figure} 

Among many issues, a very delicate problem is that, as opposed to cycles and paths drawn of a surface, which can be chosen to intersect at a single points, these cycles and paths are in graphs embedded on  surfaces, and thus they may intersect a lot, by sharing vertices or even edges.  Figure~\ref{fig:pierreT2} displays an idealistic scenario in which the cycles $C_i$'s are disjoint, the paths $P_j$'s are  disjoint, and these cycles and paths are also disjoint. However, this does not need to be the case, and the considered cycles and paths may mutually intersect in a very intricate manner.   For instance, Figure~\ref{fig:pierreT2-complex} displays a case in which $C_2$ intersects with $C''_1$, $P'_2$ and $P''_2$ are reduced to single vertices, and $P''_3$ intersects with $P'_1$.  It follows that the sequence of duplications may actually be quite cumbersome in general, with some nodes duplicated many times. As a consequence, keeping track of the boundaries of the faces is challenging, especially under the constraint that all information must be distributed, and stored at each node using $O(\log n)$ bits only. Also, one needs to preserve specific orientations of the boundaries of the faces, for making sure that not only the two faces $\phi'_i$ and $\phi''_i$ corresponding to a same cycle $C_i$ can be identified, but also  that they can be glued together appropriately in a way resulting to  a handle, and not be glued like, e.g., a Klein bottle. 

The case of graphs embedded on a non-orientable closed surface causes other problems, including the local encoding of the cross-caps, and the fact that decreasing the genus of a non-orientable closed surface by removing a cross-cap  may actually result in a closed surface that is orientable. Indeed, eliminating cross-caps is based on \emph{doubling} a non-orientable cycle of the graph, and this operation may result in a graph embedded on a surface that is actually orientable. (This phenomenon did not pop up in the case of orientable surfaces, as removing a handle from an orientable closed surface by cycle-duplication results in a graph embedded on an orientable closed surface.) Thus, the proof-labeling scheme for bounded demi-genus has to encode not only the identification the cross-caps, but also of faces to be identified for forming handles. 

For guaranteeing certificates on $O(\log n)$ bits, our proof-labeling schemes distribute the   information evenly to the certificates provided to the nodes, using the fact that graphs of bounded (demi-)genus have bounded \emph{degeneracy}. This property enables to store certificates on $O(\log n)$ bits at each node, even for nodes that have arbitrarily large degrees.

\subsection{Related Work}


Bounded-degree graphs form one of the most popular class of sparse graphs studied in the context of design and analysis of distributed algorithms, as witnessed by the large literature (see, e.g., \cite{Peleg00}) dedicated to construct \emph{locally checkable labelings} (e.g., vertex colorings, maximal independent sets, etc.) initiated a quarter of a century ago by the seminal work in~\cite{NaorS95}. Since then, other classes of sparse graphs  have received a lot of attention, including planar graphs, and graphs of bounded genus. In particular, there is a long history of designing distributed approximation algorithms for these classes, exemplified by the case of the minimum dominating set problem. One of the earliest  result for this latter problem is the design of a constant-factor approximation algorithm  for planar graphs, performing  in a constant number of rounds~\cite{LenzenOW08}. This result is in striking contrast with the fact that even a poly-logarithmic approximation requires at least $\Omega(\sqrt{\log n/ \log\log n})$ rounds in arbitrary $n$-node networks~\cite{KuhnMW04}. 
The paper~\cite{LenzenOW08} has paved the way for a series of works, either improving on the complexity and the approximation ratio~\cite{CHW08-disc, LenzenPW13, Wawrzyniak14}, or using weaker models~\cite{Wawrzyniak15}, or tackling more general problems~\cite{CHSWW14, CHSWW17}, or proving lower bounds~\cite{HilkeLS13,CHW08-disc}. The minimum dominating set  problem has then been studied in more general classes such as  graphs with bounded arboricity \cite{LenzenPW13}, minor-closed graphs \cite{CH06-esa}, and graphs with bounded expansion~\cite{AmiriMRS18}. Specifically, for graphs with bounded genus, it has been shown that a constant approximation can be obtained in time~$O(k)$ for graphs of genus~$k$~\cite{AmiriSS16}, and a $(1+\epsilon)$-approximation algorithm has  recently been designed, performing  in time $O(\log^*\!n)$~\cite{AmiriSS19}.

Several other problems, such as maximal independent set,  maximal matching, etc.,  have been studied for the aforementioned graph classes, and we refer to \cite{Feuilloley20} for an extended bibliography. 
In addition to the aforementioned results, mostly dealing with local algorithms, there are recent results in computational models taking into account limited link bandwidth, for graphs that can be embedded on surfaces. For instance, it was  shown that a combinatorial planar embedding can be computed efficiently in the \textsf{CONGEST} model~\cite{GhaffariH16a}. Such an embedding can then be used to derive more efficient algorithms for minimum-weight spanning tree, min-cut, and depth-first search tree constructions~\cite{GhaffariH16b,GhaffariP17}. Finally, it is worth mentioning that, in addition to algorithms, distributed data structures have been designed for graphs embedded on surfaces, including a recent  optimal adjacency-labeling for planar graphs~\cite{BonamyGP20,DujmovicEJGMM20}, and routing tables for graphs of bounded genus~\cite{GavoilleH99} as well as for graphs excluding a fixed minor~\cite{AbrahamGM05}.

Proof-labeling schemes (PLS) were introduced in~\cite{KormanKP10}, and different variants were later introduced. Stronger forms of PLS include \emph{locally checkable proofs} (LCP)~\cite{GoosS16} in which nodes forge their decisions on the certificates and on the whole states of their neighbors, and $t$-PLS~\cite{FeuilloleyFHPP18} in which nodes perform communication at distance~$t\geq 1$ before deciding. Weaker forms of PLS include  \emph{non-deterministic local decision} (NLD)~\cite{FraigniaudKP13} in which the certificates must be independent from the identity-assignment to the nodes. PLS were also extended by allowing the verifier to be randomized (see~\cite{FraigniaudPP19}). Such protocols were originally referred to as \emph{randomized PLS} (RPLS), but are nowadays referred to as distributed Merlin-Arthur (\textsf{dMA}) protocols. 

The same way NP is extended to the complexity classes forming the Polynomial Hierarchy, by alternating quantifiers, PLS were extended to a hierarchy of distributed decision classes~\cite{BalliuDFO17,FeuilloleyFH16}, which can be viewed as resulting from a game between a prover and a \emph{disprover}. Recently, \emph{distributed interactive proofs} were formalized~\cite{KolOS18}, and the classes $\mathsf{dAM}[k](f(n))$ and $\mathsf{dMA}[k](f(n))$ were defined, where $k\geq 1$ denotes the number of alternations between the centralized Merlin and the decentralized Arthur, and $f(n)$ denotes the size of the proof --- $\mathsf{dAM}[3](f(n))$ is also referred to as $\dMAM(f(n))$.  Distributed interactive protocols for problems like the existence of a non-trivial automorphism (\textsf{AUT}), and non-isomorphism ($\overline{\mathsf{ISO}}$) were designed and analyzed in~\cite{KolOS18}. The follow up paper~\cite{NaorPY20} improved the complexity of some of the protocols in~\cite{KolOS18}, either in terms of the number of interactions between the prover and the verifier, and/or in terms of the size of the certificates. A sophisticated generic way for constructing distributed IP protocols based on sequential IP protocols is presented in~\cite{NaorPY20}. One of the main outcome of this latter construction is a $\mathsf{dMAM}$ protocols using certificates on $O(\log n)$ bits for all graph classes whose membership can be decided in linear time. For other recent results on distributed interactive proof, see~\cite{CrescenziFP19,FraigniaudMORT19}.

\subsection{Organization of the Paper}

The next section provides the reader with  basic notions regarding graphs embedded on closed surfaces, and formally defines our problem. Section~\ref{sec:unfolding-a-surface} describes how to ``unfold'' a graph $G$ of genus~$k$, for producing a planar graph~$H$ with a special face~$\phi$. The section also describes how, given a planar graph~$H$ with a special face~$\phi$, one can check that $(H,\phi)$ results from the unfolding of a graph $G$ with genus~$k$.  Then, Section~\ref{sec:tools} presents our first main result, that is, a proof-labeling scheme for the class of graphs with bounded genus. In particular, it describes how to encode the description of the pair $(H,\phi)$ from Section~\ref{sec:unfolding-a-surface}, and, more importantly, how to locally encode the whole unfolding process in a distributed manner, using certificates on $O(\log n)$ bits, which allow the nodes to collectively check that their certificates form  a proof that $G$ has genus~$k$. Section~\ref{sec:PLS-demi-genus} presents our second main result, by showing how to extend the proof-labeling scheme of Section~\ref{sec:tools} to the class of graphs with bounded demi-genus. Finally, Section~\ref{sec:conclusion} concludes the paper with a discussion about the obstacles to be overcame for the design of a proof-labeling scheme for the class of graphs excluding a fixed minor.


\section{Definitions, and Formal Statement of the Problem}

This section contains a brief introduction to graphs embedded on surfaces, and provides the formal statement of our problem. 

\subsection{Closed Surfaces}

Most of the notions mentioned in this section are standard, and we refer to, e.g.,  Massey et al.~\cite{massey1991basic} for more details. 

\subsubsection{Definition} 

Recall that a \emph{topological space} is a pair $(X, T)$ where $X$ is a set, and $T$ is a topology on $X$ (e.g.,  $T$ is a collection of subsets of $X$,  whose elements are called \emph{open} sets). A topological space may  be denoted by $X$ if there is no ambiguity about the topology on~$X$. Also recall that a function $f:X\to Y$ is \emph{continuous} if the inverse image of every open set in $Y$ is open in~$X$. A \emph{homeomorphism} is a bijection that is continuous, and whose inverse is also continuous. A  \emph{topological path} in $X$ is  a  continuous  function $P: [0, 1] \to X$.   The space $X$  is \emph{ path-connected} if for any pairs $x,  y$ of points of $X$, there exists a topological path $P$ such that  $P(0) = x$ and $P(1) = y$. 

\begin{definition}
A \emph{closed surface} is   a path-connected\footnote{ Path-connected  can actually be replaced by connected (i.e., cannot be partitioned in two open sets) here,   because, under  the hypothesis of local homeomorphy to a disk, the notions of path-connectivity and connectivity are equivalent.},  compact  space
that is locally homeomorphic to a disk of~$\R^2$.
\end{definition}

\begin{figure}[tb]
\begin{center}
\input{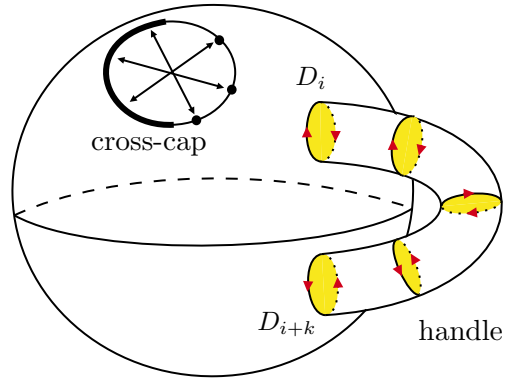}
\end{center}
\caption{Handles and cross-cap.}
\label{fig:hcc} 
\end{figure} 

\subsubsection{Construction} 
\label{sec:surface} 

Some closed surfaces  can merely be obtained by the following construction. Let $S^2$  be the 2-dimensional  sphere. For $k\geq 0$, given $2k$  disks $D_1, D_2, ...D_{2k}$ on the surface of~$S^2$, with pairwise disjoint interiors, let us direct clockwise the boundaries of $D_1, \dots, D_{k}$, and let us direct counterclockwise the boundaries of $D_{k+1},\dots,D_{2k}$. Next, let us remove the interior of each disk, and, for $1\leq i \leq k$, let us identify (i.e., glue) the boundary of $D_i$ with the boundary $D_{i+k}$ in such a way that directions coincide (see Figure~\ref{fig:hcc}). The resulting topological space is denoted by~$\T_k$. In particular, $\T_1$ is the torus, and $\T_0 = S^2$. For every~$i$, identifying $D_i$ and $D_{i+k}$  results in a \emph{handle}. It follows that $\T_k$ contains $k$ handles.

Another family of closed surfaces is constructed as follows. Let $D_1, \dots, D_{k}$ be $k\geq 1$ disks with pairwise disjoint interiors. Let us again remove the  interior of each disk. For every $1\leq i \leq k$, and for every antipodal point $v$ and $v'$ of the boundary of $D_i$, let us identify (i.e., glue) the points $v$ and $v'$ (see Figure~\ref{fig:hcc}).  The resulting topological space is  denoted by $\NT_k$. In particular, $\NT_1$ is the projective plane, and $\NT_2$ is the Klein bottle ($\NT_0$ is not defined). For every~$i$, the operation performed on $D_i$ results in a \emph{cross-cap}. It follows that $\NT_k$ contains $k$ cross-caps. 

 The surfaces resulting from the above constructions can thus be \emph{orientable} (e.g., the sphere $\T_0$ or the torus $\T_1$) or not (e.g., the projective plane $\NT_1$ or the Klein Bottle $\NT_2$), as displayed on Figure~\ref{fig:orientation}.  

\begin{figure}[tb]
\centerline{
\includegraphics[scale=0.125]{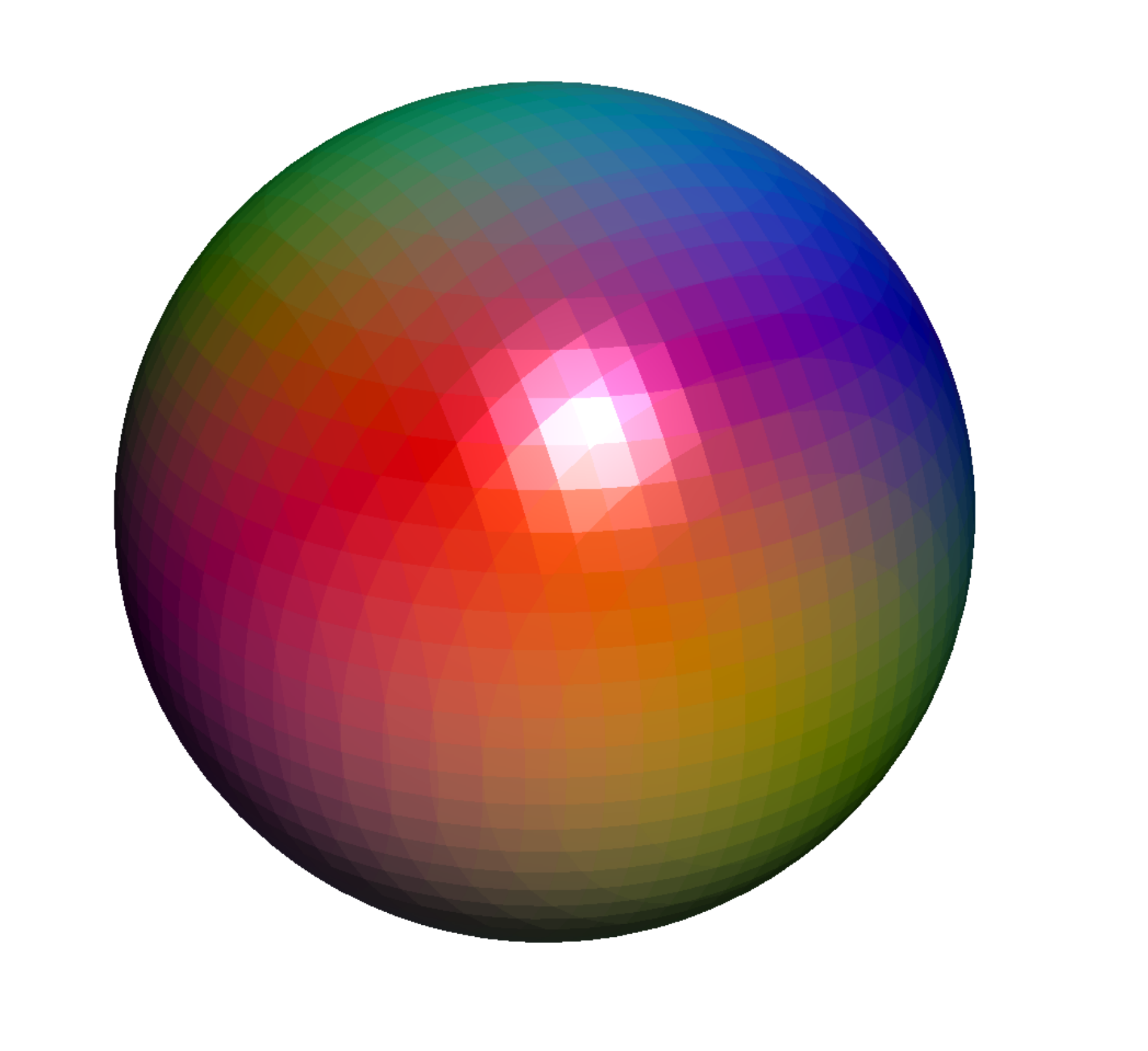}\quad
\includegraphics[scale=0.1]{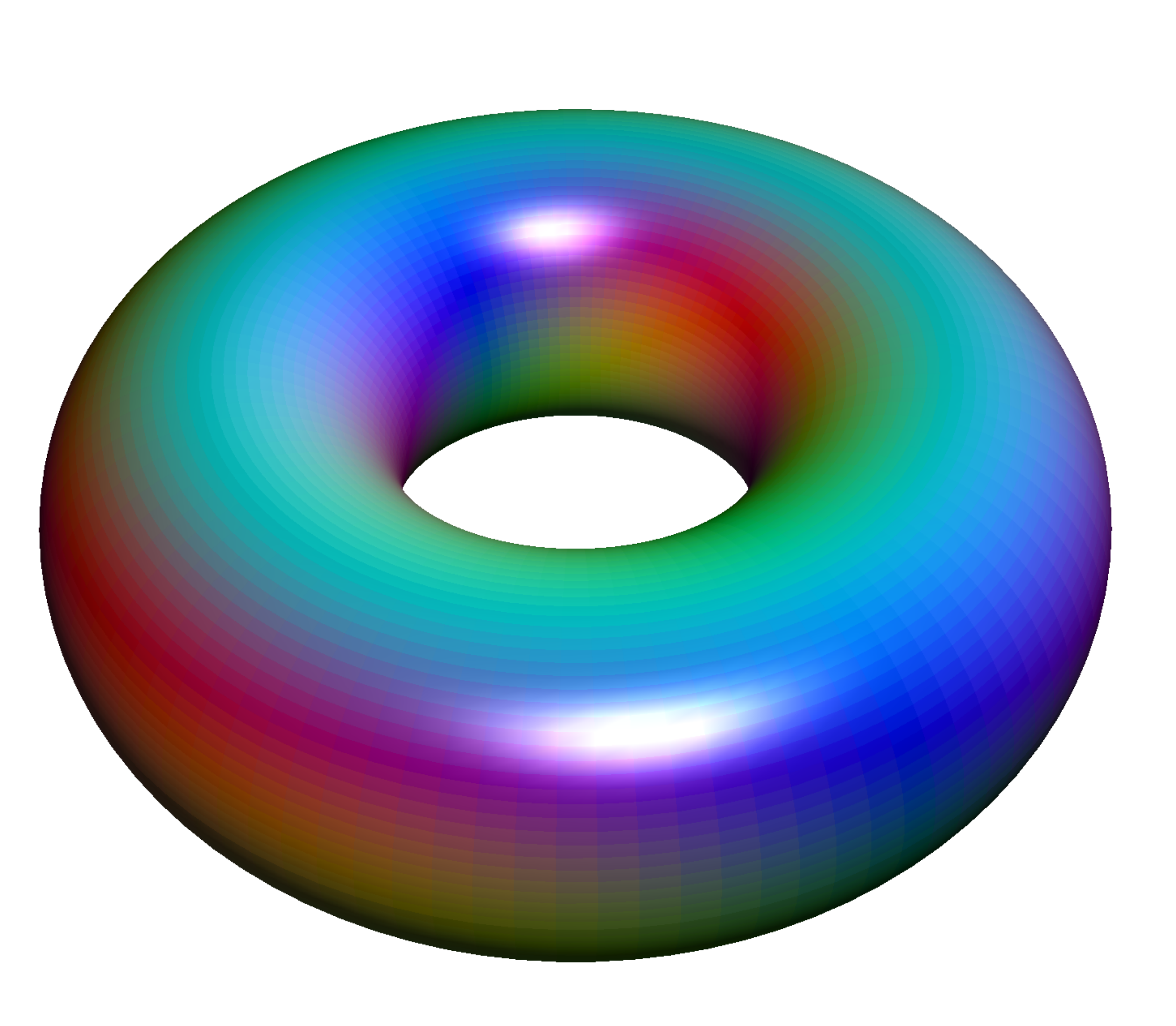}\quad
\includegraphics[scale=0.075]{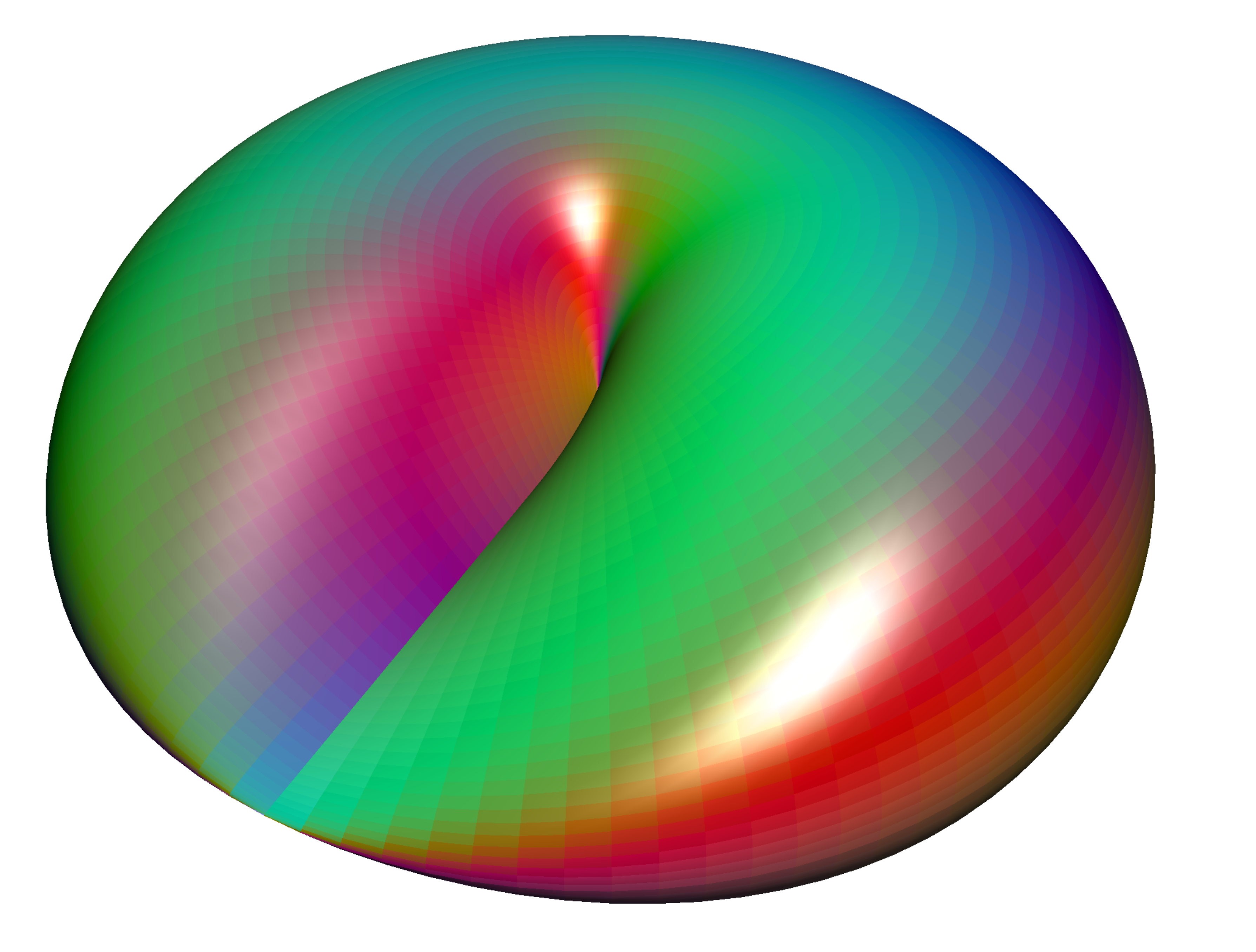}\quad
\includegraphics[scale=0.1]{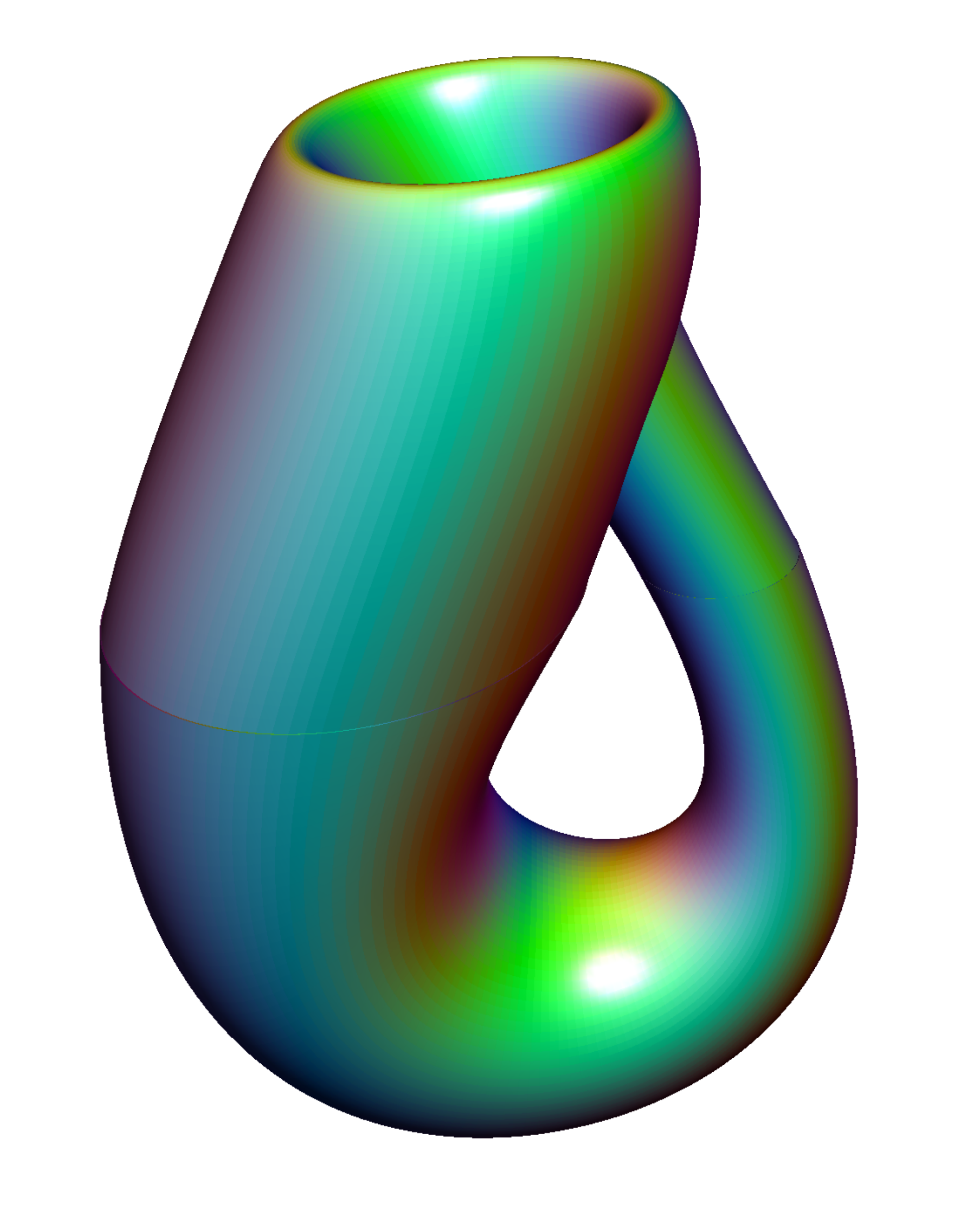}
}
\caption{The sphere, torus, projective plane, and Klein Bottle.}
\label{fig:orientation}
\end{figure} 

\subsubsection{Orientability} 

For defining orientability of a closed surface~$X$, we use the notion of  \emph{curve}, defined as a  continuous  function $C:S^1 \to X$, where $S^1$ denotes the unidimensional sphere (homeomorphic to, e.g., the trigonometric circle). A curve is \emph{simple} if it is injective.  A simple curve~$C$ is \emph{orientable} if one can define the left side and the right side of the curve at every point of the curve in a consistent manner.  Specifically, a curve~$C$ is orientable if, for every $x \in C$, there exists a neighborhood $N_x$ of $x$ such that $N_x \smallsetminus C$ has two connected components, one called the left side $L(N_x)$ of $N_x$, and the other the right side $R(N_x)$ of $N_x$, such that, for every $x,x' \in C$ and every $y\in X$, 
\[
(y \in N_x \cap N_{x'})    \land   (y \in L(N_x)) \; \Longrightarrow \; y \in L(N_{x'}).
\]
A closed surface $X$ is orientable if every simple curve of  $X$ is orientable. It is easy to check that orientability is a topological invariant. That is, if  $X$ and $Y$ are two homeomorphic topological spaces,  then $X$ is orientable if and only if $Y$ is orientable.

\subsubsection{Genus of a Surface} 

An orientable closed  surface  $X$ is of \emph{genus}~$k$ if it is homeomorphic to a closed surface~$\T_k$ constructed as in Section~\ref{sec:surface}. The Classification Theorem of orientable closed surfaces (see, e.g., \cite{Brah22}) states that every orientable closed surface has a genus. That is, for every orientable surface~$X$, there exists a unique~$k\geq 0$ such that $X$ is of genus~$k$. The fact that every pair of orientable closed surfaces with the same genus~$k$  are homeomorphic, justifies that a unique notation can be adopted for these surfaces, and any  orientable closed  surface of genus~$k$ is denoted by~$\T_k$. Observe however that two closed surfaces that are homeomorphic are not necessarily homotopic, i.e., they may not be continuously deformable one into the other other (for instance, the torus is not homotopic to the trefoil knot, although both are homeomorphic). 

The genus can also be defined for non-orientable closed surfaces. For $k\geq 1$, a non-orientable closed surface  is said to be of \emph{genus}~$k$ if it is homeomorphic to  a closed surface $\NT_k$ constructed as in Section~\ref{sec:surface}. Again, the Classification Theorem of  non-orientable closed surfaces (see, e.g., \cite{Brah22}) states that every non-orientable closed surface has a genus. That is, for every  non-orientable closed surface~$X$, there exists a unique~$k\geq 0$ such that $X$ is of genus~$k$.  As for orientable surfaces, every pair of non-orientable closed  surfaces of genus~$k$  are homeomorphic, and a non-orientable closed  surface of genus $k$ is denoted by~$\NT_k$. 

\subsection{Graphs Embedded on Surfaces} 

In this section, we recall standard notions related to graph embeddings on surfaces, and we refer to Mohar and Thomassen  \cite{mohar2001graphs} for more details. Throughout the paper, all considered graphs are supposed to be simple (no multiple edges, and no self-loops), and connected. 

\subsubsection{Topological Embeddings}

Given a graph $ G = (V, E) $, and a closed surface  $X$,  a \emph{topological embedding} of  $G$ on $X$ is given by (1)~an injective  mapping $ f : V \rightarrow X$, and, (2)~a topological path $f_e : [0, 1] \rightarrow X$ defined for every edge $e$ such that:
\begin{itemize}
\item if  $e = \{v, v'\}  \in E$,  then $f_e(\{0, 1\}) =   \{f(v), f(v')\}$, and 
\item if  $e.e'  \in E$ and  $e \neq e'$, then   $f_e(]0, 1[) \cap f _{e'}(]0, 1[)  = \varnothing$.    
\end{itemize}

The second condition is often referred to as the \emph{non-crossing} condition.  See Figure  \ref{fig:2cell} for two embeddings of the complete graph~$K_4$ on~$\T_1$. 
Throughout the paper, we may confuse a vertex~$v$ with its representation~$f(v)$, and an edge~$e$  with its representation $f_e$ (i.e., the image $f_e([0,1])$ of $[0,1]$ by $f_e$), even referred to as~$f(e)$ in the following. The set $ \cup_{e \in E} f(e) $ is called the \emph{skeleton} of the embedding, and is denoted by~$\Sk(G)$.  Each connected component of $X \smallsetminus  \Sk(G)$ is an open set of $X$ (as complement of a closed set), called a \emph{face} of the embedding. In fact, in this paper, we will abuse notation, and often refer to~$G$ instead of~$\Sk(G)$ when referring to the embedding of $G$ on~$X$. 

\subsubsection{2-Cell Embeddings}

We now recall a slightly more sophisticated, but significantly richer form of topological embedding, called \emph{2-cell embedding}. A 2-cell embedding is a topological embedding such that every face is homeomorphic to an open disk of~$\R^2$. 

In a 2-cell embedding of a graph~$G$, the border of a  face can  be described by giving a so-called \emph{boundary (closed) walk}, that is, an ordered list $(v_0, \dots, v_r)$ of non-necessarily distinct vertices of~$G$, where, for $i=0,\dots,r-1$, $\{v_i,v_{i+1}\}\in E(G)$, and $\{v_r,v_0\}\in E(G)$.  The vertices and   edges of a face are the images by the embedding of the vertices and edges of the boundary walk. The  boundary  walk is however not necessarily a simple cycle, as  an edge may appear twice in the walk, once for each direction, and a  vertex may even appear many times. 

\begin{figure}
    \centering
\tikzset{every picture/.style={line width=0.75pt}} 

\begin{tikzpicture}[x=0.75pt,y=0.75pt,yscale=-1,xscale=1]

\draw  [color={rgb, 255:red, 128; green, 128; blue, 128 }  ,draw opacity=1 ][line width=1.5]  (134.25,134.38) .. controls (134.25,98.82) and (180.31,70) .. (237.13,70) .. controls (293.94,70) and (340,98.82) .. (340,134.38) .. controls (340,169.93) and (293.94,198.75) .. (237.13,198.75) .. controls (180.31,198.75) and (134.25,169.93) .. (134.25,134.38) -- cycle ;
\draw [color={rgb, 255:red, 128; green, 128; blue, 128 }  ,draw opacity=1 ][line width=1.5]    (192.75,128.67) .. controls (219,152.92) and (254.75,150.17) .. (291,126.42) ;
\draw [color={rgb, 255:red, 128; green, 128; blue, 128 }  ,draw opacity=1 ][line width=1.5]    (204.33,137.92) .. controls (227.92,117.67) and (250.58,117.33) .. (273.33,135.92) ;
\draw [line width=0.75]    (218.58,175.67) -- (255.17,180.42) ;
\draw [shift={(255.17,180.42)}, rotate = 7.4] [color={rgb, 255:red, 0; green, 0; blue, 0 }  ][fill={rgb, 255:red, 0; green, 0; blue, 0 }  ][line width=0.75]      (0, 0) circle [x radius= 2.34, y radius= 2.34]   ;
\draw [shift={(218.58,175.67)}, rotate = 7.4] [color={rgb, 255:red, 0; green, 0; blue, 0 }  ][fill={rgb, 255:red, 0; green, 0; blue, 0 }  ][line width=0.75]      (0, 0) circle [x radius= 2.34, y radius= 2.34]   ;
\draw [line width=0.75]    (255.17,180.42) -- (281.58,169.33) ;
\draw [shift={(281.58,169.33)}, rotate = 337.24] [color={rgb, 255:red, 0; green, 0; blue, 0 }  ][fill={rgb, 255:red, 0; green, 0; blue, 0 }  ][line width=0.75]      (0, 0) circle [x radius= 2.34, y radius= 2.34]   ;
\draw [line width=0.75]    (218.58,175.67) -- (247.92,158) ;
\draw [line width=0.75]    (281.58,169.33) -- (247.92,158) ;
\draw [line width=0.75]    (255.17,180.42) -- (247.92,158) ;
\draw [shift={(247.92,158)}, rotate = 252.08] [color={rgb, 255:red, 0; green, 0; blue, 0 }  ][fill={rgb, 255:red, 0; green, 0; blue, 0 }  ][line width=0.75]      (0, 0) circle [x radius= 2.34, y radius= 2.34]   ;
\draw [line width=0.75]    (218.58,175.67) .. controls (177.58,165.33) and (150.25,143) .. (162.58,117) .. controls (174.92,91) and (214.25,82.67) .. (238.25,83.33) .. controls (262.25,84) and (304.92,92) .. (314.25,114.33) .. controls (323.58,136.67) and (308.58,156.33) .. (281.58,169.33) ;
\draw  [color={rgb, 255:red, 128; green, 128; blue, 128 }  ,draw opacity=1 ][line width=1.5]  (394.25,134.38) .. controls (394.25,98.82) and (440.31,70) .. (497.13,70) .. controls (553.94,70) and (600,98.82) .. (600,134.38) .. controls (600,169.93) and (553.94,198.75) .. (497.13,198.75) .. controls (440.31,198.75) and (394.25,169.93) .. (394.25,134.38) -- cycle ;
\draw [color={rgb, 255:red, 128; green, 128; blue, 128 }  ,draw opacity=1 ][line width=1.5]    (452.75,128.67) .. controls (479,152.92) and (514.75,150.17) .. (551,126.42) ;
\draw [color={rgb, 255:red, 128; green, 128; blue, 128 }  ,draw opacity=1 ][line width=1.5]    (464.33,137.92) .. controls (487.92,117.67) and (510.58,117.33) .. (533.33,135.92) ;
\draw [line width=0.75]    (478.58,175.67) -- (515.17,180.42) ;
\draw [shift={(515.17,180.42)}, rotate = 7.4] [color={rgb, 255:red, 0; green, 0; blue, 0 }  ][fill={rgb, 255:red, 0; green, 0; blue, 0 }  ][line width=0.75]      (0, 0) circle [x radius= 2.34, y radius= 2.34]   ;
\draw [shift={(478.58,175.67)}, rotate = 7.4] [color={rgb, 255:red, 0; green, 0; blue, 0 }  ][fill={rgb, 255:red, 0; green, 0; blue, 0 }  ][line width=0.75]      (0, 0) circle [x radius= 2.34, y radius= 2.34]   ;
\draw [line width=0.75]    (515.17,180.42) -- (541.58,169.33) ;
\draw [shift={(541.58,169.33)}, rotate = 337.24] [color={rgb, 255:red, 0; green, 0; blue, 0 }  ][fill={rgb, 255:red, 0; green, 0; blue, 0 }  ][line width=0.75]      (0, 0) circle [x radius= 2.34, y radius= 2.34]   ;
\draw [line width=0.75]    (541.58,169.33) -- (507.92,158) ;
\draw [line width=0.75]    (515.17,180.42) -- (507.92,158) ;
\draw [shift={(507.92,158)}, rotate = 252.08] [color={rgb, 255:red, 0; green, 0; blue, 0 }  ][fill={rgb, 255:red, 0; green, 0; blue, 0 }  ][line width=0.75]      (0, 0) circle [x radius= 2.34, y radius= 2.34]   ;
\draw [line width=0.75]    (478.58,175.67) .. controls (437.58,165.33) and (410.25,143) .. (422.58,117) .. controls (434.92,91) and (474.25,82.67) .. (498.25,83.33) .. controls (522.25,84) and (564.92,92) .. (574.25,114.33) .. controls (583.58,136.67) and (568.58,156.33) .. (541.58,169.33) ;
\draw    (507.92,158) .. controls (508.17,145.5) and (502.17,141.5) .. (500,145) ;
\draw    (490,199) .. controls (478.83,194.5) and (473.17,185.83) .. (478.58,175.67) ;
\draw  [dash pattern={on 0.84pt off 2.51pt}]  (500,145) .. controls (484.83,162.17) and (504.83,191.83) .. (490,199) ;

\draw (254.75,181.9) node [anchor=north west][inner sep=0.75pt]  [font=\footnotesize]  {$a$};
\draw (251.75,145.57) node [anchor=north west][inner sep=0.75pt]  [font=\footnotesize]  {$b$};
\draw (283.58,172.73) node [anchor=north west][inner sep=0.75pt]  [font=\footnotesize]  {$c$};
\draw (211,158.4) node [anchor=north west][inner sep=0.75pt]  [font=\footnotesize]  {$d$};
\draw (514.75,181.9) node [anchor=north west][inner sep=0.75pt]  [font=\footnotesize]  {$a$};
\draw (511.75,145.57) node [anchor=north west][inner sep=0.75pt]  [font=\footnotesize]  {$b$};
\draw (543.58,172.73) node [anchor=north west][inner sep=0.75pt]  [font=\footnotesize]  {$c$};
\draw (477,156.4) node [anchor=north west][inner sep=0.75pt]  [font=\footnotesize]  {$d$};

\end{tikzpicture}

    \caption{Two embeddings of  $K_4$ on the torus $\T_1$. }
    \label{fig:2cell}
 \end{figure}
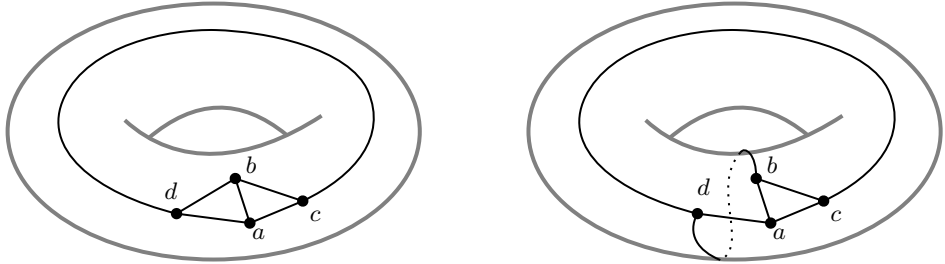
 
For instance, Figure~\ref{fig:2cell} displays two embeddings of the complete graph $K_4$ on the torus~$\T_1$. The embedding on the left is not a 2-cell embedding. Indeed, this embedding results in three faces, including the two faces with boundary walk $(a,b,c)$ and $(a,b,d)$. The third face is however not homeomorphic to an open disk (there is a hole in it, resulting from the hole in the torus).  On the other hand, the embedding on the right in Figure  \ref{fig:2cell} is a 2-cell embedding. Indeed, there are two faces, including the face with boundary walk $(a,b,c)$. The other face is also homeomorphic to an open disk. A boundary walk of this latter face is $(d, a, b, d, c, a, d,  b, c)$. This can be seen by starting from $d$, traversing the edge $\{d,a\}$, and adopting the ``left-hand rule'' when entering a vertex, leading from $a$ to $b$, then back to $d$, next to~$c$, etc.  Notice that this  boundary walk uses some edges twice. It follows that the closure of a face is not necessarily homeomorphic to a closed disk, even in a 2-cell embedding.  

We complete the section with an observation, which  allows us to restrict our attention to cycles in graphs instead of arbitrary curves in topological spaces. It also illustrates the interest of 2-cell embeddings (the result does not necessarily hold for arbitrary embeddings, as illustrated by the embedding on the left of Figure~\ref{fig:2cell}).

\begin{lemma}  \label{lem:simul}
For every graph $G$, and every closed surface~$X$, any 2-cell embedding  of $G$ on $X$ satisfies that 
every closed curve in $X$ is either contractible, or homotopic to a closed cycle of~$\Sk(G)$. 
\end{lemma}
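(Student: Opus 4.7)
The plan is to show that any closed curve $\gamma:S^1\to X$ can be continuously deformed into the skeleton $\Sk(G)$, and then that the resulting closed walk in $G$ is either null-homotopic or is the desired closed cycle. The decisive ingredient is the 2-cell hypothesis: every face is homeomorphic to an open disk, hence simply connected, and it is this local simple-connectedness of faces that powers the whole argument.

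The first step would be a standard general-position argument. I would perturb $\gamma$ within its homotopy class so that it avoids every vertex of $G$ and meets each edge transversally, at finitely many points. Since vertices are points and edges are one-dimensional arcs in a two-dimensional surface, such a perturbation exists and does not alter the homotopy class of~$\gamma$.

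Next comes the key step, a face-by-face homotopy onto the boundary. After the perturbation, $\gamma$ decomposes into finitely many maximal subarcs $\alpha_1,\dots,\alpha_m$, each contained in the closure $\overline{\phi}$ of a single face $\phi$, with endpoints lying on $\partial\phi\subseteq\Sk(G)$. Since the embedding is 2-cell, $\overline{\phi}$ is a closed disk, hence simply connected, so each $\alpha_i$ can be homotoped rel endpoints onto a path in $\partial\phi$. Concatenating these homotopies yields a curve $\gamma'$ homotopic to $\gamma$ whose image is entirely contained in $\Sk(G)$, so that $\gamma'$ is, up to reparametrization, a closed walk in~$G$.

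It then remains to reduce this closed walk. Iteratively removing backtrackings of the form $e\,e^{-1}$, each of which bounds a null-homotopic ``lens'' in a regular neighborhood of the edge~$e$, yields either the trivial walk --- in which case $\gamma$ is contractible --- or a reduced non-trivial closed walk, which is the desired closed cycle of $\Sk(G)$ homotopic to~$\gamma$. The step I expect to require the most care is the face-by-face push onto the boundary: the boundary walk of a face need not be a simple cycle (see the right-hand embedding of $K_4$ on $\T_1$ in Figure~\ref{fig:2cell}, whose unique non-triangular face has boundary walk $(d,a,b,d,c,a,d,b,c)$), so one must handle arcs whose endpoints land on repeated vertices or edges of $\partial\phi$. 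It is precisely the 2-cell hypothesis that makes the simply-connected-closure argument go through; its failure is illustrated by the left-hand embedding of $K_4$ in the same figure, where the third face wraps around the torus's hole and therefore contains non-contractible loops that could never be pushed to~$\Sk(G)$.
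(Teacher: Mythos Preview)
The paper does not actually prove this lemma; it only offers the one-sentence intuition that ``in a 2-cell embedding, any sub-path of a path traversing a face can be replaced by a sub-path following the border of the face.'' Your proposal is exactly an elaboration of that intuition, so there is no alternative approach to compare against.

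There is, however, one genuine technical slip. You assert that ``since the embedding is 2-cell, $\overline{\phi}$ is a closed disk, hence simply connected,'' but the paper itself explicitly warns, two paragraphs above the lemma, that ``the closure of a face is not necessarily homeomorphic to a closed disk, even in a 2-cell embedding.'' The very $K_4$-on-$\T_1$ example you cite has a face whose closure is a disk with boundary identifications (vertices and edges repeated in the boundary walk), not a closed disk. What the 2-cell hypothesis gives you is a characteristic map $q:\overline{D^2}\to\overline{\phi}$ that restricts to a homeomorphism $D^2\to\phi$ on interiors; the arc $\alpha_i$ lifts through $q$ (lift the open interior, then extend continuously to $\partial D^2$), the homotopy rel endpoints to $\partial D^2$ is performed upstairs in the honest closed disk, and then pushed back down via~$q$. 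You clearly sense this issue in your final paragraph, but the fix belongs in the ``key step,'' not in the caveats --- as written, the central sentence of your argument is false.
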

 
The rough reason why the result holds is that, in a 2-cell embedding, any sub-path of a path traversing a face can be replaced by a sub-path following the border of the face. (This is not necessarily true for a general embedding). 

\subsubsection{Genus and Demigenus of a Graph}

For any graph $G$, there exists $k\geq 0$ such that $G$ can be embedded on $\T_k$, as any embedding of $G$ in the plane with $x$ pairs of crossing edges induces an embedding of $G$ on $\T_x$ without crossings, by replacing each crossing with a handle. Also, if $G$ can be embedded on $\T_k$, then $G$ can be embedded on $\T_{k'}$ for every $k' \geq k$. The \emph{genus} of a graph $G$ is the smallest $k$ such that there exists an embedding of $G$ on $\T_k$. Similarly, the \emph{non-orientable} genus,  a.k.a.~\emph{demigenus}, or \emph{Euler genus} of~$G$, is defined as the smallest~$k$ such that there exists an embedding of $G$ on $\NT_k$. 

The embeddings of graphs of genus~$k$ on~$\T_k$  have a remarkable property (see, e.g.,~\cite{Yo63}).

\begin{lemma} \label{lem:if-genusk-then-2cell}
Every embedding of a graph  $G$ of genus $k$  on $\T_k$ is a 2-cell embedding. 
\end{lemma}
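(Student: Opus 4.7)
The plan is to argue by contradiction. Suppose that a graph $G$ of genus $k$ admits an embedding on $\T_k$ which is not a 2-cell embedding, so at least one face $F$ is not homeomorphic to an open disk. I will use $F$ to produce an embedding of $G$ on some $\T_{k'}$ with $k' < k$, contradicting the definition of the genus of $G$.

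The heart of the argument is to exhibit a simple closed curve $C \subset F$ that is non-contractible in $\T_k$. Since any open, connected, simply connected $2$-manifold without boundary is homeomorphic to $\R^2$ (equivalently, to an open disk), the assumption that $F$ is not an open disk forces $\pi_1(F) \neq 1$. Hence $F$ carries a non-contractible loop, which by standard surface arguments can be represented by a simple closed curve $C$ sitting inside $F$ and thus disjoint from $\Sk(G)$. To promote ``non-contractible in $F$'' to ``non-contractible in $\T_k$'', suppose for contradiction that $C$ bounded a closed disk $D \subset \T_k$. Since $G$ is connected and $\Sk(G)$ is disjoint from $C$, the graph $G$ lies entirely either inside $D$ or outside $D$. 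The first alternative forces $G$ to be planar, which is absurd if $k \ge 1$; the second alternative means $D \subset F$, so $C$ would already be contractible inside $F$, contradicting the choice of $C$. (The case $k=0$ is the classical fact that every embedding of a connected graph on $S^2$ is 2-cell.)

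Given such a curve $C$, I perform surgery: cut $\T_k$ along $C$ and glue a disk onto each resulting boundary circle. If $C$ is non-separating, the surgery yields the closed orientable surface $\T_{k-1}$. If $C$ is separating, it decomposes $\T_k$ into two pieces whose genera sum to $k$; since $C$ is non-contractible, neither piece is a disk, so both genera are at least $1$. Because $G$ is connected and disjoint from $C$, it is contained in a single piece, whose capped version is $\T_{k'}$ with $1 \le k' < k$. In either case the embedding of $G$, being disjoint from $C$, descends to an embedding on a surface of genus strictly less than $k$, contradicting the genus assumption.

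The main obstacle I anticipate is precisely the promotion step: ensuring that the curve $C$, obtained from the topology of the face $F$, is non-contractible in the ambient surface $\T_k$ rather than only within $F$. This rests crucially on the connectedness of $G$ together with the fact that any simple closed curve bounding a disk in $\T_k$ separates it. A secondary technicality is justifying that an open connected subset of $\T_k$ which fails to be an open disk fails to be simply connected, which is an appeal to the classification of non-compact surfaces.
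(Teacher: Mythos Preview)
The paper does not prove this lemma; it merely states it and cites Youngs~\cite{Yo63}. Your argument is the standard one and is correct.

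The two obstacles you flag are real but dissolve exactly as you indicate. For the second (``non-disk implies non-simply-connected''): a face $F$ is an open connected subsurface of $\T_k$, hence a $2$-manifold without boundary; the classification of simply connected surfaces identifies any such simply connected surface with~$\R^2$, so $\pi_1(F)\neq 1$, and an innermost-arc argument on a non-null-homotopic loop in general position then yields an \emph{essential simple} closed curve $C\subset F$. For the first (``non-contractible in $F$ implies non-contractible in $\T_k$''): if $C$ bounded a disk $D\subset\T_k$, then since $\Sk(G)$ is connected and disjoint from $C$ it lies entirely in one complementary component; if $\Sk(G)\subset \mathrm{int}(D)$ then $G$ is planar (impossible for $k\ge 1$), while if $\mathrm{int}(D)\cap\Sk(G)=\varnothing$ then $\mathrm{int}(D)$ is contained in a single face, which must be $F$ because $C=\partial D\subset F$ and $\partial F\subset\Sk(G)$, so $C$ bounds a disk in $F$ after all. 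The surgery step and the genus bookkeeping (drop by one in the non-separating case; additivity under connected sum with both summands of positive genus in the separating case) are exactly right, and orientability is preserved throughout since you are cutting an orientable surface.
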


The same property does not necessarily hold fo graphs with bounded demi-genus. However, some weaker form of Lemma~\ref{lem:if-genusk-then-2cell} can be established (see, e.g., \cite{PPPV87}). 

\begin{lemma} \label{lem:if-demigenusk-then-2cell} 
For every  graph  $G$ of demigenus $k$, there exists a 2-cell embedding of $G$ on $\NT_k$.  
\end{lemma}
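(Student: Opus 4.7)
The plan is to argue by contradiction. Assume $G$ has demigenus~$k$, meaning $G$ embeds on $\NT_k$ but not on $\NT_{k-1}$, and suppose toward a contradiction that no embedding of $G$ on $\NT_k$ is 2-cell. Fix any embedding $\psi$ of $G$ on $\NT_k$; some face $F$ of $\psi$ must then fail to be homeomorphic to an open disk, so the closure $\bar F$ is a compact $2$-manifold with boundary having either positive genus (orientable or not) or at least two boundary components.

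The main step is topological surgery inside~$F$. Since $\bar F$ is not a disk, its interior contains a simple closed curve~$\gamma$ that is non-contractible in~$\NT_k$. Cut $\NT_k$ along~$\gamma$ and cap the resulting boundary circle (or circles, if $\gamma$ is two-sided) with open disks, producing a new closed surface~$X'$. Because $\gamma\subset F$ avoids $\Sk(G)$, the embedding~$\psi$ induces an embedding of~$G$ on~$X'$, and by construction $\chi(X')>\chi(\NT_k)=2-k$, so $X'$ has strictly smaller Euler genus than~$\NT_k$.

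It then remains to derive that $G$ in fact embeds on $\NT_{k-1}$, contradicting demigenus~$=k$. If $X'$ is non-orientable, then $X'=\NT_{k'}$ for some $k'<k$ and we are done immediately. If $X'$ is orientable, then $X'=\T_m$ for some $m$; combining with the classical inequality $\gamma_N(G)\leq 2\gamma(G)+1$, a $\chi$-jump of at least~$2$ (as occurs whenever $\gamma$ is two-sided and non-separating) yields $\gamma_N(G)\leq 2m+1\leq k-1$. The main obstacle is the borderline sub-case where $\gamma$ is one-sided and the resulting $X'$ is orientable with $2m+1=k$ (which forces $k$ odd). To close this case, one must either choose $\gamma$ more carefully---a two-sided non-contractible simple closed curve always exists in $\bar F$ unless $\bar F$ is itself a M\"obius band---or, in the remaining M\"obius-band case, locally re-route some edges of $G$ along the boundary cycle of $F$ so as to eliminate the offending cross-cap of~$\NT_k$. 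These refinements, carried out carefully, constitute the substance of the proof in~\cite{PPPV87}.
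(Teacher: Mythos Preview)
The paper does not supply a proof of this lemma at all; it simply records the statement and cites~\cite{PPPV87}. Your proposal therefore goes beyond what the paper does: you sketch the standard surgery argument (find a non-disk face, pick an essential simple closed curve in it, cut and cap to lower the Euler genus, then derive a contradiction with minimality), and you correctly flag the delicate sub-case where the curve is one-sided and the capped surface turns out to be orientable with $2m+1=k$. Since you ultimately defer the resolution of that sub-case to~\cite{PPPV87}, your write-up is in effect an expanded version of the paper's citation rather than a different proof.

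One comment on the sketch itself: the claim that a non-disk face $F$ always contains a simple closed curve non-contractible \emph{in $\NT_k$} needs the connectedness of $G$ (so that a core curve of, say, an annular $F$ cannot bound a disk in $\NT_k$ on the ``other'' side without separating $G$); you use this implicitly. Also, in the two-sided \emph{separating} case the Euler-characteristic bookkeeping is slightly more delicate than the non-separating case you spell out, and can again land in the borderline orientable situation. None of this breaks your outline, but it confirms your own assessment that the real work lies in the refinements you attribute to~\cite{PPPV87}.
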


The next result  is extremely helpful for computing the genus of a graph, and is often referred to as the Euler-Poincar\'e formula~\cite{poincare}. 

\begin{lemma}
\label{euler}
Let $G = (V, E)$, and let $X$ be a closed surface of genus~$k$. Let us consider any 2-cell embedding of $G$ on $X$,  and let $F$ be the set of faces of this embedding. If $X$ is orientable then $ \vert V \vert -  \vert E \vert  +  \vert F  \vert =  2 - 2k$. If $X$ is non orientable then $ \vert V \vert -  \vert E \vert  +  \vert F  \vert =  2 - k$.
\end{lemma}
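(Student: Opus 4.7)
My plan is to establish that the alternating sum $\vert V \vert - \vert E \vert + \vert F \vert$ is a topological invariant of the surface~$X$ (in fact, the Euler characteristic of its CW-structure induced by the 2-cell embedding), and then to evaluate it on a canonical 2-cell embedding of each surface $\T_k$ and~$\NT_k$.

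The first step will be to show, by induction on the number of edges, that $\vert V \vert - \vert E \vert + \vert F \vert$ is preserved under two \emph{local reductions} of a 2-cell embedding. The first reduction is \emph{edge deletion}: if an edge~$e$ lies on the boundary of two distinct faces, then $G-e$ admits a 2-cell embedding on~$X$ in which those two faces have been merged into a single open disk; the counts become $\vert V \vert$, $\vert E \vert - 1$, $\vert F \vert - 1$, preserving the alternating sum. The second reduction is \emph{edge contraction}: if $e$ has two distinct endpoints, then contracting $e$ yields a 2-cell embedding with $\vert V \vert - 1$ vertices, $\vert E \vert - 1$ edges, and $\vert F \vert$ faces, again preserving the sum. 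Iterating these operations reduces any 2-cell embedding on~$X$ to a minimal one consisting of a single vertex with a collection of loops and a single face.

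The next step will be to exhibit the canonical minimal 2-cell embeddings obtained from the polygonal identifications used in Section~\ref{sec:surface}. For~$\T_k$, the standard $4k$-gon identification gives one vertex, $2k$ loops $a_1, b_1, \dots, a_k, b_k$, and one face whose boundary walk spells $a_1 b_1 a_1^{-1} b_1^{-1} \cdots a_k b_k a_k^{-1} b_k^{-1}$, yielding $1 - 2k + 1 = 2 - 2k$. For~$\NT_k$, the standard $2k$-gon identification gives one vertex, $k$ loops $a_1, \dots, a_k$, and one face with boundary word $a_1 a_1 \cdots a_k a_k$, yielding $1 - k + 1 = 2 - k$. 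Since the alternating sum is preserved under the reductions and therefore depends only on~$X$, the formula holds for every 2-cell embedding of every graph on~$X$.

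The main obstacle I expect lies in showing that the two reductions really do preserve the 2-cell property, and that one of them is always applicable until a minimal bouquet form is reached. Edge deletion requires $e$ to genuinely separate two distinct faces: if $e$ already appears twice on the boundary walk of a single face, then its removal either disconnects the graph or destroys the disk structure. If no edge permits deletion, one resorts to contraction; but after all non-loop edges have been contracted, every remaining edge is a loop, and one must invoke a normal-form argument (essentially the surface-word normal form underlying the Classification Theorem already cited in the paper) to recognize the bouquet as equivalent to the canonical one with a single face. Handling these topological subtleties, in particular verifying that 2-cellness is preserved throughout the reduction, is where the bulk of the effort concentrates; once this is done, the two claimed equalities follow immediately by combining invariance with the direct computation on the canonical embeddings.
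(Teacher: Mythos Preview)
The paper does not prove this lemma at all: it is stated as a classical result and attributed to Poincar\'e via the citation~\cite{poincare}. So there is no ``paper's own proof'' to compare against; any correct argument you supply goes beyond what the paper does.

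Your outline is the standard one and is essentially correct, but your final paragraph overcomplicates the endgame. Once you have established invariance of $|V|-|E|+|F|$ under (i) contraction of a non-loop edge and (ii) deletion of an edge bounding two distinct faces, you do \emph{not} need to recognise the resulting bouquet as the canonical one, and you do not need any normal-form or classification argument. Invariance already tells you that $|V|-|E|+|F|$ takes the same value on \emph{every} 2-cell embedding on~$X$; it therefore suffices to compute it on a single convenient one, and the polygonal identifications of Section~\ref{sec:surface} provide exactly that. The reduction argument is only needed to prove invariance (by showing that any two 2-cell embeddings can be connected through such moves to a common minimal form with one vertex and one face), not to identify which minimal form you land on. What does require care is checking that one of the two moves is always available until $|V|=|F|=1$: if $|V|>1$ connectivity of $G$ gives a non-loop edge to contract; if $|V|=1$ but $|F|>1$, connectivity of the face-adjacency (dual) graph gives an edge bounding two distinct faces to delete. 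Both moves visibly preserve the 2-cell property (contraction leaves faces untouched; deletion glues two disks along a boundary arc into a disk).
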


Recall that, for $d\geq 0$, a graph $G$ is \emph{$d$-degenerate} if every subgraph of $G$ has a node of degree at most~$d$. Degeneracy will play a crucial role later in the paper, for evenly distributing the information to be stored in the certificates according to our proof-labeling schemes. Graphs with bounded genus have bounded degeneracy (see, e.g., \cite{NesetrilO12}), as recalled below for further references. 

\begin{lemma}\label{genus-implies-degeneracy}
For every $k\geq 0$, there exists $d\geq 0$ such that every graph of genus at most $k$ is $d$-degenerate. Similarly, For every $k\geq 1$, there exists $d\geq 0$ such that every graph of demigenus at most $k$ is $d$-degenerate.
\end{lemma}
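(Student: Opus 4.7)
The approach is the classical Euler-formula argument for sparsity of graphs on surfaces. The key observation is that (demi)genus is monotone under taking subgraphs: any embedding of $G$ on $\T_k$ (resp.\ $\NT_k$) restricts to an embedding of any subgraph $H$, so $H$ also has (demi)genus at most~$k$. Therefore, to prove $d$-degeneracy it suffices to show that every simple connected graph of (demi)genus at most~$k$ has a vertex of degree at most some constant $d_0 = d_0(k)$, and then apply this bound to each connected component of every subgraph of~$G$.

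The main step is an edge bound. Let $k'\leq k$ be the actual genus of $G$, and fix an embedding of $G$ on $\T_{k'}$; by Lemma~\ref{lem:if-genusk-then-2cell} this embedding is 2-cell. Lemma~\ref{euler} then gives $|V|-|E|+|F|=2-2k'$. Since $G$ is simple, connected, and has $|V|\geq 3$, every face-boundary walk has length at least~$3$, and because the total edge-multiplicity across all boundary walks equals $2|E|$, we obtain $3|F|\leq 2|E|$, hence
\[
|E| \;\leq\; 3|V|-6+6k' \;\leq\; 3|V|-6+6k.
\]
(Graphs with $|V|\leq 2$ are trivially $1$-degenerate, so may be discarded.)

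From this, the average degree of $G$ satisfies $2|E|/|V|\leq 6+(12k-12)/|V|$. If $|V|>12k$, this is strictly less than $7$, so some vertex has degree at most~$6$; otherwise $|V|\leq 12k$ and the minimum degree is trivially at most $|V|-1<12k$. Taking $d_0:=12k+6$, every simple connected graph of genus at most~$k$ contains a vertex of degree at most~$d_0$. Applying this to each connected component of each subgraph $H$ of $G$ produces a vertex of $H$ of degree (in~$H$) at most~$d_0$, establishing $d_0$-degeneracy. The demigenus case is strictly analogous: Lemma~\ref{lem:if-demigenusk-then-2cell} supplies a 2-cell embedding on $\NT_{k'}$, the non-orientable Euler formula yields $|E|\leq 3|V|-6+3k$, and the same averaging argument produces a constant degeneracy bound of order~$O(k)$.

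The main obstacle is very minor: there is no deep mathematical difficulty, as this is a classical consequence of Euler's formula. The only items needing a line of justification are the face-length inequality $|\partial\phi|\geq 3$ (which uses simplicity, connectivity, and $|V|\geq 3$) and the reduction to connected components when lifting the minimum-degree bound to $d_0$-degeneracy for arbitrary (possibly disconnected) subgraphs of~$G$.
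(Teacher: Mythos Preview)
Your argument is correct; it is the standard Euler-formula derivation of an $O(k)$ edge bound, followed by averaging. Note, however, that the paper does not prove Lemma~\ref{genus-implies-degeneracy} at all: it simply records the statement for later reference and cites it as a known fact (see, e.g., \cite{NesetrilO12}). So there is no ``paper's own proof'' to compare against---you have supplied the classical justification the paper chose to omit.

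One cosmetic remark: the case split on $|V|>12k$ versus $|V|\leq 12k$ is unnecessary. From $|E|\leq 3|V|+6k$ (dropping the $-6$), the average degree is at most $6+12k/|V|\leq 6+12k$, so the minimum degree is at most $6+12k$ directly, yielding the same $d_0$ without cases.
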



\subsection{Formal Statement of the Problem}

Proof-Labeling Schemes (PLS) are distributed mechanisms for verifying graph properties. More precisely, let $\G$ be a graph family. A PLS for $\G$ is defined as a prover-verifier pair $(\mathbf{p},\mathbf{v})$, bounded to satisfy the following. Given any graph $G=(V,E)$ whose $n$ vertices are arbitrarily labeled by $n$ distinct identifiers (ID) picked from a set  $\{1,\dots,n^k\}$, $k\geq 1$, of polynomial range, the prover~$\mathbf{p}$ is a non-trustable oracle that provides every vertex $v\in V$ with a \emph{certificate}~$c(v)$. The verifier~$\mathbf{v}$ is a distributed protocol performing a single round in parallel at all vertices, as follows. Every vertex collects the certificates of all its neighbors, and must output ``accept'' or ``reject'', on the basis of its ID, its certificate, and the certificates of its neighbors. The pair $(\mathbf{p},\mathbf{v})$ is a correct PLS for~$\G$ if the following two conditions hold.

\begin{description}
\item[Completeness:] For every $G\in\G$, and for every ID-assignment to the vertices of~$G$, the (non-trustable) prover~$\mathbf{p}$ can assign certificates to the vertices such that the verifier~$\mathbf{v}$ \emph{accepts} at all vertices;
\item[Soundness:] For every $G\notin\cal{G}$, for every ID-assignment to the vertices of~$G$, and for every certificate-assignment to the vertices by the non-trustable prover~$\mathbf{p}$, the verifier~$\mathbf{v}$  \emph{rejects} in at least one vertex. 
\end{description}

The main \emph{complexity measure} for a PLS is the size of the certificates assigned to the vertices by the prover. The objective of the paper is to design schemes with logarithmic-size certificates, for two classes of graphs: the class~$\G^+_k$, $k\geq 0$, of graphs embeddable on an orientable closed surface of genus at most~$k$ (i.e., the graphs of genus~$\leq k$), and the class $\G^-_k$, $k\geq 1$, of graphs embeddable on a non-orientable closed surface of genus at most~$k$ (i.e., the graphs of demi-genus~$\leq k$). 

\paragraph{Remark.}
 
 Throughout the rest of the paper, for $G\in \G^+_k$ (resp., $G\in \G^-_k$) with genus~$k'<k$ (resp., demigenus~$k'<k$), our proof-labeling scheme certifies an embedding of $G$ on $\T_{k'}$ (resp., on $\NT_{k'}$). Therefore, in the following, $k$ is supposed to denote the exact genus of~$G$.


\section{Unfolding a Surface}
\label{sec:unfolding-a-surface}

In this section, we describe how to ``flat down'' a surface, by reducing it to a disk whose boundary has a specific form. This operation is central for constructing the distributed certificates in our proof-labeling scheme. In fact, it provides a centralized certificate for bounded genus. The section is  dedicated to orientable surfaces, and the case of non-orientable surfaces will be treated further in the text. 
 
\subsection{Separation and Duplication}
\label{G_C}

Given a 2-cell embedding of a graph $G$ on a  closed  surface  $X$, a \emph{non-separating cycle}  of the embedding is a simple cycle $C$ in $G$ such that     $X \smallsetminus  C$ is connected.   Figure~\ref{fig:separating} illustrates this notion: the cycle displayed on~(a) is non-separating, as shown on~(b); instead, the cycle displayed on~(c) is separating, as shown on~(d). The result hereafter is a classical result, whose proof can be found in, e.g., \cite{mohar2001graphs,Ortner08}.     
 
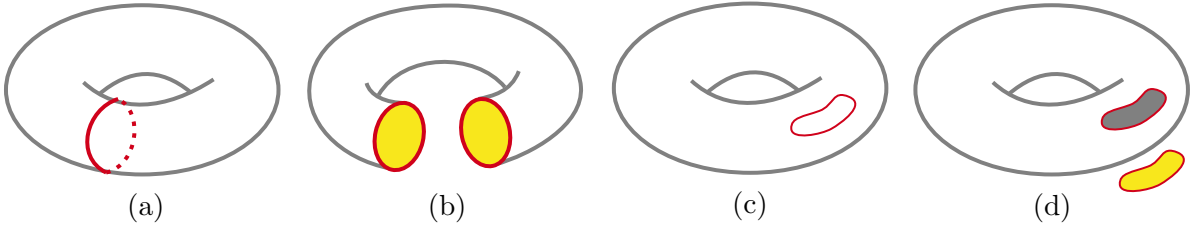
\begin{figure}[tb]
\centering
\tikzset{every picture/.style={line width=0.75pt}} 

\begin{tikzpicture}[x=0.75pt,y=0.75pt,yscale=-1,xscale=1]

\draw  [color={rgb, 255:red, 128; green, 128; blue, 128 }  ,draw opacity=1 ][line width=1.5]  (20,111.41) .. controls (20,87.9) and (50.45,68.84) .. (88.02,68.84) .. controls (125.59,68.84) and (156.04,87.9) .. (156.04,111.41) .. controls (156.04,134.91) and (125.59,153.97) .. (88.02,153.97) .. controls (50.45,153.97) and (20,134.91) .. (20,111.41) -- cycle ;
\draw [color={rgb, 255:red, 128; green, 128; blue, 128 }  ,draw opacity=1 ][line width=1.5]    (58.68,107.63) .. controls (76.04,123.67) and (99.67,121.85) .. (123.64,106.15) ;
\draw [color={rgb, 255:red, 128; green, 128; blue, 128 }  ,draw opacity=1 ][line width=1.5]    (66.34,113.75) .. controls (81.93,100.36) and (96.92,100.14) .. (111.96,112.43) ;
\draw [color={rgb, 255:red, 128; green, 128; blue, 128 }  ,draw opacity=1 ][line width=1.5]    (255.83,116.01) .. controls (262.83,114.51) and (272,112.5) .. (276,102) ;
\draw [color={rgb, 255:red, 128; green, 128; blue, 128 }  ,draw opacity=1 ][line width=1.5]    (206,114) .. controls (218.67,92.5) and (254.33,92.5) .. (269,110) ;
\draw  [color={rgb, 255:red, 128; green, 128; blue, 128 }  ,draw opacity=1 ][line width=1.5]  (323.43,110.56) .. controls (323.43,87.06) and (353.88,68) .. (391.45,68) .. controls (429.02,68) and (459.47,87.06) .. (459.47,110.56) .. controls (459.47,134.07) and (429.02,153.13) .. (391.45,153.13) .. controls (353.88,153.13) and (323.43,134.07) .. (323.43,110.56) -- cycle ;
\draw [color={rgb, 255:red, 128; green, 128; blue, 128 }  ,draw opacity=1 ][line width=1.5]    (362.11,106.79) .. controls (379.46,122.82) and (403.1,121.01) .. (427.07,105.3) ;
\draw [color={rgb, 255:red, 128; green, 128; blue, 128 }  ,draw opacity=1 ][line width=1.5]    (369.77,112.91) .. controls (385.36,99.52) and (400.35,99.3) .. (415.39,111.58) ;
\draw  [color={rgb, 255:red, 128; green, 128; blue, 128 }  ,draw opacity=1 ][line width=1.5]  (473.96,111.41) .. controls (473.96,87.9) and (504.41,68.84) .. (541.98,68.84) .. controls (579.55,68.84) and (610,87.9) .. (610,111.41) .. controls (610,134.91) and (579.55,153.97) .. (541.98,153.97) .. controls (504.41,153.97) and (473.96,134.91) .. (473.96,111.41) -- cycle ;
\draw [color={rgb, 255:red, 128; green, 128; blue, 128 }  ,draw opacity=1 ][line width=1.5]    (512.64,107.63) .. controls (530,123.67) and (553.63,121.85) .. (577.6,106.15) ;
\draw [color={rgb, 255:red, 128; green, 128; blue, 128 }  ,draw opacity=1 ][line width=1.5]    (520.3,113.75) .. controls (535.89,100.36) and (550.88,100.14) .. (565.92,112.43) ;
\draw [color={rgb, 255:red, 208; green, 2; blue, 27 }  ,draw opacity=1 ][line width=1.5]    (70,153) .. controls (54,145) and (60,121) .. (76,116) ;
\draw [color={rgb, 255:red, 208; green, 2; blue, 27 }  ,draw opacity=1 ][line width=1.5]  [dash pattern={on 1.69pt off 2.76pt}]  (70,153) .. controls (86,149.17) and (89,122.5) .. (76,116) ;
\draw [color={rgb, 255:red, 128; green, 128; blue, 128 }  ,draw opacity=1 ][line width=1.5]    (212,151) .. controls (186.67,146.17) and (167.67,127.17) .. (172,106) .. controls (176.33,84.83) and (201.67,68.83) .. (242,69) .. controls (282.33,69.17) and (305.67,90.83) .. (307,110) .. controls (308.33,129.17) and (282.33,143.83) .. (264,149) ;
\draw [color={rgb, 255:red, 128; green, 128; blue, 128 }  ,draw opacity=1 ][line width=1.5]    (200,108) .. controls (201.67,114.17) and (209.33,118.17) .. (220,118) ;
\draw  [color={rgb, 255:red, 208; green, 2; blue, 27 }  ,draw opacity=1 ][fill={rgb, 255:red, 248; green, 231; blue, 28 }  ,fill opacity=1 ][line width=1.5]  (204.59,132.22) .. controls (206.67,123.01) and (213.59,116.72) .. (220.07,118.18) .. controls (226.54,119.63) and (230.1,128.29) .. (228.02,137.5) .. controls (225.95,146.72) and (219.02,153.01) .. (212.55,151.55) .. controls (206.08,150.09) and (202.52,141.44) .. (204.59,132.22) -- cycle ;
\draw  [color={rgb, 255:red, 208; green, 2; blue, 27 }  ,draw opacity=1 ][fill={rgb, 255:red, 248; green, 231; blue, 28 }  ,fill opacity=1 ][line width=1.5]  (247.7,135.42) .. controls (245.74,126.18) and (249.4,117.57) .. (255.89,116.19) .. controls (262.38,114.81) and (269.23,121.18) .. (271.2,130.42) .. controls (273.16,139.66) and (269.49,148.27) .. (263.01,149.65) .. controls (256.52,151.03) and (249.67,144.66) .. (247.7,135.42) -- cycle ;
\draw  [color={rgb, 255:red, 208; green, 2; blue, 27 }  ,draw opacity=1 ] (426,122) .. controls (433.67,119.17) and (432.33,111.17) .. (441,115) .. controls (449.67,118.83) and (437,128.17) .. (432,131) .. controls (427,133.83) and (412.33,136.83) .. (412,131) .. controls (411.67,125.17) and (418.33,124.83) .. (426,122) -- cycle ;
\draw  [color={rgb, 255:red, 208; green, 2; blue, 27 }  ,draw opacity=1 ][fill={rgb, 255:red, 128; green, 128; blue, 128 }  ,fill opacity=1 ] (581,119) .. controls (588.67,116.17) and (587.33,108.17) .. (596,112) .. controls (604.67,115.83) and (592,125.17) .. (587,128) .. controls (582,130.83) and (567.33,133.83) .. (567,128) .. controls (566.67,122.17) and (573.33,121.83) .. (581,119) -- cycle ;
\draw  [color={rgb, 255:red, 208; green, 2; blue, 27 }  ,draw opacity=1 ][fill={rgb, 255:red, 248; green, 231; blue, 28 }  ,fill opacity=1 ] (590,150) .. controls (597.67,147.17) and (596.33,139.17) .. (605,143) .. controls (613.67,146.83) and (601,156.17) .. (596,159) .. controls (591,161.83) and (576.33,164.83) .. (576,159) .. controls (575.67,153.17) and (582.33,152.83) .. (590,150) -- cycle ;

\draw (79,162) node [anchor=north west][inner sep=0.75pt]   [align=left] {(a)};
\draw (229,162) node [anchor=north west][inner sep=0.75pt]   [align=left] {(b)};
\draw (381,161) node [anchor=north west][inner sep=0.75pt]   [align=left] {(c)};
\draw (531,162) node [anchor=north west][inner sep=0.75pt]   [align=left] {(d)};

\end{tikzpicture}
\caption{Separating and non-separating cycles.}
\label{fig:separating}
\end{figure} 
 
 \begin{lemma} \label{first_cycle}
Let $G$ be a graph embeddable on a closed orientable surface $X$ with genus $k \geq 1$. For any 2-cell embedding of $G$ on $X$, there exists a non-separating cycle $C$ in $G$.  
 \end{lemma}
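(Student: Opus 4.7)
The plan is to first produce a non-separating simple closed curve on $X$ itself, then transfer it to a closed walk in the skeleton of $G$ via Lemma~\ref{lem:simul}, and finally extract from that walk a simple cycle of $G$ that is still non-separating.

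First, since $k\geq 1$, the surface $X=\T_k$ contains a non-separating simple closed curve $\gamma$. From the construction of Section~\ref{sec:surface}, $\T_k$ arises from $S^2$ by identifying the boundaries of $k$ pairs of disks $(D_i,D_{i+k})$ to form handles; I would take $\gamma$ to be the image in $\T_k$ of the glued boundary of any such pair. Cutting $\T_k$ along $\gamma$ reverses that gluing, yielding $S^2$ with $k-1$ handles and two open disks removed, which is still path-connected, so $\gamma$ does not separate $X$. In particular, $\gamma$ is non-contractible, since a contractible simple closed curve on a closed surface bounds a disk and therefore separates the surface.

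Next, by Lemma~\ref{lem:simul}, $\gamma$ is homotopic to a closed walk $W$ of the skeleton $\Sk(G)$. I would then invoke the classical characterization that, on an orientable closed surface, a simple closed curve is separating if and only if it is null-homologous in $H_1(X,\mathbb{Z}/2)$ (see, e.g., \cite{mohar2001graphs}). Since $\gamma$ is non-separating, $[\gamma]\neq 0$ in $H_1(X,\mathbb{Z}/2)$, and homotopy preserves homology class, so $[W]\neq 0$ as well.

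Finally, I would decompose $W$ in the cycle space $Z_1(G,\mathbb{Z}/2)$ as a sum $C_1+\cdots+C_m$ of simple cycles of $G$, a standard graph-theoretic fact. Via the embedding, each $C_i$ is a simple closed curve on $X$, and additivity of homology yields $[W]=\sum_{i=1}^m [C_i]$. Since $[W]\neq 0$, at least one $C_j$ satisfies $[C_j]\neq 0$, and hence $C_j$ is a simple non-separating cycle of $G$, as required. The main point I expect to require care is the bridge between the graph-theoretic notion of ``simple cycle'' and the topological notion of ``non-separating'', namely the equivalence between ``separating'' and ``null-homologous modulo~$2$'' for simple closed curves on orientable closed surfaces; once this is in hand, the rest of the argument is routine.
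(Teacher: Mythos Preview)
Your argument is correct. The paper does not actually prove this lemma; it simply states it as a classical result and cites \cite{mohar2001graphs,Ortner08}, so there is no proof in the paper to compare against. Your approach---taking a non-separating simple closed curve on the surface, pushing it onto the skeleton via Lemma~\ref{lem:simul}, and then using the $\mathbb{Z}/2$-homology characterization of separating simple closed curves together with a cycle-space decomposition to extract a \emph{simple} non-separating cycle of $G$---is a standard and clean way to establish the result, and is essentially the argument one finds in the cited references. One small remark: the phrase ``closed cycle of $\Sk(G)$'' in Lemma~\ref{lem:simul} should indeed be read as a (not necessarily simple) closed walk in $G$; your decomposition step is precisely what bridges the gap from such a walk to a simple cycle.
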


\subsubsection{Cycle-Duplication} 

Let $G$ be a graph embeddable on a closed orientable surface $X$. An orientable cycle is a cycle of $G$ whose embedding on~$X$ yields an orientable curve. Given a 2-cell embedding~$f$ of  $G$ on~$X$,  let $C$ be a non-separating orientable cycle of $G$ whose existence is guaranteed by Lemma~\ref{first_cycle}. By definition, the left and right sides of $C$ can be defined on the neighborhood of $C$. We denote by $G_C$ the graph obtained by the \emph{duplication} of~$C$ in~$G$. Specifically, let us assume that $C = (v_0, \dots, v_r)$. Every vertex $w\notin C$ remains in $G_C$, as well as every edge non incident to a vertex of~$C$. Every vertex $v_i $ of $C$ is replaced by a \emph{left} vertex~$v'_i$ and a \emph{right} vertex~$v''_i $. For every $i=0,\dots,r-1$,  $\{v'_i, v'_{i+1} \} $  and $ \{v''_i, v''_{i+1} \} $ are  edges of $G_C$, as well as $\{v'_r,v'_0\}$ and $\{v''_r,v''_0\}$. Finally, for every $i=0,\dots,r$, and every neighbor $w\notin C$ of $v_i$ in $G$,  if $f( \{v_i, w \}) $  meets the left of $C$, then $\{v'_i,w\}$ is an edge of $G_C$, otherwise  $\{v''_i,w\}$ is an edge of $G_C$.  The embedding $f$ of $G$ on $X$ directly induces an embedding of $G_C $ on~$X$. Figure~\ref{fig:duplication-simple}(a-b) illustrates the operation of duplication, and the resulting embedding on~$X$. 

\begin{figure}[tb]
\centering
\input{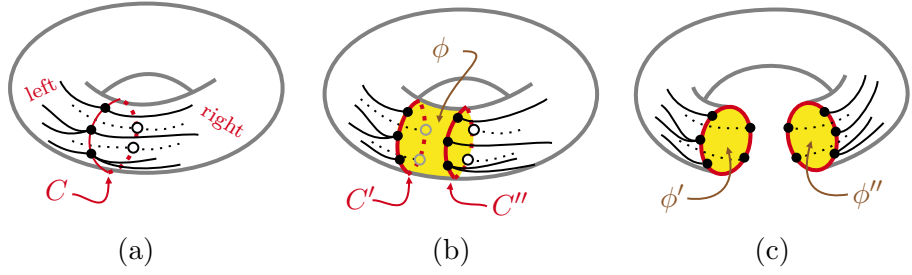}
\caption{Cycle-duplication and the associated surface.}
\label{fig:duplication-simple}
\end{figure}

The embedding of $G_C $ on~$X$ is however not a 2-cell embedding, as it contains the face $\phi$ between $C'$ and $C''$ on~$X$, where  $C'  = (v'_0, \dots, v'_r)$ and $C'' = (v''_0, \dots,v''_r)$ (see Figure~\ref{fig:duplication-simple}(b)). Formally, $\phi$ is the face with boundaries $C'$ and $C''$, and, as such, it is not homeomorphic to a disc. Let $X_C$ be the closed surface\footnote{Notice that 
$ X \setminus \phi$,  $\overline{ \phi'}  =  \phi' \cup C'$ (where $\overline{ \phi'}$ denotes the adherence of  $\phi'$), and $\overline{\phi'' } =   \phi'' \cup C''$ are compact sets. Thus $X_C$ is compact as  the union of these  three sets.}  obtained from $X$ by removing~$\phi$, and by replacing $\phi$ with two faces $\phi'$ and $\phi''$ with  boundary walks $C'$ and $C''$, respectively (see Figure~\ref{fig:duplication-simple}(c)). The embedding $f$ of $G$ on $X$ induces a 2-cell embedding $f_C$ of $G_C$ on $X_C $.  
Also, since $C$ is a non-separating cycle of  $G$ in $X$, the surface $X_C$ is path-connected, which ensures that $G_C$ is connected using Lemma \ref{lem:simul}.  

 Moreover, as $X$ is orientable, $X_C$  is  also orientable. Indeed, every simple cycle of $X_C$ not intersecting $\phi'$ nor $\phi''$ is a cycle of $X$, and  is therefore orientable. Furthermore, any simple cycle of $X_C$ intersecting $\phi'$ and/or $\phi''$ is homotopic to a cycle separated from both boundaries of $\phi'$ and $\phi''$ by an open set, and thus is homotopic to a cycle of~$X$. It follows that  $X_C$ is a closed orientable surface, and thus, thanks to Lemma~\ref{euler}, the genus of $X_C $  is~$k-1$. 

\subsubsection{Path-Duplication} 

Again, let us consider a graph $G$, an orientable closed surface~$X$, and a 2-cell embedding $f$ of   $G$ on  $X$. Let $\chi, \psi$ be two distinct faces of the embedding, and let $ P =   (w_0, \dots, w_s)$  be simple path (possibly reduced to a single vertex belonging to the two cycles) between $\chi$ and $\psi$ (see Figure~\ref{fig:path-duplication-simple}). That is, $P$ is such that  $w_0 $ is on the boundary of  $\phi$, $w_s$ is on the boundary of  $\psi$, and no  intermediate vertex $w_i$, $0 < i < s$,  is  on the boundary of $\chi$ or~$\psi$.  The path~$P$ enables to define a graph  $G_{ P}$ obtained by duplicating the path $P$ in a  way similar to the way the cycle $C$ was duplicated in the previous section. There is only one subtle difference, as the left and right side of the path cannot be defined at its endpoints. Nevertheless, the left and right sides of $P$ can still be properly defined all along $P$, including its extremities, by virtually ``extending'' $P$ so that it ends up in the interiors  of $\chi$ and $\psi$.  Thanks to this path-duplication, the two faces $\chi$ and $\psi$ of $G$ are replaced by a unique face of $G_P$ as illustrated on Figure~\ref{fig:path-duplication-simple}, reducing the number of faces by one.  

\begin{figure}[tb]
\begin{center}
\input{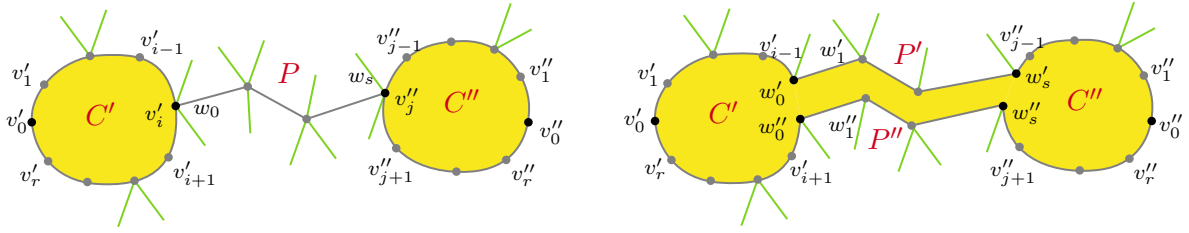}
\caption{Path-duplication. }
\label{fig:path-duplication-simple} 
\end{center}
\end{figure} 

\paragraph{Remark.} 

Cycle-duplication and path-duplication are typically used conjointly. A basic example, used for the torus~$\T_1$ in the next section, consists of, first, duplicating a cycle~$C$, then connecting the two faces resulting from this duplication by a path~$P$, and, finally, duplicating~$P$ for merging these two faces into one single face. Further, for the general case $\T_k$, $k\geq 1$,  $k$~cycles $C_1,\dots,C_k$ are duplicated, and  $2k-1$  paths $P_1,\dots,P_{2k-1}$ are duplicated for connecting the $2k$ faces $\phi'_1, \phi''_1,\dots,\phi'_k,\phi''_k$ resulting from the $k$ cycle-duplications, ending up in a unique face~$\phi^*$. 

\subsection{Unfolding the Torus}
\label{subsec:cas-du-tore}

As a warm up, we consider the case of a graph embedded on the torus~$\T_1$, and show how to ``unfold'' this embedding. 

\begin{figure}[tb]
\centering
\input{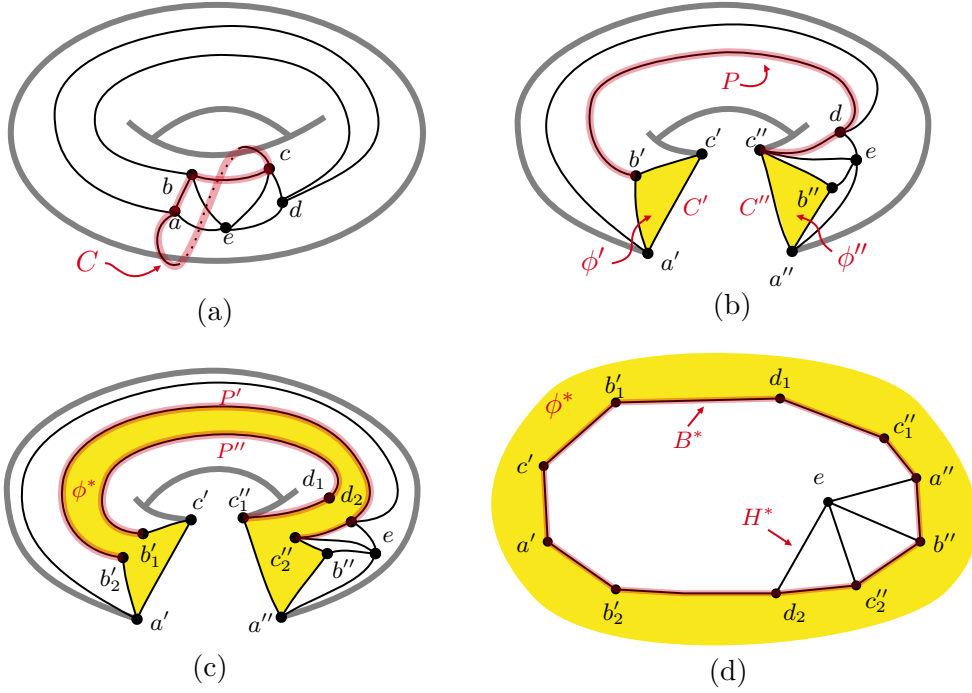}
\caption{Unfolding $K_5$ embedded on the torus $\T_1$. }
\label{fig:cut-pierre} 
\end{figure} 

\subsubsection{Making  a Graph of Genus 1 Planar}
\label{sec:making-planar}

Let $G$ be a graph, and let $f$ be a 2-cell embedding of $G$ on~$X=\T_1$ --- see Figure~\ref{fig:cut-pierre}(a) for an embedding of $K_5$ on $\T_1$, as an illustrative example. Let $C=(v_0,\dots,v_r)$ be a non-separating orientable cycle of $G$, e.g., the cycle $(a,b,c)$ on Figure~\ref{fig:cut-pierre}(a). Let 
$
C'=(v'_0,\dots,v'_r) \; \mbox{and} \; C''=(v''_0,\dots,v''_r)
$
be the two cycles resulting from the duplication of~$C$, e.g., the cycles $(a',b',c')$ and $(a'',b'',c'')$ on Figure~\ref{fig:cut-pierre}(b). The graph $G_C$ with two new faces $\phi'$ and $\phi''$ is connected. In particular,  there exists  a simple path  $ P =   (w_0,  \dots, w_s)$  in $G_C$ from a vertex $v'_i\in C'$  to a vertex $v''_j\in C''$, such that every  intermediate vertex $w_i$, $0 < i < s$,  is not  in $C'\cup C''$, e.g., the path $(c'',d,b')$ on Figure~\ref{fig:cut-pierre}(b). Note that it may be the case that $i\neq j$. On Figure~\ref{fig:cut-pierre}(b), the path $(b'',e,d,b')$ satisfies $i=j$, but Figure~\ref{fig:K33-torus-paths1} illustrates an embedding of $K_{3,3}$ on $\T_1$ for which $i=j$ cannot occur (simply because every vertex of $K_{3,3}$ has degree~3, and thus it has a single edge not in the cycle). Duplicating $P$ enables to obtain a graph $G_{C, P}$ with a special face $\phi^*$,  whose boundary contains all duplicated vertices and only them (see Figure~\ref{fig:cut-pierre}(c)). The details of the vertex-duplications, and of the edge-connections are detailed hereafter.

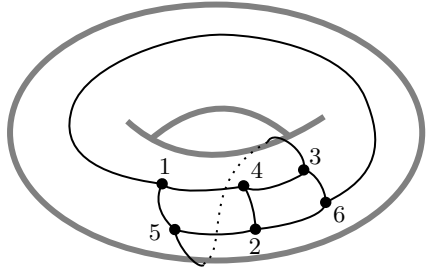
\begin{figure}[tb]
\begin{center}
\tikzset{every picture/.style={line width=0.75pt}} 

\begin{tikzpicture}[x=0.75pt,y=0.75pt,yscale=-1,xscale=1]

\draw  [color={rgb, 255:red, 128; green, 128; blue, 128 }  ,draw opacity=1 ][line width=2.25]  (302.25,151.38) .. controls (302.25,115.82) and (348.31,87) .. (405.13,87) .. controls (461.94,87) and (508,115.82) .. (508,151.38) .. controls (508,186.93) and (461.94,215.75) .. (405.13,215.75) .. controls (348.31,215.75) and (302.25,186.93) .. (302.25,151.38) -- cycle ;
\draw [color={rgb, 255:red, 128; green, 128; blue, 128 }  ,draw opacity=1 ][line width=2.25]    (360.75,145.67) .. controls (387,169.92) and (422.75,167.17) .. (459,143.42) ;
\draw [color={rgb, 255:red, 128; green, 128; blue, 128 }  ,draw opacity=1 ][line width=2.25]    (372.33,154.92) .. controls (395.92,134.67) and (418.58,134.33) .. (441.33,152.92) ;
\draw [line width=0.75]    (378.17,177.17) .. controls (344.17,172.83) and (326.17,160.5) .. (333.5,138.5) .. controls (340.83,116.5) and (385.17,101.5) .. (409.17,102.17) .. controls (433.17,102.83) and (470.17,111.5) .. (479.5,133.83) .. controls (488.83,156.17) and (486.83,173.5) .. (459.83,186.5) ;
\draw    (448.83,170.17) .. controls (451.17,160.5) and (433.83,148.83) .. (430.83,156.17) ;
\draw    (399,218) .. controls (395.83,220.5) and (385.17,208.17) .. (384.5,200.17) ;
\draw  [dash pattern={on 0.84pt off 2.51pt}]  (430.83,156.17) .. controls (395.5,175.5) and (412.83,210.83) .. (398,218) ;
\draw    (424.75,200) .. controls (415.17,203.5) and (395.17,203.5) .. (384.5,200.17) ;
\draw [shift={(384.5,200.17)}, rotate = 197.35] [color={rgb, 255:red, 0; green, 0; blue, 0 }  ][fill={rgb, 255:red, 0; green, 0; blue, 0 }  ][line width=0.75]      (0, 0) circle [x radius= 2.34, y radius= 2.34]   ;
\draw    (459.83,186.5) .. controls (459.5,195.17) and (432.17,199.5) .. (424.75,200) ;
\draw [shift={(424.75,200)}, rotate = 176.14] [color={rgb, 255:red, 0; green, 0; blue, 0 }  ][fill={rgb, 255:red, 0; green, 0; blue, 0 }  ][line width=0.75]      (0, 0) circle [x radius= 2.34, y radius= 2.34]   ;
\draw [shift={(459.83,186.5)}, rotate = 92.2] [color={rgb, 255:red, 0; green, 0; blue, 0 }  ][fill={rgb, 255:red, 0; green, 0; blue, 0 }  ][line width=0.75]      (0, 0) circle [x radius= 2.34, y radius= 2.34]   ;
\draw    (378.17,177.17) .. controls (385.17,184.5) and (426.25,177.67) .. (418.83,178.17) ;
\draw [shift={(418.83,178.17)}, rotate = 176.14] [color={rgb, 255:red, 0; green, 0; blue, 0 }  ][fill={rgb, 255:red, 0; green, 0; blue, 0 }  ][line width=0.75]      (0, 0) circle [x radius= 2.34, y radius= 2.34]   ;
\draw    (384.5,200.17) .. controls (379.5,195.83) and (373.17,187.5) .. (378.17,177.17) ;
\draw [shift={(378.17,177.17)}, rotate = 295.82] [color={rgb, 255:red, 0; green, 0; blue, 0 }  ][fill={rgb, 255:red, 0; green, 0; blue, 0 }  ][line width=0.75]      (0, 0) circle [x radius= 2.34, y radius= 2.34]   ;
\draw    (418.83,178.17) .. controls (425.83,185.5) and (456.25,169.67) .. (448.83,170.17) ;
\draw [shift={(448.83,170.17)}, rotate = 176.14] [color={rgb, 255:red, 0; green, 0; blue, 0 }  ][fill={rgb, 255:red, 0; green, 0; blue, 0 }  ][line width=0.75]      (0, 0) circle [x radius= 2.34, y radius= 2.34]   ;
\draw    (459.83,186.5) .. controls (459.83,179.83) and (454.5,171.17) .. (448.83,170.17) ;
\draw    (424.75,200) .. controls (424.75,193.33) and (422.5,183.83) .. (418.83,178.17) ;

\draw (375,163) node [anchor=north west][inner sep=0.75pt]  [font=\footnotesize]  {$1$};
\draw (370,196) node [anchor=north west][inner sep=0.75pt]  [font=\footnotesize]  {$5$};
\draw (420,203) node [anchor=north west][inner sep=0.75pt]  [font=\footnotesize]  {$2$};
\draw (421,165.23) node [anchor=north west][inner sep=0.75pt]  [font=\footnotesize]  {$4$};
\draw (450.08,158) node [anchor=north west][inner sep=0.75pt]  [font=\footnotesize]  {$3$};
\draw (462.08,186.9) node [anchor=north west][inner sep=0.75pt]  [font=\footnotesize]  {$6$};

\end{tikzpicture}
\caption{$K_{3,3}$ embedded on the torus  $\T_1$.}
\label{fig:K33-torus-paths1} 
\end{center}
\end{figure}

\paragraph{Connections in path-duplication.}

Let $ P' =   (w'_0,  \dots, w'_s)$ and $ P'' =   (w''_0,  \dots, w''_s)$ be the two paths obtained by duplicating~$P$. In particular, the vertices $w_0=v'_i$ and $w_s = v''_j$ are both duplicated in $w'_0, w''_0$,  and  $w'_s, w''_s$, respectively. The edges 
\[
\{v'_{i-1},v'_i\}, \; \{v'_i,v'_{i+1}\}, \; \{v''_{j-1},v''_j\}, \; \mbox{and} \; \{v''_j,v''_{j+1}\}
\] 
are  replaced by the edges connecting $v'_{i-1},v'_{i+1},v''_{j-1},v''_{j+1}$ to $w'_0, w''_0, w'_s,w''_s$.  For defining these edges, observe that the path $P$ in $\T_1$ induces a path $Q =   (v_i,  w_1,\dots, w_{s-1}, v_j)$ in $G$  connecting the vertices $v_i$ and $v_j$ of~$C$,  such that, in the embedding on~$\T_1$, the edge  $\{v_i,  w_1\}$ meets $C$ on one side while the edge  $\{w_{s-1}, v_j\}$  meets $C$  on the other side (see Figure~\ref{fig:cutpathQ}(a-b)). 

\begin{figure}[tb]
\centering
\input{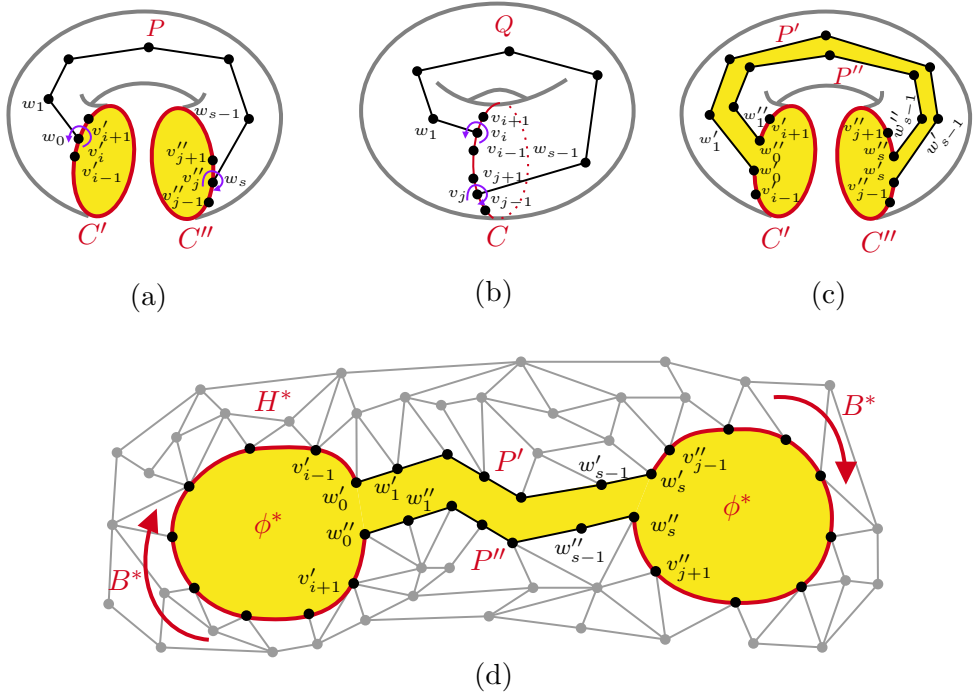}
\caption{Setting up the connections in path-duplication.}
\label{fig:cutpathQ} 
\end{figure} 

\noindent Let us assume, w.l.o.g., that  the edges of $C \cup Q$ around $v_i$ are in the order 
\[
\{v_i, v_{i-1} \}, \; \{v_i, v_{i+1}\}, \; \{v_i, w_1 \}
\]
when visited counter-clockwise in~$\T_1$. It follows that the edges of $C \cup Q$ around $v_j$ are in the order 
\[
 \{v_j, v_{j-1} \},  \{v_j, v_{j+1}\},  \{v_j, w_{s-1} \} 
 \]
 when visited clockwise in~$\T_1$ (see Figure~\ref{fig:cutpathQ}(b)). These orders are transferred in~$G_C$, that is, the edges of $C' \cup P$ around $v'_i$ are in counter-clockwise order 
 \[
  \{v'_i, v'_{i-1} \},   \{v'_i, v'_{i+1} \}, \{v'_i, w_{1} \}, 
  \]
  while the edges of $C'' \cup P$ around $v''_j$ are in clockwise order 
  \[
  \{v''_j, v''_{j-1} \},  \{v''_j, v''_{j+1}\}, \{v''_j, w_{s-1} \},
  \]
  as illustrated on  Figure~\ref{fig:cutpathQ}(a).   This guarantees that $v'_{i-1}$ and $v''_{j-1}$ are in the same side of the path $P$. More generally,  the relative positions of $v'_{i-1}$, $v'_{i+1}$,  $v''_{j-1}$, and  $v''_{j+1}$ w.r.t.~$P$ are as follows.  Vertices $v'_{i-1}$ and $v''_{j-1}$  are on the same side of~$P$, while vertices $v'_{i+1}$ and $v''_{j+1}$ are on the other  side of~$P$ (see again Figure~\ref{fig:cutpathQ}(a)).  As a consequence, it can be assumed that, in the graph $G_{C, P}$ resulting from the duplications of both~$C$ and~$P$, the vertices $v'_{i-1}$ and $v''_{j-1}$ are connected to the end points of~$P'$,  while $v'_{i+1}$ and $v''_{j+1}$ are connected to the end points of~$P''$.  It follows that  
  \[
  \{w'_0, v'_{i-1} \}, \{w'_s, v''_{j-1} \}, \{w''_0, v'_{i+1} \}, \mbox{and} \; \{w''_s, v''_{j+1} \}
  \]
  are edges of $G_{C, P}$ (see Figure~\ref{fig:cutpathQ}(c)).  

\paragraph{Unfolding.}

The embedding $f$ of $G$ on $X=\T_1$ directly induces an embedding of $H^*=G_{C, P}$ on~$X_C$, as illustrated on Figure~\ref{fig:cutpathQ}(d). As observed before, the genus of $X_C$ is one less than the genus of~$X$. Since $X=\T_1$, it follows that the embedding $f$ of $G$ on $\T_1$  actually induces a planar embedding $f^*$ of $H^*$.  The faces of this embedding are merely the faces of $G$,  plus another, special face $\phi^*$ whose boundary walk is 
\begin{align}
B^*= (w'_0, w'_1, \dots, w'_s, & v''_{j-1}, v''_{j-2}, \dots, v''_0,v''_r,\dots,v''_{j+1},  \label{eq:boundwalk}\\
& w''_s, w''_{s-1} , \dots, w''_0, v'_{i+1}, v'_{i+2},\dots,v'_r,v'_0,\dots, v'_{i-1}), \nonumber
\end{align}
as displayed on Figure~\ref{fig:cutpathQ}(d)). For instance, on Figure~\ref{fig:cut-pierre}(d), $B^*=(b'_1,d_1,c''_1,a'',b'',c''_2,d_2,b'_2,a',c')$. The face $\phi^*$ can be pointed out as special, as on Figure~\ref{fig:cutpathQ}(d), or can be made  the external face of the embedding of~$H^*$, as on Figure~\ref{fig:cut-pierre}(d). Our interest for $H^*,f^*,\phi^*$, and $B^*$ as far as the design of a proof-labeling scheme is concerned, resides in the fact that, as shown hereafter, they form a (centralized) certificate for genus~1. 

\subsubsection{Certifying Genus~1}
\label{subsec:certif-genus-1}

Let us first define the notion of  \emph{splitting}. 

\begin{definition}\label{def:splitting}
A \emph{splitting}  of a graph $G$ into a graph $H$ is a pair $\sigma=(\alpha,\beta)$ of functions, where $\alpha:V(G)\to 2^{V(H)}$, and $\beta:E(G)\to 2^{E(H)}$, such that:
\begin{enumerate}
\item the set $\{\alpha(v):v\in V(G)\}$ forms a partition of $V(H)$; 
\item for every $e=\{u,v\}\in E(G)$, $\beta(e)$ is a non-empty matching between $\alpha(u)$ and $\alpha(v)$. 
\end{enumerate} 
\end{definition}

For every $v\in V(G)$, the vertices $\alpha(v)$ in $H$ are the \emph{avatars} of $v$ in $H$. The \emph{degree} of a splitting $\sigma=(\alpha,\beta)$ of $G$ into $H$ is $\max_{v\in V(G)}|\alpha(v)|$, and $H$ is said to be a $d$-splitting of~$G$ whenever $d=\max_{v\in V(G)}|\alpha(v)|$. A vertex $v\in V(G)$ is split in $H$ if $|\alpha(v)|\geq 2$, otherwise it is not split in~$H$.  If a vertex $v$ is not split, we abuse notation by writing $\alpha(v)=v$, i.e., by referring to $v$ as a vertex of $G$ and as a vertex of~$H$. For any subgraph $G'$ of~$G$, we denote by $\sigma(G')$ the subgraph $H'$ of $H$ with edge-set $E(H')=\{\beta(e):e\in E(G')\}$, and vertex-set $V(H')$ equal to the set of end-points of the edges in~$E(H')$. Note that $\sigma(G')$ may not be connected, even if $G'$ is connected.  On the other hand, $\sigma(G')$ is necessarily connected whenever no vertices in $G'$ are split. With a slight abuse of notation, for  a splitting $\sigma=(\alpha,\beta)$ of $G$ into $H$, we often refer to $\sigma(v)$ instead of $\alpha(v)$ for $v\in V(G)$, and to $\sigma(e)$ instead of $\beta(e)$ for  $e\in E(G)$.

Let $H$ be a splitting of a graph~$G$ for which there exists a 2-splittting $U$ of $G$ such that $H$ is a 2-splitting of $U$. Let $f$ be a planar embedding of~$H$, and let~$\phi$ be a face of~$H$ embedded on~$\T_0$. Let 
$
B=(u_0,\dots,u_N)
$
be a boundary walk of $\phi$. Let $\sigma_{G,U}$ and $\sigma_{U,H}$ be the splitting of $G$ into $U$, and the splitting of $U$ into $H$, respectively. Let $\sigma_{G,H}=\sigma_{U,H}\circ \sigma_{G,U}$. 
We say that $(G,H,B,U)$ is \emph{globally consistent} if there exist vertices $v'_0,\dots,v'_r$, $v''_0,\dots,v''_r$, $w'_0, \dots, w'_s$, $w''_0, \dots, w''_s$ of~$H$ such that 
\[
B= (w'_0, \dots, w'_s,  v''_{j-1},  \dots, v''_0,v''_r,\dots, v''_{j+1}, 
w''_s,  \dots, w''_0, v'_{i+1},\dots,v'_r,v'_0,\dots, v'_{i-1}) 
\]
where 
\begin{itemize}
\item for every vertex $u\notin \{v'_k,v''_k: 0\leq k \leq r\}\cup \{w'_k,w''_k: 0\leq k \leq s\}$ of $H$,  $\sigma_{G,H}(u)=u$;

\item for every $k\in\{1,\dots,s-1\}$, $\sigma_{U,H}^{-1}(\{w'_k,w''_k\})=w_k \in V(U)$, and $\sigma_{G,U}(w_k)=w_k$;

\item for every~$k\in\{0,\dots,r\}\smallsetminus \{i,j\}$, $\sigma_{G,U}^{-1}(\{v'_k,v''_k\})=v_k\in V(U)$, and $\sigma_{U,H}(v_k)=v_k$;

\item $\sigma_{U,H}^{-1}(\{w'_0,w''_0\})=v'_i \in V(U)$, $\sigma_{U,H}^{-1}(\{w'_s,w''_s\})=v''_j \in V(U)$, $\sigma_{G,U}^{-1}(\{v'_i,v''_i\})=v_i\in V(G)$, and $\sigma_{G,U}^{-1}(\{v'_j,v''_j\})=v_j\in V(G)$ (note that this applies to both cases $i=j$ and $i\neq j$).
\end{itemize} 

\paragraph{Remark.}

The way the vertices of $B$ are listed provides $B$ with a reference direction, say clockwise. This reference direction is crucial for checking that the two faces of $U$ with respective boundary walks $v'_i,v'_{i+1},\dots,v'_r,v'_0,\dots, v'_{i-1}$ and $v''_j,v''_{j-1},  \dots, v''_0,v''_r,\dots, v''_{j+1}$ can be merged for forming a handle. Global consistency specifies that, for these two faces to be merged, their directions inherited from the reference direction of $B$ must both be clockwise (cf., Figure~\ref{fig:cutpathQ}(d)). Indeed, while one face is traversed clockwise with increasing indices, the other is traversed clockwise with decreasing indices. This matches the specification of handles (cf. Figure~\ref{fig:hcc}). 

\medskip 

By the construction in Section~\ref{sec:making-planar}, for every graph $G$ of genus~1, $(G,H^*,B^*,U^*)$ is globally  consistent, where $H^*=G_{C, P}$, $U^*=G_C$, and  $B^*$ is  the boundary walk of $\phi^*$ displayed in Eq.~\eqref{eq:boundwalk}. The following result is specific to the torus, but it illustrates the basis for the design of our proof-labeling schemes. 
 
\begin{lemma}\label{lem:certif-T1-centralized}
Let $H$ be a splitting of a graph $G$, and assume that there exists a planar embedding~$f$ of $H$ with a face $\phi$  and a boundary walk $B$ of $\phi$.  Let $U$ be a 2-splittting of $G$ such that $H$ is a 2-splitting of $U$. If $(G,H,B,U)$ is globally consistent, then $G$ can be embedded on the torus~$\T_1$.
\end{lemma}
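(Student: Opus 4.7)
The strategy is to reverse the unfolding construction of Section~\ref{sec:making-planar}: starting from the planar embedding of $H$ on $S^2$, I first undo the path-duplication to obtain a 2-cell embedding of $U$ on $S^2$ in which $\phi$ gets cut into two distinguished disk faces $\phi'$ and $\phi''$, and then undo the cycle-duplication by the handle-attachment construction of Section~\ref{sec:surface} to obtain an embedding of $G$ on~$\T_1$. Global consistency of $(G,H,B,U)$ supplies exactly the combinatorial bookkeeping needed to locate the relevant path and cycles inside~$H$, and, crucially, to check that the orientations match in a way that produces an orientable handle rather than a cross-cap.

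For the first step, recall that the face $\phi$ of $H$'s planar embedding is a topological disk; by global consistency, its boundary walk decomposes as $B = P' \cdot \gamma'' \cdot (P'')^{-1} \cdot \gamma'$, where $P'=(w'_0,\dots,w'_s)$ and $P''=(w''_0,\dots,w''_s)$ are the two arcs produced by the path-duplication, $\gamma''=(v''_{j-1},\dots,v''_{j+1})$ is a ``$C''$-arc'', and $\gamma'=(v'_{i+1},\dots,v'_{i-1})$ is a ``$C'$-arc''. Passing to $U$ amounts to identifying the pairs of edges $\{w'_k,w'_{k+1}\}\sim\{w''_k,w''_{k+1}\}$ along with the endpoint identifications $w'_0\sim w''_0=v'_i$ and $w'_s\sim w''_s=v''_j$ prescribed by global consistency. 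I would realize this geometrically by pulling $P'$ and $P''$ together inside $\phi$ along a common path $P$ of $U$ drawn through the interior of $\phi$, which cuts the disk $\phi$ into two sub-disks whose boundary walks close up into the complete cycles $C'=(v'_i,v'_{i+1},\dots,v'_{i-1},v'_i)$ and $C''=(v''_j,v''_{j-1},\dots,v''_{j+1},v''_j)$ of $U$. All other faces of $H$ are unaffected, so the result is a 2-cell embedding of $U$ on $S^2$ having $\phi'$ and $\phi''$ as disk faces bounded by $C'$ and $C''$.

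For the second step, let $A$ denote the annulus obtained from $S^2$ by removing the open interiors of $\phi'$ and $\phi''$; its boundary circles are $C'$ and $C''$. By the construction in Section~\ref{sec:surface} for $k=1$, gluing $C'$ to $C''$ via a homeomorphism under which the chosen boundary directions coincide turns $A$ into the torus $\T_1$. I would now perform exactly this gluing using the vertex identifications $v'_k\sim v''_k$ (and the matching of edges prescribed by $\sigma_{G,U}$), which is precisely the quotient collapsing $U$ to $G$. The drawing of $U$ on $A$ inherited from its embedding on $S^2$ descends under this quotient to an embedding of $G$ on~$\T_1$, establishing the lemma.

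The main technical point, and the reason the global consistency condition is formulated the way it is, is to verify that the gluing produces a genuine handle (yielding the orientable torus $\T_1$) rather than a cross-cap (which would yield a Klein bottle or similar non-orientable surface). The required orientation compatibility can be read directly off $B$: orienting $\partial\phi$ consistently (say clockwise when viewed from inside $\phi$), the sub-arc $\gamma'$ traverses $C'$ in the order $v'_{i+1},\dots,v'_{i-1}$ whereas $\gamma''$ traverses $C''$ in the \emph{reverse} order $v''_{j-1},\dots,v''_{j+1}$. Consequently $C'$ and $C''$ inherit opposite orientations as boundaries of the disks $\phi'$ and $\phi''$ in $S^2$, equivalently matching orientations when viewed from the annulus $A$, which is exactly the ``directions coincide'' condition from Section~\ref{sec:surface} under which the gluing produces the orientable surface $\T_1$.
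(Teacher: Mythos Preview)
Your proof is correct and follows essentially the same two-step reversal as the paper's own argument: first merge $P'$ and $P''$ inside $\phi$ to recover a planar embedding of $U$ with two disk faces $\phi',\phi''$ bounded by $C',C''$, then glue these cycles to form a handle and obtain $G$ on $\T_1$. Your treatment is in fact a bit more explicit than the paper's about the orientation bookkeeping (the annulus picture and the check that $\gamma'$ and $\gamma''$ traverse indices in opposite orders), which is exactly what guarantees an orientable handle rather than a cross-cap.
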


\begin{proof}
Using the specifications of the splits, the two sub-paths $(w'_0, \dots, w'_s)$ and  $(w''_0, \dots, w''_s)$ of~$B$ can be identified by merging each pair of vertices $w'_k$ and $w''_k$, $k\in\{1,\dots,s-1\}$, into a single vertex~$w_k=\sigma_{U,H}^{-1}(\{w'_k,w''_k\})$ of~$U$, by merging the vertices $w'_0$ and $w''_0$ into a single vertex $v'_i$ of~$U$, and by merging the vertices $w'_s$ and $w''_s$ into a single vertex $v''_j$ of~$U$. The resulting sequence $v'_i,w_1,\dots,w_{s-1},v''_j$ forms a path in~$U$ connecting two faces $\phi'$ and $\phi''$, replacing the face~$\phi$ of the planar embedding~$f$ of $H$, with respective boundary walks $(v'_0,v'_1,\dots, v'_r)$ and  $(v''_r, v''_{r-1},\dots, v''_0)$, where the vertices are ordered clockwise. These transformations preserve the planarity of the embedding, that is, $U$ is planar. Next, the two cycles $(v'_0,\dots, v'_r)$ and  $(v''_0, \dots, v''_r)$ can be identified, by merging each pair of nodes $v'_k$ and $v''_k$ into a single node~$v_k=\sigma_{G,U}^{-1}(\{v'_k,v''_k\})$ of~$G$. As a result, the two faces $\phi'$ and $\phi''$ are replaced by a handle, providing an embedding of $G$ on~$\T_1$.
\end{proof}

The outcome of Lemma~\ref{lem:certif-T1-centralized} is that $(H^*,f^*,\phi^*,B^*)$ is essentially a certificate that $G$ can be embedded on~$\T_1$ (up to also providing the ``intermediate'' splitting~$U^*$ resulting from cycle-duplication). In the next section, we show how to generalize this construction for deriving a certificate that a graph $G$ can be embedded on~$\T_k$, $k>1$. 

\subsection{Unfolding $\T_k$ for $k\geq 1$}
\label{sec:unfolding-any-Tk}

The process described in the previous section for genus~1 can be generalized to larger genus $k\geq 1$, as follows. Again, let $G$ be a graph, and let $f$ be a 2-cell embedding of $G$ on~$\T_k$. 

\subsubsection{The Face-Duplication Phase}
\label{subsec:faceduplication}

Let $X^{(0)}=\T_k$. As for the torus, let $C_1$ be a non-separating orientable cycle of $G^{(0)}=G$, and let us consider the embedding of $G^{(1)}=G^{(0)}_{C_1}$  induced by~$f$, on  the  surface $X^{(1)}=X^{(0)}_{C_1}$ of genus~$k-1$. This operation can be repeated. Indeed, by Lemma~\ref{first_cycle}, there exists a non separating cycle $C_2$ of  $G^{(1)}$. The graph $G^{(2)}=G^{(1)}_{C_2}$ can be embedded on the surface $X^{(2)}=X^{(1)}_{C_2} $ with one face more than the number of faces of the embedding of $G^{(1)}$ on  $X^{(1)}$, and thus two more  faces than the number of faces of the embedding of $G$ on  $\T_k$. By Lemma~\ref{euler}, $X^{(2)}$ has thus genus~$k-2$. See Figure~\ref{fig:pierreT2}(a-b). 

This process can actually be iterated  $k$ times, resulting in a sequence of $k+1$ graphs $G^{(0)}, \dots, G^{(k)}$ where $G^{(0)}=G$, and a sequence of $k+1$ closed surfaces $X^{(0)}, \dots, X^{(k)}$ where $X^{(0)}=\T_k$. Each graph $G^{(i)}$ is embedded on the closed surface $X^{(i)}$ of genus~$k-i$, as follows. The embedding  of $G^{(0)}$ on $X^{(0)}$ is the embedding of $G$ on $X$, and, for every $i=0,\dots,k-1$,  the embedding of $G^{(i+1)}$ on $X^{(i+1)}$ is induced by the embedding of $G^{(i)}$ on $X^{(i)}$, after duplication of a non-separating cycle~$C_{i+1}$ of $G^{(i)}$ into two cycles $C'_{i+1}$ and $C''_{i+1}$. 

The closed surface $X^{(k)}$ is of genus~0, i.e.  $X^{(k)}$ is homeomorphic to the sphere $\T_0 = S^2$ (see  Figure~\ref{fig:pierreT2}(b)). The graph  $G^{(k)}$ is therefore planar, for it contains $k$ more faces  than the number of faces in~$G$, as two new faces $\phi'_i$ and $\phi''_i$ are created at each iteration~$i$, in replacement to one face $\phi_i$, for every $i=1,\dots,k$.

\subsubsection{The Face-Reduction Phase}
\label{subsec:facereduction}

The objective is now to replace the $2k$ faces $\phi'_i,\phi''_i$, $i=0,\dots,k-1$, by a single face. For this purpose, let us relabel these faces as $\psi_1,\dots,\psi_{2k}$ (see Figure~\ref{fig:pierreT2}(c)) so that, for $i=1,\dots,k$,
\[
\phi'_i=\psi_{2i-1}, \;\mbox{and}\; \phi''_i=\psi_{2i}.
\]
Let $\chi_1=\psi_1$. There exists a simple path~$P_1$ between the two  faces  $\chi_1$ and $\psi_2$.  Duplicating $P_1$ preserves the fact that the  graph $G^{(k+1)}=G^{(k)}_{P_1}$ can be embedded on the sphere~$\T_0$.  By this duplication, the two faces $\chi_1$ and $\psi_2$ are merged into a single face $\chi_2$. Now, there is a simple path~$P_2$ between the two  faces  $\chi_2$ and $\psi_3$ (see Figure~\ref{fig:pierreT2}(d)).  Again, duplicating $P_2$ preserves the fact that the  graph $G^{(k+2)}=G^{(k+1)}_{P_2}$ can be embedded on the sphere~$\T_0$, in which the two faces $\chi_2$ and $\psi_3$ are now merged into a single face $\chi_3$. By iterating this process, a finite sequence of graphs $G^{(k)},\dots, G^{(3k-1)}$ is constructed, where, for $i=0,\dots,2k-1$, the graph $G^{(k+i)}$ is coming with its embedding on~$\T_0$, and with a set of special faces $\chi_{i+1},\psi_{i+2},\dots,\psi_{2k}$. A path $P_{i+1}$ between $\chi_{i+1}$ and $\psi_{i+2}$ is duplicated for merging these two faces into a single face~$\chi_{i+2}$, while preserving the fact that $G^{(k+i+1)}=G^{(k+i)}_{P_{i+1}}$ can be embedded on the sphere~$\T_0$. 
 
Eventually, the process results in a single face $\phi^*=\chi_{2k}$ of $H^*=G^{(3k-1)}$ (see Figure~\ref{fig:pierreT2}(e)). This face contains all duplicated vertices. The embedding $f$ of $G$ on $\T_k$ induces a planar embedding of $H^*$ whose external face is~$\phi^*$ (see Figure~\ref{fig:pierreT2}(f)). 

\subsubsection{Certifying Genus at Most~$k$}
\label{subsec:global_certificate}

Conversely, for a graph $G$ of genus~$k$, an embedding of  $G$ on $\T_k$ can be  induced from the embedding $f^*$ of $H^*$ on $\T_0$, and from the boundary walk~$B^*$  of~$\phi^*$. The latter is indeed entirely determined  by the successive cycle- and path-duplications performed during the whole process. It contains all duplicated vertices, resulting from the cycles $C'_1,\dots ,C'_k$ and $C''_1,\dots ,C''_k$, and from the paths $P'_1, \dots, P'_{2k-1}$ and $P''_1, \dots, P''_{2k-1}$.  Note that the duplication process for a vertex may be complex. A vertex may indeed be duplicated once, and then one of its copies may be duplicated again, and so on, depending on which cycle or path is duplicated at every step of the process.  This phenomenon actually already occurred in the basic case of the torus~$\T_1$ where the duplications of $v_i$ and $v_j$ were more complex that those of the other vertices, and were also differing depending on whether $i=j$ or not (see Section~\ref{subsec:cas-du-tore}). Figure~\ref{fig:pierreT2-complex} illustrates a case in which two cycles $C_i$ and $C_j$ share vertices and edges in $\T_2$, causing a series of duplication more complex than the basic case illustrated on Figure~\ref{fig:pierreT2}. In particular, a same vertex of $H^*$  may appear several times the boundary walk $B^*$, and a same edge of $H^*$ may be traversed twice, once in each direction.  

Let $H$ be a splitting of a graph~$G$, let $f$ be a planar embedding of~$H$, and let~$\phi$ be a face of~$H$ embedded on~$\T_0$. Let 
$
B=(u_0,\dots,u_N)
$
be a boundary walk of $\phi$, and let $\vec{B}$ be an arbitrary reference direction given to~$B$, say clockwise. Let $\mathcal{U}=(U_0,\dots,U_{3k-1})$ be a sequence of graphs such that $U_0=G$, $U_{3k-1}=H$, and, for every $i\in\{0,\dots,3k-2\}$, $U_{i+1}$ is a 2-splitting of $U_i$. The splitting of $U_i$ into $U_{i+1}$ is denoted by $\sigma_i=(\alpha_i,\beta_i)$. The following extend the notion of global consistency defined in the case of the torus~$\T_1$.  We say that $(G,H,\vec{B},\mathcal{U})$, is \emph{globally consistent} if the following two conditions hold. 
\begin{enumerate}
\item  \textbf{Path-duplication checking.}  Let $\chi_{2k}=\phi$, with directed boundary walk $\vec{B}(\chi_{2k})=\vec{B}$. For every $i=0,\dots,2k-1$, there exist faces $\chi_{i+1},\psi^{(i)}_{i+2},\dots,\psi^{(i)}_{2k}$ of $U_{k+i}$, with respective directed boundary walks $\vec{B}(\chi_{i+1}),\vec{B}(\psi^{(i)}_{i+2}),\dots,\vec{B}(\psi^{(i)}_{2k})$, and there exist vertices $u^{(i)}_1,\dots,u^{(i)}_t$, $v^{(i)}_1,\dots,v^{(i)}_r$,  $w'^{(i)}_0, \dots, w'^{(i)}_s$, and $w''^{(i)}_0, \dots, w''^{(i)}_s$ of~$U_{k+i}$ such that 
\begin{itemize}
\item $ \vec{B}(\chi_{i+1})= (w'^{(i)}_0, \dots, w'^{(i)}_s,  v^{(i)}_1,  \dots, v^{(i)}_r,w''^{(i)}_s,  \dots, w''^{(i)}_0, u^{(i)}_1,\dots, u^{(i)}_t)$; 

\item for every vertex $x \in V(U_{k+i})\smallsetminus ( \{w'^{(i)}_0, \dots, w'^{(i)}_s\}\cup \{w''^{(i)}_0, \dots, w''^{(i)}_s\})$,  $\sigma_{k+i-1}(x)=x$;

\item for every $j \in\{0,\dots,s\}$, $|\sigma_{k+i-1}^{-1}(\{w'^{(i)}_j,w''^{(i)}_j\})|=1$;

\item $\vec{B}(\chi_i)=(x,u^{(i)}_1,\dots,u^{(i)}_t,x)$ where $x=\sigma_{k+i-1}^{-1}(\{w'^{(i)}_0,w''^{(i)}_0\})$; 

\item $\vec{B}(\psi^{(i-1)}_{i+1})=(y,v^{(i)}_1,\dots,v^{(i)}_r,y)$  where $y=\sigma_{k+i-1}^{-1}(\{w'^{(i)}_s,w''^{(i)}_s\})$; 

\item for $j=i+2,\dots,2k$, $\sigma_{k+i-1}(\vec{B}(\psi^{(i-1)}_j))=\vec{B}(\psi^{(i)}_j)$.
\end{itemize} 

\item \textbf{Cycle  duplication checking.} Let $\phi'^{(k)}_1=\chi_1$, and, for $i=2,\dots,k$, let $\phi'^{(k)}_i=\psi^{(0)}_{2i-1}$. For $i=1,\dots,k$, let $\phi''^{(k)}_i=\psi^{(0)}_{2i}$.  For every $i=1,\dots,k$, there exists faces $\phi'^{(i)}_1,\phi''^{(i)}_1,\dots, \phi'^{(i)}_i,\phi''^{(i)}_i$ of $U_i$ with respective directed boundary walks $\vec{B}(\phi'^{(i)}_1),\vec{B}(\phi''^{(i)}_1),\dots, \vec{B}(\phi'^{(i)}_i),\vec{B}(\phi''^{(i)}_i)$ such that 
\begin{itemize}
\item $\vec{B}(\phi'^{(i)}_{i})=(v'_0,v'_1,\dots,v'_r,v'_0)$ and  $\vec{B}(\phi''^{(i)}_{i})=(v''_0,v''_r,v''_{r-1},\dots,v''_1,v''_0)$ for some $r\geq 2$, with $|\sigma^{-1}_{i-1}(\{v'_j,v''_j\})|=1$ for every $j=0,\dots,r$; 
\item for $j=1,\dots,i-1$, $\sigma_{i-1}(\vec{B}(\phi'^{(i-1)}_j))=\vec{B}(\phi'^{(i)}_j)$, and $\sigma_{i-1}(\vec{B}(\phi''^{(i-1)}_j))=\vec{B}(\phi''^{(i)}_j)$.
\end{itemize}
\end{enumerate}
By the construction performed in Sections~\ref{subsec:faceduplication} and~\ref{subsec:facereduction}, for every graph $G$ of genus~$k$, $(G,H^*,\vec{B}^*,\mathcal{U}^*)$ is globally consistent, where $\mathcal{U}^*=(G^{(0)},\dots,G^{(3k-1)})$. The following result generalizes Lemma~\ref{lem:certif-T1-centralized} to graphs of genus larger than~1. 

\begin{lemma}\label{lem:certif-Tk-centralized}
Let $H$ be a splitting of a graph $G$, and assume that there exists a planar embedding~$f$ of $H$ with a face $\phi$  and a boundary walk $B$ of $\phi$. Let $\mathcal{U}=(U_0,\dots,U_{3k-1})$ be a series of graphs such that $U_0=G$, $U_{3k-1}=H$, and, for every $i\in\{0,\dots,3k-2\}$, $U_{i+1}$ is a 2-splitting of $U_i$. 
If  $(G,H,\vec{B},\mathcal{U})$ is globally consistent, then $G$ can be embedded on the torus~$\T_k$.
\end{lemma}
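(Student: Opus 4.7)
The plan is to reverse the unfolding process described in Sections~\ref{subsec:faceduplication} and~\ref{subsec:facereduction}, by proceeding inductively from $H=U_{3k-1}$ back to $G=U_0$, and showing that at each step we can embed $U_i$ on a closed surface whose genus grows by exactly one handle every $k$ steps of the reverse path-phase is completed and then every time a cycle is re-identified. Concretely, I would first introduce, for each $i$, the ``target surface'' $X_i$ on which $U_i$ is to be embedded: $X_i=\T_0$ for $i\geq k$, and $X_i=\T_{k-i}$ for $i\leq k$. The global consistency hypothesis gives us all the combinatorial data needed to carry out the inductive step.

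The first stage treats the path-duplications, i.e., the reverse steps from $U_{3k-1}$ down to $U_k$. The base case is the given planar embedding of $H=U_{3k-1}$, whose face $\phi=\chi_{2k}$ has directed boundary walk $\vec{B}$. Assuming, inductively, a planar embedding of $U_{k+i}$ with the $2k-i$ special faces $\chi_{i+1},\psi^{(i)}_{i+2},\dots,\psi^{(i)}_{2k}$ prescribed by global consistency, I would use the first condition to identify, inside the boundary walk of $\chi_{i+1}$, two sub-paths $(w'^{(i)}_0,\dots,w'^{(i)}_s)$ and $(w''^{(i)}_s,\dots,w''^{(i)}_0)$ whose corresponding vertices fuse under $\sigma_{k+i-1}^{-1}$ to a single path of $U_{k+i-1}$. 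The fusion operation is the inverse of the path-duplication described in Section~\ref{G_C}: it merges $\chi_{i+1}$ and $\psi^{(i-1)}_{i+1}$ into a single face $\chi_i$ of $U_{k+i-1}$, preserves planarity because it is simply the topological identification of two paths lying on the boundary of the same planar region, and recovers exactly the directed boundary walks prescribed for the other special faces by the consistency condition. Iterating $2k-1$ times yields a planar embedding of $U_k$ with exactly $2k$ special faces $\phi'^{(k)}_1,\phi''^{(k)}_1,\dots,\phi'^{(k)}_k,\phi''^{(k)}_k$.

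The second stage reverses the cycle-duplications, going from $U_k$ down to $U_0=G$. Here I invoke the second condition of global consistency: for every $i$, the two faces $\phi'^{(i)}_i$ and $\phi''^{(i)}_i$ have boundary walks $(v'_0,\dots,v'_r)$ and $(v''_0,v''_r,v''_{r-1},\dots,v''_1)$, and the pairs $\{v'_j,v''_j\}$ are precisely the preimages under $\sigma_{i-1}$ of single vertices of $U_{i-1}$. Given an embedding of $U_i$ on $X_i=\T_{k-i}$ with these $2i$ special faces, I would delete the interiors of $\phi'^{(i)}_i$ and $\phi''^{(i)}_i$, then glue the two boundary cycles along the prescribed vertex identifications. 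The crucial point is that the two boundary walks are traversed in \emph{opposite} cyclic directions relative to the chosen clockwise convention (one with increasing, the other with decreasing indices), so gluing them produces a \emph{handle} in the sense of Section~\ref{sec:surface}, not a cross-cap. This increases the genus by exactly one, and the result is an embedding of $U_{i-1}$ on $X_{i-1}=\T_{k-i+1}$ in which $\phi'^{(i-1)}_j$ and $\phi''^{(i-1)}_j$ remain special for $j<i$. Iterating $k$ times gives an embedding of $G$ on~$\T_k$.

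The main obstacle is precisely the orientation bookkeeping in the cycle-merging step: if one failed to enforce that the two boundaries of $\phi'^{(i)}_i$ and $\phi''^{(i)}_i$ are listed in opposite cyclic order, the identification could instead produce a cross-cap, yielding a non-orientable surface of incorrect demigenus. This is exactly what the asymmetric way the boundary walks are specified in the second condition of global consistency is designed to rule out, and what justifies the role of the reference direction $\vec{B}$ in defining consistency. All other verifications (planarity preservation in the path-fusion step, connectedness, and the fact that $2$-splittings do exactly the inverse of duplication) are straightforward once the combinatorial data is matched up as above.
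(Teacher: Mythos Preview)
Your proposal is correct and follows essentially the same two-stage approach as the paper: first undo the path-duplications to pass from $H=U_{3k-1}$ down to a planar $U_k$ with $2k$ special faces, then undo the cycle-duplications to glue pairs of faces into handles and reach an embedding of $G$ on~$\T_k$. Your discussion of the orientation bookkeeping (opposite cyclic orders guaranteeing a handle rather than a cross-cap) is more explicit than the paper's brief treatment, but the underlying argument is the same.
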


\begin{proof}
Condition~1 in the definition of global consistency enables to  recover a collection $\psi_1,\dots,\psi_{2k}$ of faces of $U_{k}$. These faces are inductively constructed, starting from the face $\phi$ of the planar embedding~$f$ of $U_{3k-1}=H$. At each iteration~$i$ of the induction, $U_{k+i-1}$ has faces $\chi_{i},\psi^{(i-1)}_{i+1},\dots,\psi^{(i-1)}_{2k}$ obtained from the faces $\chi_{i+1},\psi^{(i)}_{i+2},\dots,\psi^{(i)}_{2k}$  of $U_{k+i}$ by separating the face  $\chi_{i+1}$ into two faces $\chi_{i}$ and $\psi^{(i-1)}_{i+1}$ connected by a path, while preserving the other faces $\psi^{(i)}_{i+2},\dots,\psi^{(i)}_{2k}$. This operation preserves planarity, and thus, in particular, $U_k$ is planar. 

The directions of the boundary walks of the faces  $\psi_1,\dots,\psi_{2k}$  are inherited from the original direction given to the boundary walk~$B$. Condition~2 enables to iteratively merge face $\psi_{2i}$ with face $\psi_{2i-1}$, $i=1,\dots,k$, by identifying the vertices of their boundary walks while respecting the direction of these walks, which guarantees that handles are created (and not a Klein-bottle-like construction). The process eventually results in the graph $U_0$ with $k$ handles, providing an embedding of $U_0=G$ on~$\T_k$. 
\end{proof}

Thanks to Lemma~\ref{lem:certif-Tk-centralized}, the overall outcome of this section is that the tuple
\[
c=(H^*,f^*,\phi^*,B^*,\mathcal{U}^*)
\]
constructed in Sections~\ref{subsec:faceduplication} and~\ref{subsec:facereduction} is a certificate that $G$ can be embedded on~$\T_k$. This certificate~$c$ and its corresponding verification algorithm are  however centralized. In the next section, we show how to distribute both the certificate~$c$, and the verification protocol.

\section{Proof-Labeling Scheme for Bounded Genus Graphs}
\label{sec:tools}

In this section, we establish our first main result. 

\begin{theorem}\label{theo:main1}
Let $k\geq 0$, and let $\G^+_k$ be the class of graphs embeddable on an orientable closed surface of genus at most~$k$. There is a proof-labeling scheme for $\G^+_k$ using certificates on $O(\log n)$ bits in $n$-node graphs. 
\end{theorem}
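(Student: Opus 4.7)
The plan is to distribute the centralized certificate $c=(H^*,f^*,\phi^*,B^*,\mathcal{U}^*)$ produced by the unfolding of Section~\ref{sec:unfolding-a-surface} across the vertices of~$G$, using only $O(\log n)$ bits per vertex, and to design a local verifier that forces any accepted certificate assignment to encode a globally consistent tuple in the sense of Section~\ref{subsec:global_certificate}. By Lemma~\ref{lem:certif-Tk-centralized}, global consistency of $(G,H,\vec{B},\mathcal{U})$ implies that $G$ embeds on~$\T_k$, which yields soundness; completeness follows because, whenever $G\in\G^+_k$, the unfolding of Sections~\ref{subsec:faceduplication}--\ref{subsec:facereduction} actually produces such a~$c$, which the prover then distributes as described below.

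For each vertex $v\in V(G)$, the prover would include in $c(v)$: (i)~identifiers of the $O(1)$ avatars of~$v$ in each intermediate graph $U_i$ of~$\mathcal{U}^*$, together with pointers representing the restriction of each splitting $\sigma_i=(\alpha_i,\beta_i)$ to the neighborhood of~$v$; (ii)~a local fragment of the planar proof-labeling scheme of~\cite{FeuilloleyFRRMT} applied to~$H^*$ with distinguished face~$\phi^*$, one fragment per avatar; (iii)~for each cycle $C_i$ of the face-duplication phase and each path $P_j$ of the face-reduction phase that passes through an avatar of~$v$, the identifier of~$C_i$ or~$P_j$ together with a small distance-to-root label that enables reconstruction of the cyclic or linear order; and (iv)~for every occurrence of an avatar of~$v$ on the boundary walk~$\vec{B}^*$, the identifiers of the predecessor and successor along that occurrence, so that $\vec{B}^*$ can be followed locally by hopping from one avatar to the next.

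The verifier at each vertex~$v$ collects the certificates of its neighbors and performs, in a single communication round, the checks enforcing the two conditions of global consistency. Concretely, it verifies: (a)~the planar PLS of~\cite{FeuilloleyFRRMT}, augmented with tests certifying that the distinguished face $\phi^*$ has $\vec{B}^*$ as its boundary walk; (b)~the coherence of the splittings $\sigma_i$ with the matchings $\beta_i(e)$ for every edge~$e$ of~$G$ incident to~$v$; (c)~the cycle-duplication checking: for every $C_i$ passing through~$v$, the two copies $C'_i$ and $C''_i$ appear on~$\vec{B}^*$ with the reversed traversal orders required to form a handle (and not a Klein-bottle-like gluing), using the indices stored along~$C_i$; and (d)~the path-duplication checking: the two copies $P'_j$ and $P''_j$ connect the right pair of faces in the order prescribed by Condition~1 of Section~\ref{subsec:global_certificate}. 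Since $k$ is fixed and $v$ has $O(1)$ avatars, items (i)-(iv) each contribute $O(\log n)$ bits \emph{per occurrence}, and every test involves only the certificate of~$v$ and of its direct neighbors in~$G$.

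The main obstacle is that the boundary walk $\vec{B}^*$ has length $\Omega(n)$, the cycles $C_i$ and paths $P_j$ may intersect in complicated ways (as illustrated by Figure~\ref{fig:pierreT2-complex}), and vertices of $G$ may have unbounded degree, so naively storing all incidences of the cycles, paths, and walk at one endpoint could blow up the certificate size. The remedy is Lemma~\ref{genus-implies-degeneracy}: graphs in~$\G^+_k$ are $d$-degenerate for some $d=d(k)$, so the prover fixes an acyclic orientation of~$E(G)$ of maximum out-degree~$d$ and certifies it by a rank labelling on $O(\log n)$ bits, exactly as in the planar scheme. Every edge-based piece of information --- an occurrence of an edge on~$\vec{B}^*$, the incidence of an edge with some $C_i$ or $P_j$, or a matched pair in~$\beta_i$ --- is then stored at the out-endpoint of the edge, so each vertex carries only $O(1)$ such pieces. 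The verifier at~$v$ reconstructs the missing pieces by reading the certificates of its out-neighbors and in-neighbors, thereby completing the local checks of global consistency within the $O(\log n)$ budget, which proves the theorem.
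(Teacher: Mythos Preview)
Your proposal is correct and follows the same overall approach as the paper. The paper formalizes your items (i)--(iv) as a \emph{history} tree per vertex (Section~\ref{sec:histories}), whose internal nodes at every level~$\ell$ carry \emph{footprints} (predecessor/successor triples on the boundary walks of the special faces of~$U_\ell$) and \emph{types}, together with three propagation rules (Elementary, Extremity, Vacancy) relating footprints across consecutive levels; Lemma~\ref{lem:verification} is the formal statement that these level-by-level checks imply global consistency, and Lemma~\ref{lem:line-PLS} is the paper's version of your degeneracy/orientation trick.

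One imprecision worth flagging: in your check~(c) you write that $C'_i$ and $C''_i$ ``appear on~$\vec{B}^*$ with the reversed traversal orders''. After the $2k{-}1$ path duplications, the images of $C'_i$ and $C''_i$ in $H^*$ are in general not contiguous segments of~$\vec{B}^*$; they are interleaved with fragments of the duplicated paths, and some of their vertices have themselves been split again (cf.\ Figure~\ref{fig:pierreT2-complex}). The orientation-matching condition is therefore not a property of~$\vec{B}^*$ directly but of the reconstructed boundary walks of $\phi'^{(i)}_i$ and $\phi''^{(i)}_i$ in~$U_i$. This is precisely why the paper stores footprints at \emph{every} level of the history rather than only at the leaves: the level-by-level rollback (Conditions~3 and~4 of local consistency, Section~\ref{subsec:consistent}) is what makes the handle-versus-Klein-bottle distinction locally checkable. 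Your certificate contains enough information to carry this out (items (i) and (iii) together let each vertex rebuild its footprints at every level), but your verifier description should be rewritten to operate level by level rather than directly against~$\vec{B}^*$.
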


The proof essentially consists of showing how to distribute the centralized certificate 
\[
(H,f,\phi,B,\mathcal{U})
\]
used in Lemma~\ref{lem:certif-Tk-centralized} for a graph $G$, by storing $O(\log n)$ bits at each vertex of~$G$, while allowing the vertices to locally verify the correctness of the distributed certificates, that is, in particular, verifying that $(G,H,B,\mathcal{U})$ is globally consistent. The rest of the section is entirely dedicated to the proof of Theorem~\ref{theo:main1}.  We start by defining the core of the certificates assigned to the nodes, called \emph{histories}. Then, we  show how to distribute the histories so that every node stores at most $O(\log n)$ bits, and we describe the additional information to be stored in the certificates for enabling the liveness and completeness properties of the verification scheme to hold. Recall that the nodes of $G$ are given arbitrary distinct IDs picked from a set of polynomial range. The ID of node~$v\in V(G)$ is denoted by $\id(v)$. Note that $\id(v)$ can be stored on $O(\log n)$ bits. 

\subsection{Histories}
\label{sec:histories}

The description of the certificates is for positive instances, that is, for graphs $G\in \G^+_k$. For such an instance~$G$, the  prover performs the construction of Section~\ref{sec:unfolding-any-Tk}, resulting in the series of 2-splitting graphs $G^{(0)}=G,G^{(1)},\dots,G^{(2k-2)}, G^{(2k-2)}=H^*$, a planar embedding $f$ of $H^*$, and the identification of a special face $\phi^*$ in this embedding, with boundary walk~$B^*$. The successive duplications experienced by a vertex~$v$ of the actual graph $G$ during the face-duplication and face-reduction phases resulting in $H^*$ can be encoded as a rooted binary tree unfolding these duplications, called \emph{history}. 

For every vertex~$v$ of~$G$, the history of~$v$ is denoted by~$\hist(v)$. The history of $v$ is a rooted binary tree of depth $3k-1$ (all leaves are at distance $3k-1$ from the root). For $\ell=0,\dots,3k-1$, the level~$\ell$ of $\hist(v)$ consists of the at most $2^\ell$ nodes at distance~$\ell$ from the root. The internal nodes of $\hist(v)$ with two children are call \emph{binary} nodes, and the internal nodes with one child are called \emph{unary}. 
\begin{itemize}
\item  For $\ell=0,\dots,k-1$, the edges connecting nodes of level~$\ell$ to nodes of level~$\ell+1$ are corresponding to the duplication of the cycle $C_{\ell+1}$ in $G^{(\ell)}$ (cf. Section~\ref{subsec:faceduplication}), and, 
\item for $\ell=0,\dots,2k-1$, the edges connecting nodes of level~$k+\ell$ to nodes of level~$k+\ell+1$ are corresponding to the duplication of the path $P_{\ell+1}$ in $G^{(k+\ell)}$  (cf. Section~\ref{subsec:facereduction}).
\end{itemize}
The nodes of~$\hist(v)$ are provided with additional information, as follows. 

\subsubsection{Vertices and Adjacencies in the Splitting Graphs} 

For every $\ell=1,\dots,3k-1$, every node $x$ at level~$\ell$ in $\hist(v)$ is provided with the vertex $u$ of $G^{(\ell)}$ it corresponds to, after the duplications of~$v$ corresponding to the path from the root to~$x$. In particular,  each leaf of $\hist(v)$ is provided with the single vertex of $H^*=G^{(3k-1)}$ it corresponds to. Specifically, each internal node~$x$ of $\hist(v)$ is provided with the set $S_x$ of vertices of $H^*$ marked at the leaves of the subtree of $\hist(v)$ rooted at~$x$. For a leaf $x$, $S_x=\{u\}$, where $u$ is the avatar of $v$ in $H^*$ corresponding to the path from the root to the leaf~$x$. Note that, for two distinct nodes at level $\ell$ in $\hist(v)$, we have  $S_x \cap S_y=\varnothing$. 

The $3k-1$ splittings successively performed starting from~$G$ are 2-splittings, from which it follows that every vertex of $G$ is split a constant number of times for a fixed~$k$. The $\nu\geq 1$ avatars of $v\in V(G)$ is $H^*$ are labeled $(\id(v),1),\dots,(\id(v),\nu)$. It follows that the $\nu$ leaves of $\hist(v)$ are respectively labeled $(\id(v),1),\dots,(\id(v),\nu)$. For every node $x$ of $\hist(v)$, each set $S_x$ is a subset of $\{(\id(v),1),\dots,(\id(v),\nu)\}$, and thus these sets $S_x$ can be stored on $O(\log n)$ bits.  

Every node $x$ of $\hist(v)$ at level $\ell\in\{0,\dots,3k-1\}$, which, as explained above, corresponds to a vertex of $G^{(\ell)}$, is also provided with the set $N_x$ of the neighbors of $S_x$ in~$G^{(\ell)}$. The set $N_x$ has the form $N_x=\{X_1,\dots,X_d\}$ for some $d\geq 1$, where, for $i=1,\dots,d$, $X_i$ is a vertex of $G^{(\ell)}$ corresponding to a set of avatars in $H^*$ of some neighbor $w$ of $v$ in~$G$. 

Since some vertices $v\in V(G)$ may have arbitrarily large degree (up to $n-1$), the sets $N_x$ may not be storable using $O(\log n)$ bits. As a consequence, some histories may not be on $O(\log n)$ bits, and may actually be much bigger. Nevertheless, a simple trick using the fact that graphs with bounded genus have bounded degeneracy (cf. Lemma~\ref{genus-implies-degeneracy})  allows us to reassign locally the set $N_x$ in the histories so that every node of $G$ stores $O(\log n)$ bits only. 

\subsubsection{Footprints }

Every node $x$ of $\hist(v)$ at level $\ell\in\{0,\dots,3k-1\}$ is provided with a (possibly empty) set $F_x$ of ordered triples of the form $(X,Y,Z)$ where $X\in N_x$, $Y=S_x$, and $Z\in N_x$, called \emph{footprints}. Intuitively, each footprint encodes edges $\{X,Y\}$ and $\{Y,Z\}$ of $G^{(\ell)}$ occurring in:
\begin{itemize}
\item  a boundary walk of one of the faces $\phi'_i$ or $\phi''_i$, $i=1,\dots,\ell$, if $\ell\leq k$, or
\item  a boundary walk of one of the faces $\chi_{\ell-k},\psi_{\ell-k+1},\dots,\psi_{2k}$, otherwise.
\end{itemize}
Note that these two edges are actually directed, from  $X$ to $Y$, and from $Y$ to $Z$, reflecting that the boundary walk is traveled in a specific direction, inherited from some a priori direction, say clockwise, given to the boundary walk $B^*$ of the face $\phi^*=\chi_{2k}$ (hence the terminology ``footprints''). 

Note that a same vertex of $G^{(\ell)}$ may appear several times in the boundary walk of a face, and a same edge may appear twice, once in every direction. Therefore, a same node $x$ of $\hist(v)$ may be provided with several footprints, whose collection form the set~$F_x$, which may be of non-constant size. On the other hand, for a fixed~$k$, a constant number of boundary walks are under concern in total, from which it follows that even if a node $x$ at level $\ell$ of $\hist(v)$ must store a non-constant number of footprints in~$F_x$, each of $x$'s incident edges in~$G^{(\ell)}$ appears in at most  two footprints of $F_x$. We use this fact, together with the bounded degeneracy of the graphs of bounded genus, for reassigning locally the sets $F_x$ in the histories so that every node of $G$ stores $O(\log n)$ bits only. 

\subsubsection{Types } 

Last, but not least, for every node $x$ of $\hist(v)$, each of the two (directed) edges $(X,Y)$ and $(Y,Z)$ in every footprint $(X,Y,Z)$ in $F_x$ also comes with a \emph{type} in 
\[
\type_k=\{C'_1,\dots,C'_k,C''_1,\dots,C''_k,P'_1,\dots,P'_{2k-1},P''_1,\dots,P''_{2k-1}\},
\]
which reflects when this edge was created during the cycle- and path-duplications. 

\paragraph{Example.} 

Figure~\ref{fig:exemple-historie} provides examples of histories for some vertices of $G$ in the case displayed on Figure~\ref{fig:cutpathQ}. Figure~\ref{fig:exemple-historie}(a-b) display the histories of $v_i$ and $v_j$ whenever $i\neq j$, while Figure~\ref{fig:exemple-historie}(c)  displays the histories of $v_i=v_j$ whenever $i=j$. In this latter case, the leaves $w'_0,w''_0,w'_s,w''_s$ may be labeled as $(\id(v),1),(\id(v),2),(\id(v),3),(\id(v),4)$, respectively. Then $v'_i$ is labeled $S_{v'_i}=\{(\id(v),1),(\id(v),2)\}$, while $v''_i$ is labeled $S_{v''_i}=\{(\id(v),3),(\id(v),4)\}$, and root is labeled $S_{v_i}=\{(\id(v),1),(\id(v),2),(\id(v),3),(\id(v),4)\}$. The neighborhoods $N_x$ of these nodes $x$ of $\hist(v_i)$ are depending on the graphs $G^{(0)}=G,G^{(1)}$, and $G^{(2)}=H^*$. Assuming that $B^*$ is directed clockwise, as displayed on Figure~\ref{fig:cutpathQ}(d), the leaf $w'_0$ is provided with footprint $(v'_{i-1},w'_0,w'_1)$ while the  leaf $w''_0$ is provided with footprint $(w''_1,w''_0,v'_{i+1})$. Similarly, $w'_s$ and $w''_s$ are respectively provided with footprint $(w'_{s-1},w'_s,v''_{i-1})$ and $(v''_{i+1}, w''_s,w''_{s-1})$, where the various nodes in these footprints are encoded depending on their labels in $H^*$, which depend on the IDs given to the neighbors of~$v_i$ in~$G$. The footprint at $v'_i$ is $(v'_{i-1},v'_i,v'_{i+1})$, while the footprint at $v''_i$ is $(v''_{i+1},v''_i,v'_{i-1})$. In both case, the directions of the edges are inherited from the initial clockwise direction of the boundary walk~$B^*$. The directed edges $(v'_{i-1},v'_i)$ and $(v'_i,v'_{i+1})$ receives type $C'_1$, while the directed edges $(v''_{i+1},v''_i)$ and $(v''_i,v''_{i-1})$ receives type $C''_2$. The four edges $(v'_{i-1},w'_0)$,  $(w''_0,v'_{i+1})$, $(v''_{i+1},w''_s)$, and $(w'_s,v''_{i-1})$ are respectively inheriting the types $C'_1,C'_1,C''_1$, and $C''_1$ of the four edges $(v'_{i-1},v'_i)$, $(v'_i,v'_{i+1})$, $(v''_{i+1},v''_i)$, and $(v''_i,v''_{i-1})$. The directed edges $(w'_0,w'_1)$ and $(w'_{s-1},w'_s)$ receive type~$P'_1$, while the directed edges $(w''_1,w''_0)$ and $(w''_s,w''_{s-1})$ receive type~$P''_1$. Observe that the footprints are constructed upward the histories, while the types are assigned downward those trees. 

\begin{figure}[tb]
\input{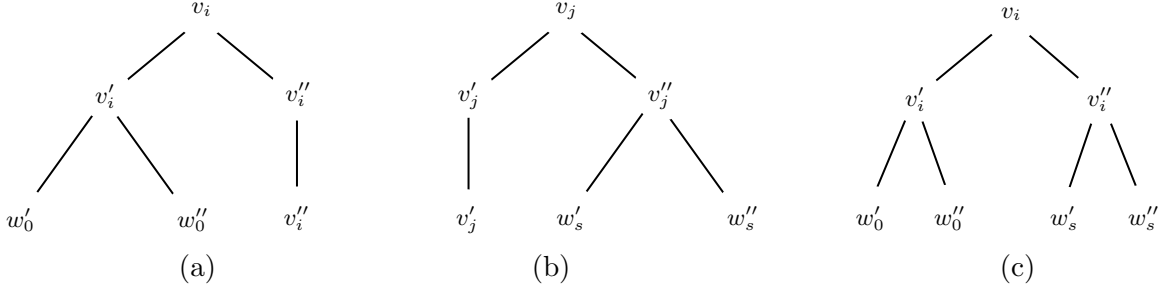}
\caption{Examples of histories.}
\label{fig:exemple-historie} 
\end{figure} 

\medskip

We now detail how  the footprints are constructed in general, and how the types are assigned to the edges of the footprints.

\subsubsection{Construction of the Footprints}

 Let us give an arbitrary orientation, say clockwise, to the boundary walk $B^*$ of the special face $\phi^*$ of $H^*$. This orientation induces footprints $(\p(u),u,\s(u))\in F_x$ given to every leaf~$x$ of every history~$\hist(v)$, $v\in V(G)$. The vertex $\p(u)\in V(H^*)$ is the predecessor of the avatar $u\in V(H^*)$ of $v$ in~$H^*$, and $\s(u) \in V(H^*)$ is its successor. Note that some leaves $x$ have $F_x=\varnothing$, whenever the corresponding node $u$ in $H^*$ does not belong to the boundary walk $B^*$. On the other hand, as a same node can be visited several times when traveling along the boundary walk $B^*$, some leaves may be given several footprints  $(\p_1(u),u,\s_1(u)),\dots,(\p_d(u),u,\s_d(u))$ in $F_x$, for some $d\geq 1$. The footprints provided to the internal nodes of the histories of the vertices of~$G$ are given in a way consistent with the orientation of~$B^*$. More specifically, the footprints are constructed upward the histories, as follows. 

Hereafter, the symbol ``$\xrightarrow{\ell}$'' stands for the operation performed when going from level $\ell-1$ to level~$\ell$, or vice-versa, from level $\ell$ to level $\ell-1$. For instance, for three sets $S,S',S''$ of vertices from~$H^*$, the relation 
\[
S \xrightarrow{\ell} S',S''
\]
states that the vertices $S'$ and $S''$ of $G^{(\ell)}$ are the results of a cycle- or path-duplication experienced by the vertex $S$ occurring from $G^{(\ell-1)}$ to $G^{(\ell)}$, i.e., the vertex $S=S'\cup S''$ of $G^{(\ell-1)}$ is split into two avatars, $S'$ and $S''$, in $G^{(\ell)}$. If $\ell\leq k$, the split was caused by a cycle-duplication, otherwise it was caused by a path-duplication. Similarly, for two footprints $F'$ and $F''$ at two nodes at level $\ell$, children of a same binary node, the relation
\[
F' , F'' \xrightarrow{\ell} F
\]
states that, when going upward an history, the two footprints $F'$ and $F''$ of level $\ell$ generate the footprint  $F$ at level $\ell-1$.  

Three rules, called \emph{Elementary, Extremity}, and \emph{Vacancy}, are applied for the construction of the footprints. Their role is to ``role back'' the boundary walk $B^*$ of the special face~$\phi^*$ in the planar embedding of~$H^*$. Each edge of the boundary walk $B^*$ is indeed resulting from some duplication, of either a cycle or a path. The footprints encode the histories of all edges of the boundary walk  $B^*$ in all graphs $G^{(\ell)}$, $0\leq \ell \leq 3k-1$, including when the edges were created (referred to as the \emph{types} of the edges), and what were their successive extremities when those extremities are duplicated. 

\begin{description}
\item[Elementary rule.]  Assuming $X \xrightarrow{\ell} X' ,X'' , \; Y \xrightarrow{\ell} Y', Y'', \; \mbox{and} \; Z \xrightarrow{\ell} Z',Z''$, the elementary rule matches two footprints of two children $Y'$ and $Y''$, and produces none at the parent~$Y$: 
\[
(X',Y',Z'), \; (Z'',Y'',X'') \xrightarrow{\ell}  \bot.
\]
The Elementary rule applies to the case of cycle duplication, as well as to the case of path-duplication, but to the internal nodes of the path only (see Figure~\ref{fig:elementary}). When two cycles are merged (as the opposite to cycle duplication), their faces are glued together, and  disappear. Similarly, when two paths are merged (as the opposite of path-duplication), the resulting path is of no use, and it can be discarded. Note that the two footprints $(X',Y',Z')$ and $(Z'',Y'',X'')$ are ordered in opposite directions. This matches the requirement for correctly glueing the borders of two faces in order to produce a handle (see Figures~\ref{fig:hcc} and~\ref{fig:duplication-simple}). This also matches the way the two copies of a path $P_i$ are traversed when traveling along the boundary walk $B^*$ in clockwise direction (cf. Eq.~\eqref{eq:boundwalk} and Figure~\ref{fig:path-duplication-simple}). 

\begin{figure}[tb]
\begin{center}
\input{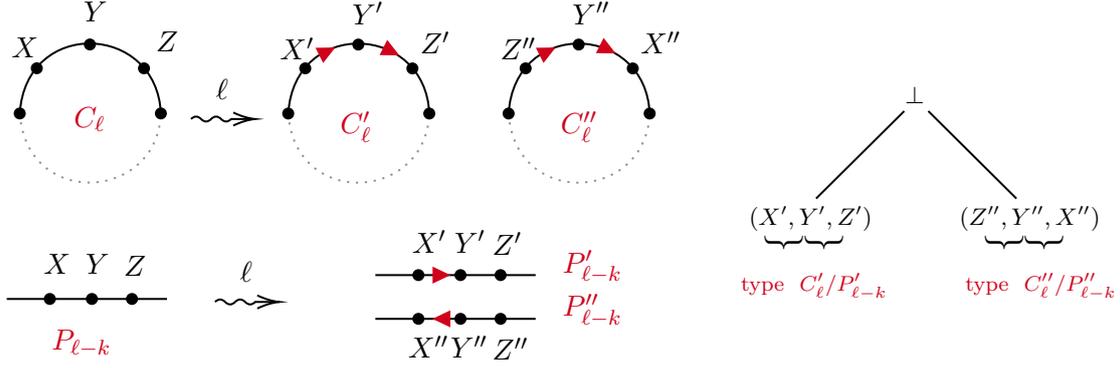}
\caption{Footprint construction, and type assignment: Elementary rule.}
\label{fig:elementary} 
\end{center}
\end{figure} 

\item[Extremity rule.]  This rule applies only for levels $\ell>k$. It has two variants, defined below. 
\begin{description}
\item[{\it Single extremity rule.}] Assuming $X'\xrightarrow{\ell} X', \;  X''\xrightarrow{\ell} X'', \; Y \xrightarrow{\ell} Y,|Y''$, and $Z \xrightarrow{\ell} Z',Z''$, the single extremity rule matches two footprints of two children $Y'$ and $Y''$, and produces one footprint at the parent $Y$: 
\[
 (X',Y',Z'), \;  (Z'',Y'',X'')  \xrightarrow{\ell}  (X',Y,X'').
\] 
\item[{\it Double extremity rule.}] Assuming $X'\xrightarrow{\ell} X', \; X''\xrightarrow{\ell} X'', \; Y \xrightarrow{\ell} Y',Y'', \; Z' \xrightarrow{\ell} Z'$, and $Z'' \xrightarrow{\ell} Z''$,  the double extremity rule matches two footprints of two children $Y'$ and $Y''$, and produces two footprints at the parent $Y$: 
\[
(X',Y',Z'), \;  (Z'',Y'',X'')  \xrightarrow{\ell}  \big \{(X',Y,X''), (Z'',Y,Z')  \big \}.
\]
The Extremity rule refers to path duplication only (i.e., to levels $\ell>k$), as displayed on Figure~\ref{fig:extremity}. It is dedicated to the extremities of the path considered at this phase (see Figure~\ref{fig:path-duplication-simple}). The Single extremity rule (cf. Figure~\ref{fig:extremity}(a)) handles the standard case in which the path is not trivial (i.e., reduced to a single vertex), whereas the Double extremity rule  (cf. Figure~\ref{fig:extremity}(b))  handles the case in which the path connecting two faces is reduced to a single vertex~$Y$ (i.e., the two corresponding cycles share at least one vertex~$Y$). Then only the vertex~$Y$ is split during the path duplication, while its four neighbors $X',X'',Z'$, and $Z'$ remain intact. 
\end{description}

\begin{figure}[tb]
\centering
\input{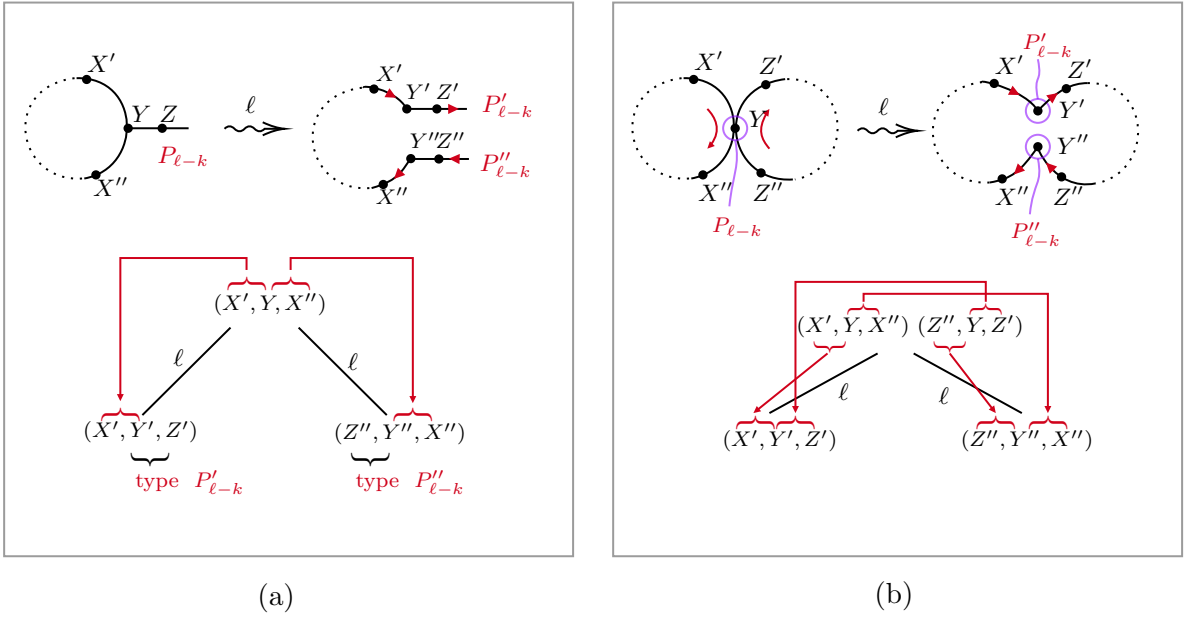}
\caption{Footprint construction, and type assignment: Extremity rule.}
\label{fig:extremity} 
\end{figure} 

\item[Vacancy rule.]  The vacancy rule simply forwards  a footprint upward: 
\[
(X',Y,'Z') \xrightarrow{\ell}   (X,Y,Z)
\]
with $X \xrightarrow{\ell} X',X''$ (resp., $Y  \xrightarrow{\ell} Y',Y''$, and $Z \xrightarrow{\ell} Z' ,Z''$), unless $X \xrightarrow{\ell} X$ (resp., $Y \xrightarrow{\ell} Y$, and $Z \xrightarrow{\ell} Z$), in which case $X=X'$ (resp., $Y=Y'$, and $Z=Z'$). 

The Vacancy rule handles the case where one of the twin nodes carries a footprint $(X',Y',Z')$ (resp., $(X'',Y'',Z'')$), which is copied to the parent node, after updating the vertices in case the latter experienced duplications (see Figure~\ref{fig:vacancy}). 

\begin{figure}[tb]
\centering
\input{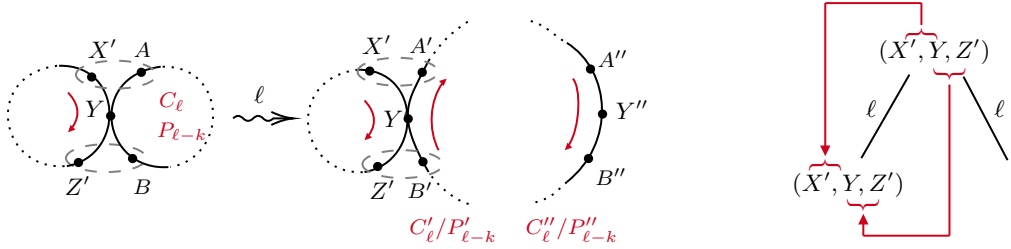}
\caption{Footprint construction, and type assignment: Vacancy rule.}
\label{fig:vacancy} 
\end{figure} 

\end{description}

\subsubsection{Assigning Types to Footprints} 

The types in $\type_k$ are assigned to the edges of the footprints, downwards the histories, as follows. 
\begin{itemize}
\item If the footprints $(X',Y',Z')$ and $(Z'',Y'',X'')$ are matched by application of the Elementary rule at level $\ell$, then the two (directed) edges $(X',Y')$ and $(Y',Z')$ (resp., $(Z'',Y'')$ and $(Y'',X'')$) of $G^{(\ell)}$  are given type $C'_\ell$ (resp., $C''_\ell$) if $\ell\leq k$, and $P'_{\ell-k}$ (resp. $P''_{\ell-k}$) otherwise. See Figure~\ref{fig:elementary}. 
\item If the footprints $(X',Y',Z')$ and $(Z'',Y'',X'')$ are matched by application of the Single extremity rule at level $\ell$, then the two edges  $(X',Y)$ and $(Y,X'')$ adopt the types of the edges  $(X',Y')$ and $(Y'',X'')$, respectively, while the two edges $(Y',Z')$ and $(Z'',Y'')$ are given type~$P'_{k-\ell}$ and $P''_{k-\ell}$, respectively. See Figure~\ref{fig:extremity}(a). 
\item If the footprints $(X',Y',Z')$ and $(Z'',Y'',X'')$ are matched by application of the Double extremity rule at level $\ell$, then the four edges $(X',Y')$, $(Y',Z')$, $(Z'',Y'')$, and $(Y'',X'')$ adopt the types of the edges $(X',Y)$, $(Y,Z')$, $(Z'',Y)$, and $(Y,X'')$, respectively. See Figure~\ref{fig:extremity}(b)
\item If the footprint $(X',Y,'Z')$ is forwarded upward as $(X,Y,Z)$ by application of the Vacancy rule, then  $(X',Y')$, and $(Y',Z')$ adopt the types of the edges $(X,Y)$, and $(Y,Z)$, respectively. See Figure~\ref{fig:vacancy}.
\end{itemize}

\bigbreak

We have now all the ingredients to state what will be proved as sufficient to certify that a graph $G$ has genus at most~$k$. 

\subsection{Assignment of the Histories to the Certificates}
\label{sec:linePLS}

As it was mentioned in Section~\ref{sec:histories}, the history $\hist(v)$ of a node $v$ of the actual graph $G$ may not be on $O(\log n)$ bits. The reason for that is that, even if $G$ has bounded genus~$k$, the node $v$ may have an arbitrarily large degree. As a consequence, the sum of  the degrees of its $v$'s avatars in each of the graphs $G^{(0)}, \dots,G^{(3k-1)}$ may be arbitrarily large. This has direct consequences not only on the memory requirement for storing the neighborhood $N_x$ of each node $x\in \hist(v)$, but also on the number of footprints to be stored in~$F_x$. In both cases, this memory requirement may exceed $O(\log n)$ bits. On the other hand, every graph $G$ of bounded genus is sparse, which implies that the average degree of~$G$, and of all its splitting graphs $G^{(0)}, \dots,G^{(3k-1)}$ is constant. Therefore, the average memory requirement per vertex~$v$ for storing all the histories~$\hist(v)$, $v\in V(G)$, is constant. Yet, it remains that some vertices~$v\in V(G)$ may have large histories, exceeding $O(\log n)$ bits. 

The simple trick under this circumstances (cf., e.g., \cite{FeuilloleyFRRMT}) is to consider the space-complexity of the histories not per node of~$G$, but per edge. Indeed, the space-complexity of the information related to each edge $e$ of $G$, as stored in the histories, is constant, for \emph{every} edge~$e$. For instance, at a node~$x$ of level $\ell$ in some historie~$\hist(v)$, instead of storing $N_x$ at~$v$, one could virtually store every edge $\{S_x,S_y\}$, $S_y\in N_x$, on the edge $\{v,w\}$ of~$G$, where~$w$ is the neighbor of~$v$ in~$G$ with avatar~$S_y$ in~$G^{(\ell)}$. 

Let us define a \emph{line} proof-labeling scheme as a proof-labeling scheme in which certificates are not only assigned to the vertices of~$G$, but also to the edges of~$G$ (i.e., to vertices of the line-graph of~$G$). In a line proof-labeling scheme, the vertices forge their decisions not only on their certificates and on the certificates assigned to their adjacent vertices, but also on the certificates assigned to their incident edges. Our interest for the concept of line proof-labeling scheme is expressed in the following result, after having recalled that, thanks to Lemma~\ref{genus-implies-degeneracy}, every graph of genus at most $k$ is $d$-degenerate for some constant~$d$ depending on~$k$. 
 
\begin{lemma}\label{lem:line-PLS}
Let $f:\N\to \N$ such that $f(n)\in \Omega(\log(n))$. Let $d\geq 1$, and let $\G$ be a graph family such that every graph in $\G$ is $d$-degenerate. If $\G$ has a line proof-labeling scheme with certificate size $O(f(n))$ bits, then $\G$ has a proof-labeling scheme with certificate size $O(f(n))$ bits.
\end{lemma}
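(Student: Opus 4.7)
The plan is to leverage the $d$-degeneracy of graphs in $\G$: such graphs admit an orientation of their edges in which every vertex has out-degree at most~$d$ (this is the well-known equivalence between $d$-degeneracy and the existence of such an orientation, obtained by processing vertices in degeneracy order and orienting every edge from the currently-removed vertex to its still-present neighbors). Given such an orientation, each edge certificate provided by the line PLS can be stored at a single endpoint (the tail of the edge), and the opposite endpoint (the head) can read it from the neighbor. Since $d$ is constant, each vertex stores only $O(1)$ edge certificates, a total of $O(f(n))$ bits, matching the target bound.

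Concretely, for a graph $G\in\G$ with line-PLS certificates $\ell(v)$ at each vertex $v$ and $\ell(e)$ at each edge $e$, the prover (i)~computes an orientation $\vec{E}$ of $G$ with maximum out-degree at most~$d$, and (ii)~assigns to each vertex $v$ the certificate
\[
c(v) \;=\; \big(\,\ell(v),\; O_v,\; \{\ell(e) : e\in O_v\}\,\big),
\]
where $O_v\subseteq E(G)$ is the set of edges outgoing from $v$ in $\vec{E}$, encoded by listing the IDs of the other endpoints (at most $d$ IDs, hence $O(d\log n)$ bits). The verifier at $v$ then performs three checks: (a)~$|O_v|\leq d$; (b)~for every neighbor $w$, exactly one of $v,w$ lists the edge $\{v,w\}$ in its out-edge set; (c)~the line-PLS verifier accepts at $v$, using $\ell(v)$, the $\ell(w)$'s read from neighbors' certificates, and the $\ell(e)$'s of all incident edges---those in $O_v$ being read from $v$'s own certificate, the others being read from the corresponding neighbor's certificate.

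Completeness is immediate: for $G\in\G$, the orientation $\vec{E}$ exists, and the honest prover passes all three checks. For soundness, suppose $G\notin\G$. If $G$ is $d$-degenerate, then checks (a) and (b) may be satisfied, but by soundness of the line PLS itself some vertex must reject at check~(c). If $G$ is \emph{not} $d$-degenerate, then no orientation of $E(G)$ has maximum out-degree at most~$d$, so regardless of the (possibly dishonest) choice of the sets $O_v$, either two adjacent vertices disagree on the orientation of a common edge (failing (b)) or some vertex has $|O_v|>d$ (failing (a)). Finally, the certificate size is $O(f(n))+O(d\log n)+O(d\cdot f(n))=O(f(n))$, using $f(n)=\Omega(\log n)$ and $d=O(1)$.

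The only delicate point is the soundness argument for non-$d$-degenerate graphs, which relies on the fact that a graph admits an orientation of maximum out-degree~$d$ if and only if it is $d$-degenerate; once this standard equivalence is invoked, ``cheating'' by the prover on the orientation is provably impossible. There is no significant obstacle beyond carefully spelling out that the edge-information read from a neighbor's certificate is well-defined (indeed, for each incident edge~$e$ of~$v$ not in~$O_v$, check~(b) guarantees that exactly one neighbor~$w$ stores $\ell(e)$), which makes the extraction of all incident edge certificates at~$v$ well-defined and consistent.
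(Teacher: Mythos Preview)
Your approach is essentially the same as the paper's: use $d$-degeneracy to orient the edges with bounded out-degree, store each edge certificate at its tail, and let each vertex recover incident edge certificates from itself and its neighbors. The paper phrases this as an iterative peeling construction rather than an orientation, but the content is identical.

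There is, however, a false claim in your soundness argument. You assert that ``a graph admits an orientation of maximum out-degree~$d$ if and only if it is $d$-degenerate,'' and use this to argue that checks~(a) and~(b) must fail when $G$ is not $d$-degenerate. The forward implication is fine, but the converse is wrong: any $2d$-regular graph has an Eulerian orientation with every out-degree equal to~$d$, yet such a graph has degeneracy~$2d$, not~$d$. (Concretely, $K_4$ is not $2$-degenerate but admits an orientation with maximum out-degree~$2$.) So a dishonest prover on a non-$d$-degenerate $G\notin\G$ may well pass checks~(a) and~(b).

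Fortunately the case split is unnecessary and the fix is immediate. For soundness, simply argue: if checks~(a) and~(b) pass at every vertex, then the sets $O_v$ define a genuine orientation of $E(G)$, so every vertex can unambiguously read the certificate of every incident edge; hence check~(c) faithfully simulates the line-PLS verifier on some edge/vertex certificate assignment, and since $G\notin\G$, line-PLS soundness forces some vertex to reject at~(c). No appeal to degeneracy of $G$ is needed on the soundness side at all. (Check~(a) is then harmless but also not essential for correctness; it only enforces the size bound, which in any case is a prover-side guarantee.)
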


\begin{proof}
Let $(\mathbf{p},\mathbf{v})$ be line proof-labeling scheme for $\G$. For $G\in \G$, the prover $\mathbf{p}$ assigns certificate $\mathbf{p}(v)$ to every node $v\in V(G)$, and certificate $\mathbf{p}(e)$ to every edge $e\in V(G)$. Since $G$ is $d$-degenerate, there exists a node $v$ of $G$ with degree $d_v\leq d$. Let $c(v)$ be the certificate  of $v$ defined as 
\[
c(v)=\Big(\mathbf{p}(v),\big \{(\id(u_1),\mathbf{p}(e_1)),\dots, (\id(u_{d_v}),\mathbf{p}(e_{d_v})\big\}\Big),
\]
where $u_1,\dots,u_{d_v}$ are the $d_v$ neighbors of $v$ in~$G$, and, for every $i=1,\dots,d_v$, $e_i=\{v,u_i\}$. Since the IDs can be stored on $O(\log n)$ bits, and since $f(n)\in\Omega(\log n)$, we get that $c(v)$ can be stored on $O(f(n))$ bits. This construction can then be repeated on the graph $G'=G-v$, which still has degeneracy at most~$d$. By iterating this construction, all nodes are exhausted, and assigned certificates on $O(f(n))$ bits, containing all the information originally contained in the node- and edge-certificates assigned by~$\mathbf{p}$. We complete the proof by observing that, for every edge~$e=\{u,v\}$ of~$G$, the certificate  $\mathbf{p}(e)$ assigned by $\mathbf{p}$ to~$e$ can be found either in $c(u)$ or in~$c(v)$. This suffices for simulating the behavior of $\mathbf{v}$, and thus for the design of a standard proof-labeling scheme for~$\G$. 
\end{proof}

\subsection{Certifying Planarity}\label{subsec:planarity}

In this section, we show how to certify that $H$ is a planar embedding with a special face~$\phi$ with boundary walk~$B$. For this purpose, we just need to slightly adapt a recent proof-labeling scheme for planarity~\cite{FeuilloleyFRRMT}.

\begin{lemma}\label{lem:PODC2020}
There exists a proof-labeling scheme for certifying that a given graph $H$ has a planar embedding~$f$, including a face~$\phi$ with boundary walk~$B$. 
\end{lemma}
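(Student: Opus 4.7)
The plan is to adapt the proof-labeling scheme for planarity of~\cite{FeuilloleyFRRMT}, which certifies planarity using a distributed encoding of a combinatorial planar embedding (i.e., a rotation system specifying the cyclic order of incident edges at each vertex), together with auxiliary data certifying faces via spanning-tree-style arguments, and which, thanks to the bounded degeneracy of planar graphs (Lemma~\ref{lem:line-PLS}), can be implemented with $O(\log n)$-bit certificates per vertex. Recall that each pair of consecutive edges in the cyclic order at a vertex, called an \emph{angle}, lies on the boundary of exactly one face. Once a correct rotation system is certified, the face structure of the embedding is combinatorially determined, so the only additional task is to single out one face $\phi$ and certify that its boundary walk is~$B$.

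For this, I would enrich the edge-endpoint certificates so that, for every position $i\in\{0,\dots,N\}$ of the walk $B=(u_0,\dots,u_N)$ (indices modulo $N+1$), the endpoint of the edge $\{u_{i-1},u_i\}$ at $u_i$ carries an annotation $(i,\id(u_0),\rightarrow)$, and the endpoint of the edge $\{u_i,u_{i+1}\}$ at $u_i$ carries an annotation $(i,\id(u_0),\leftarrow)$. Equivalently, every two consecutive edges of $B$ are paired into an annotated angle at~$u_i$. Since every angle at every vertex appears at most once on the boundary of $\phi$, these annotations are unambiguous even when the same vertex or edge occurs several times in~$B$. Each annotation fits in $O(\log n)$ bits, and by Lemma~\ref{lem:line-PLS}, these edge-level certificates translate into $O(\log n)$-bit vertex certificates.

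Local verification at every vertex $v$ then consists of reading all annotations at $v$'s incident edge-endpoints, and checking: (i)~\emph{angle consistency}, namely that whenever $\{u,v\}$ carries a tag $(i,r,\rightarrow)$ at $v$, the next edge after $\{u,v\}$ in $v$'s cyclic order carries the matching tag $(i,r,\leftarrow)$ at $v$; (ii)~\emph{walk consistency}, namely that the edge-endpoint tagged $(i,r,\leftarrow)$ at $v$, incident to some neighbor $w$, carries the complementary tag $(i+1,r,\rightarrow)$ at its $w$-endpoint; (iii)~\emph{root consistency}, namely that $v$ carries an annotation with index $0$ if and only if $\id(v)=r$; and (iv)~\emph{global uniqueness}, namely that the root-ID $r$ is identical across all annotations read at~$v$. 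Check (i)~forces the annotated angles to form a union of face boundaries of the (already certified) rotation system; check (ii)~forces those boundaries to be threaded into a single cyclic walk; and checks (iii)--(iv) force this walk to start at the prescribed root and be unique, hence equal to~$B$.

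The main obstacle is the potential multiplicity of visits of $B$ to the same vertex or the same edge, which is precisely why the annotations must be attached to \emph{angles}, equivalently to ordered pairs of consecutive edges in the rotation system, rather than to vertices or edges themselves; each angle is traversed at most once along~$B$, so the local checks above pin $B$ down unambiguously. Completeness is immediate, the honest prover supplying the true rotation system and the true angle-annotations along $B$. Soundness follows because any annotation pattern passing all four local checks, combined with the correctness of the underlying rotation system guaranteed by~\cite{FeuilloleyFRRMT}, forces the annotated angles to cover exactly one face boundary indexed in a single cyclic order starting at~$u_0$, which must therefore coincide with the prescribed walk~$B$.
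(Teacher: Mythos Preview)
Your approach is correct but takes a different route from the paper. The paper does not introduce any new layer of annotations: it observes that the specific mechanism of~\cite{FeuilloleyFRRMT} already singles out the outer face. Concretely, that scheme roots a spanning tree~$T$ at a vertex~$r$ on the face~$\phi$, flattens~$T$ into a path by DFS-splitting each vertex into as many avatars as its DFS visits, and encodes, for each avatar, the ``lowest'' co-tree edge jumping over it. The paper then notes that the vertices of~$H$ lying on the outer face~$\phi$ are exactly those possessing an avatar with \emph{no} co-tree edge jumping over it, so the existing certificates of~\cite{FeuilloleyFRRMT} already determine~$\phi$ and its boundary walk.

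Your proposal instead treats~\cite{FeuilloleyFRRMT} as a black box certifying a combinatorial embedding, and adds an explicit index-and-root annotation on angles to trace out~$B$. This is more modular---it would sit on top of any planarity scheme that exposes a rotation system---at the price of extra certificate content and a separate soundness argument (your checks (i)--(iv)). One caveat: the scheme of~\cite{FeuilloleyFRRMT} is not literally presented as a rotation-system encoding, so you are implicitly assuming that the certified embedding makes the cyclic order at each vertex locally readable; this is true but worth stating. The paper's approach is more economical because it reuses structure already present in the planarity certificates rather than layering a new walk-tracking mechanism on top.
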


\begin{figure}[tb]
\centering
\input{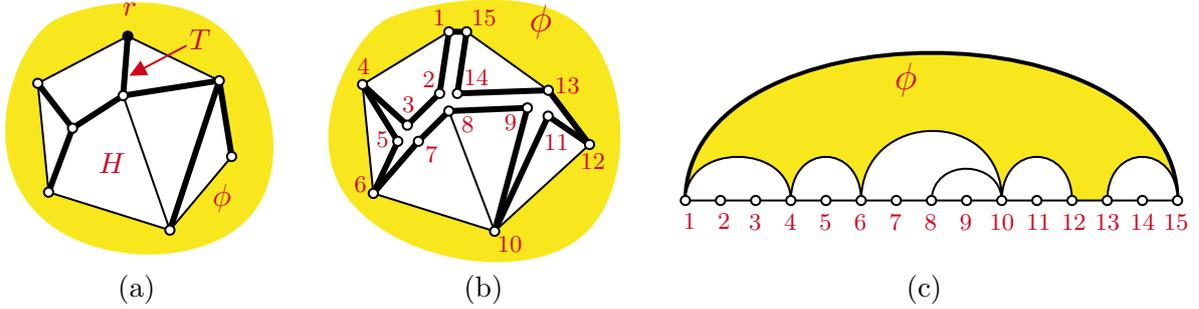}
\caption{Illustration of the PLS for planarity in~\cite{FeuilloleyFRRMT}. }
\label{fig:PLS-planar}
\end{figure}

\begin{proof}
Let $H$ be a planar graph with a planar embedding~$f$.  The scheme for planarity in~\cite{FeuilloleyFRRMT} constructs the certificates as follows (cf. Figure~\ref{fig:PLS-planar}). Let $T$ be an arbitrary spanning tree of~$H$, and let us root $T$ at a vertex $r\in V(H)$ on the outer face~$\phi$, as displayed on Figure~\ref{fig:PLS-planar}(a).  The tree $T$ is ``flattened'' into a cycle $C$ in a splitting $H'$ of $H$ by replacing every vertex $v\in V(H)$ by as many vertices as the number of times $v$ is visited by a DFS traversal of $T$ starting from~$r$ (see Figure~\ref{fig:PLS-planar}(b)). The scheme in~\cite{FeuilloleyFRRMT} certifies the cycle $C$, viewed as a path $P$ whose two extremities are avatars of~$r$, with respective DFS numbers~1 and~$2n-1$, plus an edge connecting these two avatars (see Figure~\ref{fig:PLS-planar}(c)). A property of this construction taken from~\cite{FeuilloleyFRRMT} is that the vertices of $H$ on the outer face~$\phi$ are those which have at least one avatar in $H'$ such that no co-tree edges ``jumps over it'' when the vertices are displayed as on Figure~\ref{fig:PLS-planar}(c). For instance, the avatars $1,4,6,10,12,13,15$ have no co-tree edges jumping over them, and indeed these avatars are the ones of the vertices on the boundary of the outer face $\phi$.  The scheme of \cite{FeuilloleyFRRMT} is precisely based on a local encoding of the ``lower edge'' jumping over every avatars in~$H'$. It follows that this scheme suffices for certifying not only the planarity of~$H$, but also that $\phi$ is a face of~$H$ with boundary~$B$. 
\end{proof}

\subsection{Local Consistency}
\label{subsec:consistent}

Let  $H$ be a splitting of a graph $G$, let~$f$  be a planar embedding of $H$, and let $\phi$ be a face of $H$ with  boundary walk $B$ directed, say, clockwise. The directed boundary walk~$B$ is denoted by~$\vec{B}$. Let 
$
\hist(G)=\{\hist(v),v\in V(G)\}
$
be a collection of histories for the vertices of~$G$, of depth $3k-1$, for some $k\geq 1$.  We say that $(G,H,\vec{B},\hist(G))$  is \emph{locally consistent} if the following holds. 

\begin{enumerate}
\item \label{consistent:1} There exists a sequence of graphs $U_0,\dots,U_{3k-1}$ with $U_0=G$, $U_{3k-1}=H$, and, for every $0\leq \ell < 3k-1$, $U_{\ell+1}$ is a degree-2 splitting of $U_\ell$, such that, for every $v\in V(G)$, and for every $\ell=0,\dots,3k-1$, every node $x$ at level $\ell$  of $\hist(v)$ satisfies that $S_x$ is a vertex of $U_\ell$, the neighborhood of $S_x$ defined in $N_x$ is consistent with the neighborhood of $S_x$ in $U_\ell$, and the footprints in $F_x$ contains edges of $U_\ell$. Moreover, if $x$ has two children $x'$ and $x''$ in $\hist(v)$, then there are exactly two footprints, one in $E_{x'}$ and one in $E_{x''}$, for which the Elementary rule or the Extremity rule was applied, all the other footprints in $E_{x'}$ and $E_{x''}$ being subject to the Vacancy rule. Furthermore, if $x$ has a unique child $x'$, then all footprints in  $E_{x'}$ are subject to the Vacancy rule. Finally, the typing is consistent with the specified typing rules. 

\item \label{consistent:2} The collection of footprints at the leaves of the histories in $\hist(G)$ can be ordered as $(x_0,y_0,z_0)$, $\dots,$ $(x_N,y_N,z_N)$ such that,  $y_i=z_{i-1}=x_{i+1}$ for every $i=0,\dots,N$, and $\vec{B}=(y_0,\dots,y_N)$.

\item \label{consistent:3}  For every $\ell=1,\dots,,2k-1$, the following must be satisfied: 
\begin{enumerate}
\item the collection of footprints at the nodes at level $k+\ell$ whose both edges have type $P'_{\ell}$ (resp.,  type $P''_{\ell}$) in the histories in~$\hist(G)$ can be ordered as $(X'_0,Y'_0,Z'_0),\dots,(X'_{s_\ell},Y'_{s_\ell},Z'_{s_\ell})$ (resp., $(Z''_0,Y''_0,X''_0),\dots,(Z''_{s_\ell},Y''_{s_\ell},X''_{s_\ell})$), for some $s_\ell\geq 0$, such that:
\begin{enumerate}
\item for every $i=0,\dots,s_\ell$, $Y_i \xrightarrow{k+\ell} \{Y'_i,Y''_i\}$; 
\item for every $i=1,\dots,s_\ell$, $Y'_i=Z'_{i-1}$ and $Y''_i=Z''_{i-1}$; 
\item for every $i=0,\dots,s_\ell-1$, $Y'_i=X'_{i+1}$ and $Y''_i=X''_{i+1}$; 
\end{enumerate}
\item  the collection of footprints at the nodes at level $k+\ell$ whose both edges have  type $C'_{\lceil \frac{\ell+1}{2}\rceil}$ if $\ell+1$ is odd, or type $C''_{\frac{\ell+1}{2}}$ if $\ell+1$ is even, can be ordered as $(X_0,Y_0,Z_0),\dots,$ $(X_{r_\ell},Y_{r_\ell},Z_{r_\ell})$, for some $r_\ell\geq 0$ such that, for every $i=1,\dots,r_\ell$, $Y_i=Z_{i-1}=X_{i+1}$; 
\item the collection of footprints at the nodes at level $k+\ell$ whose both edges have same type $P'_1,P''_1,\dots,P'_{\ell-1}, P''_{\ell-1},C'_1,C''_1,\dots,C'_{\lceil\ell/2\rceil},C''_{\lceil\ell/2\rceil}$, or $C'_{(\ell+1)/2}$ if $\ell+1$ is even, can be ordered as $(X_0,Y_0,Z_0),\dots,(X_{t_\ell},Y_{t_\ell},Z_{t_\ell})$, for some $t_\ell\geq 0$, such that for every $i=1,\dots,t_\ell$, $Y'_i=Z'_{i-1}$ and $Y''_i=Z''_{i-1}$; 
\end{enumerate}

\item  \label{consistent:4} For every $\ell=1,\dots,k$, the collection of footprints at the nodes at level $\ell$ whose both edges have type $C'_{\ell}$ (resp.,  type $C''_{\ell}$) in the histories in~$\hist(G)$ can be ordered as $(X'_0,Y'_0,Z'_0),\dots,$ $(X'_{r_\ell},Y'_{r_\ell},Z'_{r_\ell})$ (resp., $(Z''_0,Y''_0,X''_0),\dots,(Z''_{r_\ell},Y''_{r_\ell},X''_{r_\ell})$), for some $r_\ell\geq 0$, such that:
\begin{enumerate}
\item for every $i=0,\dots,r_\ell$, $Y_i \xrightarrow{\ell} \{Y'_i,Y''_i\}$; 
\item for every $i=1,\dots,r_\ell$, $Y_i=Z_{i-1}=X_{i+1}$;
\end{enumerate}
\end{enumerate}

By construction, $(G,H^*,\vec{B}^*,\hist^*(G))$ produced by encoding the unfolding of the embedding of $G$ on $\T_k$, described in Section~\ref{sec:unfolding-any-Tk}, is locally consistent. The following result shows that the local notion of historical consistency based on the histories fits with the global notion of historical consistency used in Section~\ref{sec:unfolding-any-Tk}. 

\begin{lemma}\label{lem:verification}
Let  $H$ be a splitting of a graph $G$, let $f$ be a planar embedding of $H$, let $\phi$ be a face of $H$ with boundary walk $\vec{B}$ directed clockwise. Let $\hist(G)$ be an history of all the vertices in $G$. If $(G,H,\vec{B},\hist(G))$ is locally consistent, then $(G,H,\vec{B},\mathcal{U})$ is globally consistent, where $\mathcal{U}=U_0,\dots,U_{3k-1}$ is  a sequence of graphs enabling Condition~\ref{consistent:1} of the historical consistency of $(G,H,\vec{B},\hist(G))$ to hold. 
\end{lemma}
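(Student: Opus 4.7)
The plan is to prove the implication by reverse induction on the level~$\ell$ of the histories, going from $\ell=3k-1$ (leaves, corresponding to $H=U_{3k-1}$) down to $\ell=0$ (root, corresponding to $G=U_0$). At each stage we will reconstruct the special faces required by global consistency from the footprints living at that level of the trees in $\hist(G)$, and check that their boundary walks (with the inherited direction) have the form prescribed in the definition of global consistency. The sequence $\mathcal{U}=U_0,\dots,U_{3k-1}$ is the one whose existence is granted by Condition~\ref{consistent:1} of local consistency, together with the attached splittings $\sigma_\ell=(\alpha_\ell,\beta_\ell)$.

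First I would handle the base case. At level $\ell=3k-1$, Condition~\ref{consistent:2} says that the leaf-footprints, taken together across all histories, can be totally ordered as $(x_0,y_0,z_0),\dots,(x_N,y_N,z_N)$ with $y_i=z_{i-1}=x_{i+1}$ and with central coordinates listing exactly $\vec{B}$. Setting $\chi_{2k}=\phi$ with $\vec{B}(\chi_{2k})=\vec{B}$ matches the initialization condition of Condition~1 of global consistency. Next I would run the path-reduction phase. Suppose, inductively, that after processing level $k+i+1$ we have produced faces $\chi_{i+2},\psi^{(i+1)}_{i+3},\dots,\psi^{(i+1)}_{2k}$ of $U_{k+i+1}$ with directed boundary walks that satisfy the specification. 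Moving to level $k+i$, Condition~\ref{consistent:3}(a) assembles the footprints typed $P'_{i+1}$ and $P''_{i+1}$ into two directed paths $(w'^{(i)}_0,\dots,w'^{(i)}_s)$ and $(w''^{(i)}_s,\dots,w''^{(i)}_0)$ of $U_{k+i}$; Condition~\ref{consistent:3}(b) assembles the $C'/C''$-typed footprints into a directed cycle which becomes the boundary walk of $\psi^{(i-1)}_{i+1}$; Condition~\ref{consistent:3}(c), using the Vacancy-carried footprints, reconstructs the walks of the surviving faces $\psi^{(i-1)}_{i+2},\dots,\psi^{(i-1)}_{2k}$ from those at level $k+i+1$, which is exactly the requirement $\sigma_{k+i-1}(\vec{B}(\psi^{(i-1)}_j))=\vec{B}(\psi^{(i)}_j)$. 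Finally, the Elementary/Extremity rule applied at exactly one sibling pair (Condition~\ref{consistent:1}) identifies the two pairs $(w'^{(i)}_0,w''^{(i)}_0)$ and $(w'^{(i)}_s,w''^{(i)}_s)$ as the avatars of two vertices $x,y\in V(U_{k+i-1})$; closing the two resulting open walks through these vertices yields the boundary walks $\vec{B}(\chi_i)$ and $\vec{B}(\psi^{(i-1)}_{i+1})$ with the prescribed shape.

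Once the path-reduction phase reaches level $k$, I would run an analogous induction on the cycle-duplication phase, for $\ell$ going from $k$ down to $1$. Assume, by induction, that at level $\ell$ we have produced faces $\phi'^{(\ell)}_1,\phi''^{(\ell)}_1,\dots,\phi'^{(\ell)}_\ell,\phi''^{(\ell)}_\ell$ of $U_\ell$ with the prescribed directed boundary walks. Condition~\ref{consistent:4} gathers the footprints of type $C'_\ell$ (resp.\ $C''_\ell$) at level $\ell$ into two directed cycles on vertices that, by Condition~\ref{consistent:4}(a), are pairwise avatars under $\sigma_{\ell-1}$; those cycles merge in $U_{\ell-1}$ into one cycle $C_\ell$ of $U_{\ell-1}$ whose boundary walk traversal is consistent in the two directions required by the handle-formation specification. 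The Vacancy-typed footprints of the other types $C'_j,C''_j$ for $j<\ell$ propagate the remaining boundary walks upward unchanged, giving $\sigma_{\ell-1}(\vec{B}(\phi'^{(\ell-1)}_j))=\vec{B}(\phi'^{(\ell)}_j)$ and similarly for $\phi''$. At $\ell=0$ no faces remain, consistent with the fact that $U_0=G$ has no distinguished face.

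I expect the main obstacle to be the bookkeeping in the induction step, particularly in the presence of vertices shared by several cycles $C_i$ or paths $P_j$, which appear at more than one footprint at the same tree node (cf.\ Figure~\ref{fig:pierreT2-complex}). The delicate point is to justify that the local ordering conditions of \ref{consistent:3} and~\ref{consistent:4}, combined with the requirement that exactly one sibling pair per binary node be subject to the Elementary or Extremity rule (the others being Vacancy), force the footprints of a given type at a given level to assemble into a single directed closed walk, rather than into several disjoint sub-walks. This can be argued by a parity/matching count: each binary node contributes exactly one merger, and the total number of mergers at level~$\ell$ equals the number of edges of the relevant cycle or path in $U_\ell$, so the produced walks are necessarily connected and closed. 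The Double extremity variant has to be handled separately, since there the merger yields two footprints instead of one at the parent; I would verify that this matches precisely the case when the path $P_{\ell-k+1}$ is reduced to a single vertex, shared by the two faces about to be merged. Once these bookkeeping lemmas are in place, the induction closes and global consistency follows.
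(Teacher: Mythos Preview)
Your proposal is correct and follows essentially the same approach as the paper: a reverse induction on the level~$\ell$, using Condition~\ref{consistent:2} for the base case, Condition~\ref{consistent:3} to unwind the path-duplication phase, and Condition~\ref{consistent:4} to unwind the cycle-duplication phase, with Condition~\ref{consistent:1} ensuring the footprint rules behave correctly throughout. One small remark: your concern about the footprints of a given type possibly assembling into several disjoint sub-walks is already resolved by the ``can be ordered as $(X_0,Y_0,Z_0),\dots$'' clauses in Conditions~\ref{consistent:3} and~\ref{consistent:4}, which impose a single total order with the chaining constraints $Y_i=Z_{i-1}=X_{i+1}$, so no separate parity/matching argument is needed.
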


\begin{proof}
Thanks to Condition~\ref{consistent:1}, for every $0\leq \ell < 3k-1$, $U_{\ell+1}$ is a degree-2 splitting of $U_\ell$. Moreover, by the consistence of the footprints and the typing in the histories,  the splitting of from $U_\ell$ to $U_{\ell+1}$ is locally consistent at each node of $U_i$ with the duplication of a cycle whenever $\ell\leq k$, and with the duplication of a path otherwise. 

Condition~\ref{consistent:2} in the definition of local consistency guarantees that the footprints at the leaves of the histories are correctly set, that is, they collectively encode the boundary walk~$B$. 

Condition~\ref{consistent:3} guarantees that, for $\ell=1,\dots,2k-1$, starting from $\chi_{2k}=B$, one can iteratively decompose the boundary walk of the face $\chi_{\ell+1}$ of $U_{k+\ell+1}$ into a boundary walk of a face $\psi_{\ell+1}$ of $U_{k+\ell}$, a boundary walk of a face $\chi_{\ell}$ of $U_{k+\ell}$, and the duplication of a path in $U_{k+\ell}$ connecting $\chi_\ell$ to $\psi_{\ell+1}$. It follows that $2k$ faces $\psi_1,\dots,\psi_{2k}$ of $U_{\ell}$ have been identified. Since, the merging of the $2k-1$ paths successively identified in the graphs $U_{k+\ell}$, $\ell=1,\dots,2k-1$ preserves planarity, the graph $U_k$ is planar. 

Moreover, each of the boundary walks of the faces $\psi_1,\dots,\psi_{2k}$ is oriented in a direction inherited from the clockwise orientation of $B$, as guaranteed by the Elementary, Extremity, and Vacancy rules satisfied by the footprints, whose validity are themselves guaranteed by Condition~\ref{consistent:1}.  Condition~\ref{consistent:4} guarantees that the $2k$ faces $\psi_1,\dots,\psi_{2k}$ of $U_k$ can be reordered as $k$ pairs  $(\phi'_i,\phi''_i)$, $i\in\{1,\dots,k\}$ that can be successively merged for creating handles. More specifically, for $i=k,k-1,\dots,1$, Condition~\ref{consistent:4} guarantees that the boundary walks of $\phi'_i$ and $\phi''_i$ are directed such that, by identifying the vertices of $U_i$ that are split of vertices in $U_{i-1}$, a handle is created, resulting in $U_i$ embedded in~$\T_{k-i}$. 
\end{proof}

\subsection{Existence and Unicity of the Paths and Cycles}
\label{subsec:exists-unique}

Our proof-labeling scheme relies on a collection of paths and cycles in the graphs $G^{(0)},\dots, G^{(3k-1)}$. The footprints and types encode these paths and cycles locally. One needs to guarantee the existence and unicity of each path and cycle, in each graph $G^{(i)}$, $i=0,\dots,3k-1$. The next lemma, which is standard, achieve this task. 

\begin{lemma}\label{lem:exists-unique}
Let $G$ be a graph, and let $P$ (resp., $C$) be a (non-necessary simple) directed path (resp., cycle) in $G$. Assume each vertex~$v$ of $P$ (resp., $C$) is given a triple $(\p(v),v,\s(v))$, where $\p(v)$ and $\s(v)$) are the predecessor and  successor of $v$ in $P$ (resp.,~$C$). If $v$ is an extremity of $P$, then $\p(v)=\bot$ or $\s(v)=\bot$, or both $\p(v)=\bot$ and $\s(v)=\bot$ in case $P$ is reduced to~$v$. There exists a proof-labeling scheme with certificates on $O(\log n)$ bits that guarantees the existence and unicity of~$P$. 
\end{lemma}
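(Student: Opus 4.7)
The plan is to mimic the classical proof-labeling scheme for spanning trees, adapted to a (possibly non-simple) directed path or cycle, and then redistribute the certificates edge-by-edge to guarantee the $O(\log n)$-bit bound.

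First I would handle the simple case of a directed path~$P$ with endpoints $s$ (the unique vertex with $\p(v)=\bot$) and $t$ (the unique vertex with $\s(v)=\bot$). The prover assigns to every vertex $v\in P$ the tuple $(\id(s),\id(t),L,d(v))$, where $L$ is the length of $P$ and $d(v)\in\{0,\dots,L\}$ is the position of $v$ along $P$ starting from~$s$. Each vertex $v\in P$ locally checks: (i)~$\p(v)$ and $\s(v)$ are indeed neighbors of $v$ in~$G$ (unless equal to $\bot$); (ii)~the certificate of $\p(v)$ has the same $(\id(s),\id(t),L)$ and position $d(v)-1$, and symmetrically for $\s(v)$ with position $d(v)+1$; (iii)~if $d(v)=0$ then $v=s$ and $\p(v)=\bot$, and if $d(v)=L$ then $v=t$ and $\s(v)=\bot$. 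Vertices not on $P$ carry a null certificate and check that no neighbor claims them as predecessor or successor. Every piece of information fits in $O(\log n)$ bits, and the local checks force exactly one directed path from $s$ to $t$ in~$G$, yielding both existence and unicity. The cycle case is identical, except that there is no endpoint and the prover picks an arbitrary $r\in C$ as the position-zero vertex, with the check that the vertex at position $L-1$ points back to $r$.

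For the non-simple case, a vertex $v$ may be attached a set of $k$ triples, each corresponding to one visit of $P$ (or $C$) through~$v$. I would treat each triple as an independent ``avatar'' of $v$ and assign to it its own position label. Since each edge of $G$ is used at most twice by a boundary walk (once in each direction), the certificate information is naturally carried by the edges: the position of a triple containing the directed edge $(\p(v),v)$ is stored on the edge $\{\p(v),v\}$. This yields a line proof-labeling scheme in the sense of Section~\ref{sec:linePLS} with $O(\log n)$ bits per edge, and the consistency checks become checks between triples on adjacent edges at a common endpoint.

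The main obstacle is exactly this non-simple case: a vertex visited many times could accumulate arbitrarily large certificates if we naively stored all positions at the vertex. The resolution is to invoke Lemma~\ref{lem:line-PLS}, which transforms the line proof-labeling scheme just described into a standard one with $O(\log n)$-bit vertex certificates, using the bounded degeneracy of the graphs under consideration. Putting this together delivers the claimed $O(\log n)$-bit proof-labeling scheme certifying both the existence and the unicity of $P$ (or $C$).
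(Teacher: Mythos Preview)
Your scheme has a genuine gap on the \emph{unicity} side. Nothing in your checks prevents a dishonest prover from certifying several disjoint directed paths simultaneously: for two vertex-disjoint paths $P_1$ (from $s_1$ to $t_1$, length $L_1$) and $P_2$ (from $s_2$ to $t_2$, length $L_2$), the prover hands out $(\id(s_1),\id(t_1),L_1,\cdot)$ along $P_1$, $(\id(s_2),\id(t_2),L_2,\cdot)$ along $P_2$, and null certificates elsewhere. Every local test you describe passes, because consistency of the header $(\id(s),\id(t),L)$ is only checked between a vertex and its predecessor/successor \emph{on the same path}, never across the graph. Your sentence ``the local checks force exactly one directed path from $s$ to $t$'' is therefore not justified; what they force is that each connected component of the triple structure is a well-formed path, not that there is only one such component.

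The paper closes exactly this hole by certifying a spanning tree of the \emph{whole} graph $G$, rooted at the starting vertex $v_0$ of $P$: every vertex of $G$ (on $P$ or not) stores the root-ID and its distance to the root in the tree, and checks that all its neighbors carry the same root-ID. Connectivity of $G$ then forces global agreement on a single $v_0$, hence a single start, hence a single path. An easy patch to your approach with the same effect would be to broadcast $(\id(s),\id(t),L)$ to \emph{all} vertices and have everyone compare with neighbors, but you did not do this. Note also that the spanning-tree certificate is $O(\log n)$ bits per vertex independently of how many times a vertex is visited by $P$, so the paper never needs to pass through a line proof-labeling scheme or invoke bounded degeneracy; your detour via Lemma~\ref{lem:line-PLS} is unnecessary here (and would, as stated, restrict the lemma to degenerate graphs, which its statement does not assume).
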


\begin{proof}
Let $P$ be a directed path in $G$. The proof-labeling scheme uses a spanning tree $T$ of $G$ rooted at the starting vertex $v_0$ of $P$. Every vertex $v$ is given the ID of its parent $p(v)$ in~$T$ ($v_0$ has $p(v_0)=\bot$). The tree $T$ is certified by providing a certificate to every node $v$ containing a pair $(\id(v_0),d(v))$, where $d(v)$ is the distance from $v$ to $v_0$ in~$T$. Every vertex~$v$ checks that it is given the same root-ID as its neighbors in~$G$, and that $d(p(v)=d(v)-1$. Every node that is given one or many triples $(\p(v),v,\s(v))$ checks that, for each of them, $\p(\s(v))=v$ and $\s(\p(v))=v$. (Of course, every such vertex~$v$ also checks consistence of the triples given to it, including the fact that $\p(v)\neq \s(v)$ unless they are both equal to~$\bot$, that it is not given the same successor in two different triples, etc.). If one of the tests is not passed at a vertex, this vertex rejects, otherwise it accepts. The case of a cycle $C$ is treated the same, where the spanning tree $T$ is rooted at any vertex of~$C$. It is easy to check that this standard proof-labeling scheme satisfies both completeness and soundness. 
\end{proof}

\subsection{Verification Procedure}
\label{subsec:verif-procedure}

We now have all ingredients for describing our proof-labeling scheme for $\G^+_k$, $k\geq 0$. 
First, we describe the certificates assigned to the vertices of a graph~$G$ of genus~$k$. The main part of the certificate of~$v$ is the history~$\hist(v)$, as constructed in Section~\ref{sec:histories}. As mentioned in Section~\ref{sec:linePLS}, a history may require more than just $O(\log n)$ bits. However, Lemma~\ref{lem:line-PLS} has shown how to resolve this issue, so that histories can be spread out among the vertices in a way guaranteeing that every vertex stores $O(\log n)$ bits, and, in a single round of communication with its neighbors, every node $v$ can recover its entire history. More importantly even, although a vertex $v$ may not be able to recover the whole history of each of its neighbors in a single round, yet it can recover from each neighbor $w$ the part of $\hist(w)$ corresponding to every edge between an avatar of $v$ and an avatar of $w$, which is sufficient to check the consistency of the neighborhoods, footprints, etc., in all graphs $G^{(0)},\dots,G^{(3k-1)}$ used in the construction.  In addition, the certificate of every vertex is provided with the information enabling to check planarity of $H=G^{(3k-1)}$ (cf. Lemma~\ref{lem:PODC2020}), and to guarantee the existence and unicity of all the directed cycles $C'_i,C''_i$, $i=1,\dots,k$, and all directed paths $P'_j,P''_j$, $j=1,\dots,2k-1$ (cf. Lemma~\ref{lem:exists-unique}). The vertices can then check local consistency, as specified in Section~\ref{subsec:consistent}. Since $G$ has genus~$k$, it follows that, whenever the prover assigns the certificates appropriately, all vertices pass all tests, and therefore all vertices accept. Completeness is therefore satisfied by the scheme. 

Soundness is guaranteed by Lemmas~\ref{lem:certif-Tk-centralized} and~\ref{lem:verification}. Indeed, the latter lemma shows that if the vertices are given certificates that are consistent, and in particular for which the histories are locally consistent, then global consistency is also guaranteed. And the former lemma says that if global consistency is satisfied then the graph can be embedded on~$\T_k$. Therefore, if a graph $G$ cannot be embedded on~$\T_k$, then global consistency cannot be satisfied, which means that the local consistency of the histories cannot be satisfied either, and therefore, at least one vertex of~$G$ fails to pass all tests, and rejects. This completes the proof of Theorem~\ref{theo:main1}.


\section{Proof-Labeling Scheme for Bounded Demigenus Graphs}
\label{sec:PLS-demi-genus}

This section is entirely dedicated to the proof of our second main result. 

\begin{theorem}\label{theo:main2}
Let $k\geq 0$, and let $\G^-_k$ be the class of graphs with demi-genus at most~$k$, i.e., embeddable on a non-orientable closed surface of genus at most~$k$. There is a proof-labeling scheme for $\G^-_k$ using certificates on $O(\log n)$ bits in $n$-node graphs. 
\end{theorem}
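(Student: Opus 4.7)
The proof follows the same blueprint as that of Theorem~\ref{theo:main1}, modified to handle cross-caps and the fact that eliminating cross-caps may intertwine with the elimination of handles. The starting point is Lemma~\ref{lem:if-demigenusk-then-2cell}, which provides, for any $G$ of demi-genus $k$, a 2-cell embedding of $G$ on $\NT_k$. The plan is to extend the unfolding of Section~\ref{sec:unfolding-a-surface} to this setting, then to encode the unfolding in distributed histories, and finally to invoke Lemma~\ref{lem:line-PLS} (applicable because bounded-demigenus graphs have bounded degeneracy) to keep certificates on $O(\log n)$ bits.

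First, I would extend the face-duplication phase (Section~\ref{subsec:faceduplication}). At step $i$, the current surface $X^{(i)}$ may be orientable or non-orientable. If non-orientable, there exists a simple non-orientable cycle $C_{i+1}$ in $G^{(i)}$; instead of splitting $C_{i+1}$ into two disjoint copies, I would \emph{double} it, replacing each $v_j$ on $C_{i+1}$ by avatars $v'_j, v''_j$ and connecting neighbors as dictated by the embedding, so that the two copies of $C_{i+1}$ glue end-to-end into a \emph{single} closed walk of length $2r$ bounding a single new special face $\phi^\circ_{i+1}$ (this is the combinatorial image of cutting along a non-orientable cycle, which leaves a single boundary circle). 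This reduces the Euler genus by one. If $X^{(i)}$ is orientable, I would apply the cycle-duplication of Section~\ref{G_C} verbatim. After at most $k$ steps we reach a planar graph $G^{(k)}$ equipped with a set of special faces, a subset $\Phi^\circ$ of them arising from cross-caps (one per step), and a subset $\Phi^+$ of them arising from handles (two per step). Then the face-reduction phase of Section~\ref{subsec:facereduction} applies unchanged to merge all these special faces via path-duplications into a unique face $\phi^*$ of a planar graph $H^*$.

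Second, I would extend the histories of Section~\ref{sec:histories}. The type set is enlarged to
\[
\type_k^- = \type_k \cup \{D_1,\dots,D_k\},
\]
where $D_\ell$ tags the edges created by the doubling of a cross-cap cycle at level $\ell$. A new footprint rule, the \emph{Cross-Cap rule}, handles the single-face creation: at a binary node $x$ at level $\ell$ whose two children $x', x''$ are the avatars of a cross-cap doubling, two matching footprints $(X', Y', Z')$ and $(X'', Y'', Z'')$ at $x', x''$ are paired and pushed up as $\bot$ at $x$, with the four incident edges receiving type $D_\ell$; the essential novelty (compared with the orientable Elementary rule of Figure~\ref{fig:elementary}) is that the two paired footprints are ordered in the \emph{same} cyclic direction, not in opposite directions, so as to encode the antipodal identification characteristic of a cross-cap rather than the orientation-reversing identification characteristic of a handle. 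The Extremity, Vacancy, and cycle-Elementary rules are reused unchanged. Each vertex is also provided with a flag indicating, for each level $\ell \le k$, whether the $\ell$-th operation was a handle-split or a cross-cap-doubling, so neighbors agree on which rule to apply.

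Third, local consistency is extended by adding a fourth clause parallel to Condition~\ref{consistent:4} for the new doubled cycles: for each cross-cap level $\ell$, the collection of footprints at level $\ell$ with both edges of type $D_\ell$ must be orderable as a single closed walk $(X_0, Y_0, Z_0),\dots,(X_{2r_\ell-1}, Y_{2r_\ell-1}, Z_{2r_\ell-1})$ with $Y_{i+1}=Z_i=X_{i+1}$ cyclically, in which the sequence of $\sigma_{\ell-1}^{-1}(Y_i)$ traces the vertices of the pre-duplication cycle $C_\ell$ \emph{twice} in the same direction (this is the footprint-level signature of the cross-cap gluing, as opposed to the two-face signature for handles). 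Lemma~\ref{lem:exists-unique} and Lemma~\ref{lem:PODC2020} are used verbatim to certify the existence and uniqueness of the cross-cap cycles, of the handle cycles, and of all duplicated paths, and to certify the planarity of $H^*$ together with the boundary walk of $\phi^*$. The analogue of Lemma~\ref{lem:verification} then shows that local consistency implies that the sequence $\mathcal{U}$ of splittings it induces produces, after the inverse sequence of path-merges, cycle-identifications, and cycle-contractions (where cross-cap cycles are contracted via the antipodal identification encoded by the $D_\ell$ footprints), an embedding of $G$ on $\NT_k$ (with handles possibly appearing along the way if some $X^{(i)}$ became orientable, giving a mixed surface that the classification theorem identifies with $\NT_{k'}$ for some $k'\le k$). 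Finally, Lemma~\ref{lem:line-PLS} reduces the resulting line proof-labeling scheme to a standard one with $O(\log n)$-bit certificates.

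The main obstacle is precisely the encoding of the cross-cap gluing: unlike a handle, which is certified by identifying two boundary walks traversed in opposite directions and whose correctness can be inferred from local consistency with the $C'_\ell, C''_\ell$ types already available, a cross-cap is certified by a single boundary walk that must be identified antipodally, i.e., with a global half-twist. The subtle point is to show that the ``same cyclic direction'' condition in the Cross-Cap rule, combined with the local consistency of the footprint matchings in the histories, is enough to force this antipodal identification globally. This is handled by keeping, at every avatar of a cross-cap vertex, an explicit pointer to its antipode on $C^*_\ell$ (encoded within the $O(\log n)$-bit footprint data, using the bounded-degeneracy trick of Lemma~\ref{lem:line-PLS} to distribute the pointer along the line-graph), and by requiring local consistency to verify that predecessor/successor relations along $C^*_\ell$ commute with antipode-swapping. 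Once this is in place, the soundness argument of Lemma~\ref{lem:certif-Tk-centralized} adapts mutatis mutandis, the completeness being immediate from the construction.
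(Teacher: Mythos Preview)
Your proposal is correct and takes essentially the same approach as the paper: doubling non-orientable cycles to remove cross-caps (producing a single special face rather than two), a Cross-Cap footprint rule with same-direction rather than opposite-direction pairing, and an added local-consistency clause for the resulting length-$2r_\ell$ boundary walk with antipodal identification. The paper organizes the unfolding into three strictly sequential phases (all doublings until the surface is orientable, then all handle cycle-duplications, then all path-duplications) rather than your interleaved phrasing, and encodes the antipodal pairing implicitly via the splitting maps $\sigma_{\ell-1}^{-1}$ in the histories rather than via explicit antipode pointers, but these are cosmetic differences.
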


The proof-labeling scheme for $\G^-_k$ is based on the same ingredients as the one for $\G^+$ in Theorem~\ref{theo:main1} (e.g., Lemma~\ref{lem:if-demigenusk-then-2cell} is used in replacement of Lemma~\ref{lem:if-genusk-then-2cell}, etc.). However,  new ingredients must be introduced for handling the cross-caps from which non-orientable surfaces result. The proof will thus mainly consist in describing these new ingredients, and in explaining their interactions with the ingredients used for establishing Theorem~\ref{theo:main1}. We start by defining the notion of \emph{doubling} performed on cycles. 
 
 \subsection{Doubling of a Non-Orientable Cycle} 

 Let us assume that we are given an embedding of a graph $G$ on a non-orientable closed surface $X$ of genus $k$,   and let $D= (v_0, v_1, \dots, v_{p-1}, v_p= v_0)$  be a non-orientable cycle of $G$.  Note that a non-orientable cycle is non-separating. The graph $G_{D}$ is obtained by \emph{doubling}~$D$, i.e., by multiplying its length by~2. This doubling of~$D$, and  the canonical embedding of $G_D$ on a closed surface $X_{D} $,  are obtained as follows (see Figure \ref{fig:doubling} for an illustration).

\begin{itemize}
\item   Each vertex $v_i $, $0 \leq i < p$,   is split into two  vertices $v'_i$  and $v'_{p+i}$  in such a way that  $ {D'} = (v'_0, v'_1, v'_2, \dots, v'_{2p -1}, v'_{2p} = v'_0)$  is a cycle of $G_{D}$,  which forms a boundary walk of a face $\phi$ of  $X_{D}$. 

\item   The neighbors of each vertex $v_i$  in $G \setminus {D}$, $0 \leq i < p$,     are shared  between $v'_i$ and $v'_{i+p}$ in $G_{D}$, as follows.  The left and right sides of ${D}$ can be defined locally, i.e.,  in the neighborhood of each (embedded) edge  $\{v_i,  v_{i+1} \}$ of ${D}$. The edges incident to $v'_i$ and $v'_{i+1}$  in $G_{D}$ (and, by symmetry, the edges incident to $v'_{i+p}$ and $v'_{i+p+1}$) correspond to the edges incident to  $v_i$ and $v_{i+1}$ on the same side of  ${D}$ in $G$ according to the local definition of left and right sides in the neighborhood of  $\{v_i,  v_{i+1} \}$.  
\item The vertices $v'_i$ and  $v'_{i+p}$ have no other neighbors. 
\end{itemize}

\begin{figure}[tb] 
   \centering
   \input{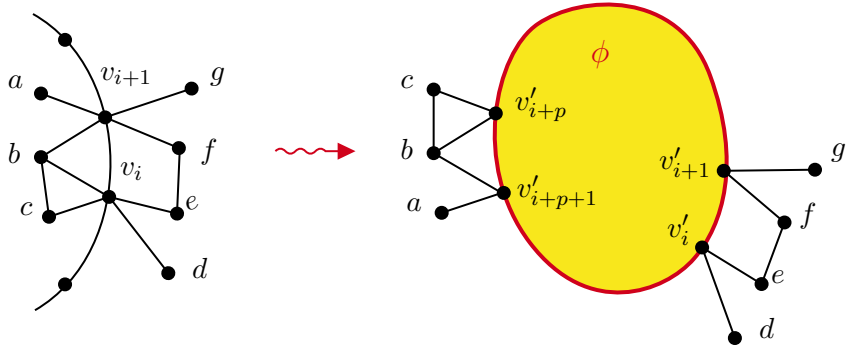}
   \caption{Doubling a non-orientable cycle.}
   \label{fig:doubling}
\end{figure}

\noindent We now show how to unfold $\NT_k$, as we did for unfolding $\T_k$ in the oriented case. 

\subsection{Unfolding $\P_k$ for $k \geq 1$}
\label{subsect:unfolding_Pk}

Let  $G$ be a graph with a 2-cell embedding $f$ on $\P_k$. The unfolding of $G$ has three phases, and only the first one, called \emph{doubling phase}   is new. The second phase is a face-duplication phase, and the third phase  is a face-reduction phase, identical to those described in the case of orientable surfaces.  The  doubling phase is as follows. 
 Let $X^{(0)}=\P_k$, and  let $D_1$ be a non-orientable cycle of $G^{(0)}=G$. Let us consider the embedding of $G^{(1)}=G^{(0)}_{D_1}$  induced by~$f$, on  the  surface $X^{(1)}=X^{(0)}_{D_1}$. There are two cases, both using Lemma~\ref{euler}: 
  \begin{itemize}
\item If $X^{(1)}$ is non-orientable,  then $X^{(1)}$ is homeomorphic to $\P_{k-1}$;
\item Otherwise, $X^{(1)}$ is homeomorphic to $\T_{\frac{k-1}{2}}$.  
\end{itemize}
In the first case, a doubling operation is repeated on  $G^{(1)}$, using a non-orientable cycle $D_2$ of $G^{(1)}$. Doubling operations are performed iteratively until an embedding on an orientable surface  is reached. Formally, there exists a 
a sequence of $m +1$ graphs $G^{(0)}, \dots, G^{(m)}$, $ m \leq k$, 
respectively embedded on closed surfaces $X^{(0)}, \dots, X^{m)}$, such that, for $0 \leq i <m$,   there  exists a non-orientable cycle $D_{i+1}$ of $G^{(i)}$ such that $G^{(i+1)} = (G^{(i)})_{D_{i+1}}$,  and $X^{(i+1)} = (X^{(i)})_{D_{i+1}}$ (up to homeomorphism). 
Necessarily,  for $0 \leq i <m$,  $X^{(i)}  = \P_{k-i}$ (up to homeomorphism),   and   $X^{(m)}  = \T_{(k-m)/2}$, thanks to Lemma~\ref{euler}. 
When $X^{(m)}$ is reached, $G^{(m)}$ contains  $m$ special faces,  whose boundary walks are resulting from the successive doubling of $D_1, \dots, D_m$, respectively. The doubling phase is then completed. 

The face duplication phase starts, initialized with the embedding of $G^{(m)}$ on  $X^{(m)}$.  
Let $k' = \frac{k-m}{2}$. The duplication phase is performed, as in Section  \ref{subsec:faceduplication}. Specifically,  
there exists a sequence of $k' +1$  graphs $G^{(m)}, \dots, G^{(m)+k'}$, respectively embedded  on  closed surfaces $X^{(m)}, \dots, X'^{(m+ k')}$, such that, for $0 \leq i <k'$,   there exists 
a non-separating  cycle $C_{i+1}$ of $G^{(m + i)}$ such that 
$G^{(m + i+1)} = G^{(m + i)}_{C_{i+1}}$, and $X^{(m + i+1)} = X^{(m +i)}_{C_{i+1}}$. 
Necessarily,  for $0 \leq i \leq k'$, $X^{(m + i)}  = \T_{k'-i}$ up to homeomorphism, thanks to Lemma~\ref{euler}. 
In particular, $X^{(m + k')}  = \T_{0}$.  
When $X^{(m+ k')} $  is reached, $G^{(m +k')} $ contains  $2k' + m $ special faces,  whose boundary walks are resulting  from the successive doubling of the cycles  $D_1,\dots,D_m$, and from the duplications of the cycles $C_1, \dots,C_{k'}$. 
At this point,  the face-duplication  phase is completed. 

The face-reduction phase starts,   as in Section  \ref{subsec:facereduction}, in order  to merge the 
$2k' + m  = k $ special faces of $G^{(m +k')}$ into a single face. 
Let us denote the $2k' + m  = k $ special faces of $G^{(m +k')}$ by $\psi_1,\dots,\psi_k$.  Let  $\psi_1 = \chi_1$.  There exists a sequence of paths  $P_1, \dots, P_{k-1}$ such that,  for $1 \leq i \leq k-1$, the duplication of   $P_i$ merges $\chi_{i }$ and  $\psi_{i+1}$ in a single face $\chi_{i+1}$.  A sequence  of planar graphs $G^{(m+ k')},\dots, G^{(m + k'+k-1)}$ results from these merges, where,  for $ 0 \leq i <k-1$,  $P_{i+1}$  is a path of $G^{(m + k'+i)}$, and $ G^{(m + k'+i+1)}  = G^{(m+ k'+i)}_{P_{i+1}}$. For  $1 \leq i \leq k-1$,  $G^{(m + k'+i)}$ has $k  -i$ special faces  $\chi_{i+1} ,  \psi_{i+2} , \dots,\psi_k$. In particular, $G^{(m + k'+k-1)}$ has a unique special face $\chi_{k-1}$. 

To summarize,  as in Section \ref{sec:unfolding-any-Tk}, the embedding $f$ of $G$ in $\P_k$ induces a planar embedding of $H^*= G^{(m + k'+k-1)}$ whose external face is  $\phi^*= \chi_{k-1}$. The boundary of face $\phi^*$ contains all the vertices obtained by splittings resulting from doublings or duplications.

\subsection{Certifying Demigenus at Most~$k$}

Conversely, for a graph $G$ of demigenus~$k$, an embedding of  $G$ in $\P_k$ can be  induced from the embedding $f^*$ of $H^*$ on $\T_0$, and from the boundary walk~$B^*$  of~$\phi^*$. The latter is indeed entirely determined  by the successive cycle-duplications, path-duplications, and cycle doublings performed during the whole process. It contains all duplicated vertices resulting from the cycles $D'_1,\dots ,D'_m$, the cycles $C'_1,\dots ,C'_{k'}$ and $C''_1,\dots ,C''_{k'}$, and from the paths $P'_1, \dots, P'_{k-1}$ and $P''_1, \dots, P''_{k-1}$.  

Now, let $H$ be a splitting of a graph~$G$, let $f$ be a planar embedding of~$H$, and let~$\phi$ be a face of~$H$ embedded on~$\T_0$. Let 
$
B=(u_0,\dots,u_N)
$
be a boundary walk of $\phi$, and let $\vec{B}$ be an arbitrary direction given to~$B$, say clockwise. Let $\mathcal{U}=(U_0,\dots,U_{m + k'+k-1})$, with $m +2  k' = k$ and $m \geq 1$, be a sequence of graphs such that $U_0=G$, $U_{m + k'+k-1}=H$, and, for every $i\in\{0,\dots,m + k'+k-1\}$, $U_{i+1}$ is a 2-splitting of $U_i$. The splitting of $U_i$ into $U_{i+1}$ is denoted by $\sigma_i=(\alpha_i,\beta_i)$. 
The definition of global consistency of $(G,H,\vec{B},\mathcal{U})$, in the case of orientable surfaces,
 can trivially be adapted to the case of non-orientable surfaces by revisiting  conditions~1 and~2,  of  Section~\ref{subsec:global_certificate},  in such a way that the indices correspond to the unfolding of $\P_k$. 
We thus say that  $(G,H,\vec{B},\mathcal{U})$ is \emph{globally consistent} for $\P_k$ if the (revisited) conditions~1 and~2 in  Section \ref{subsec:global_certificate} hold,  plus the following additional condition corresponding to the  doubling phase: 

\begin{itemize}
\item \textbf{Cycle doubling checking. } For every $i=1,\dots,\ell$, there exist  faces $\phi'^{(i)}_1,\phi'^{(i)}_2,\dots, \phi'^{(i)}_i$ of $U_i$ with respective directed boundary walks $\vec{B}(\phi'^{(i)}_1),\vec{B}(\phi'^{(i)}_2),\dots, \vec{B}(\phi'^{(i)}_i)$  such that
\begin{itemize}
\item $\vec{B}(\phi'^{(i)}_{i})=(v'_0,v'_1,\dots,v'_{2p-1}, v'_{2p}= v'_0)$ with, for  $0 \leq j < p$,  $\sigma_{i-1}^{-1}( \{ v'_j, v'_{j+p}\} ) \in V(U_{i-1})$;
\item for $j=1,\dots,i-1$, $\sigma_{i-1}(\vec{B}(\phi'^{(i-1)}_j))=\vec{B}(\phi'^{(i)}_j)$.
\end{itemize}
\end{itemize}

By the construction of Section \ref{subsect:unfolding_Pk}, for every graph $G$ of demigenus~$k$, $(G,H^*,\vec{B}^*,\mathcal{U}^*)$ is globally consistent for $\P_k$, where $\mathcal{U}^*=(G^{(0)},\dots,G^{(m + k'+k-1)})$. The following lemma is the analog to Lemma~\ref{lem:certif-Tk-centralized} for non-orientable surface. Its proof is essentially the same as the proof of Lemma~\ref{lem:certif-Tk-centralized}, in which an argument should be added, for handling cycle doublings, that is, for  identifying opposite vertices of the cycle $D'_i$ in order to creat a cross-cap. The details are omitted. 

\begin{lemma}\label{lem:certif-Pk-centralized}
Let $H$ be a splitting of a graph $G$, and assume that there exists a planar embedding~$f$ of $H$ with a face $\phi$  and a boundary walk $B$ of $\phi$.  Let  $m, k'$ be integers such that $1 \leq m \leq k$ and  $m +2k' = k$, and let $\mathcal{U}=(U_0,\dots,U_{m + k'+k-1})$ be a series of graphs such that $U_0=G$, $U_{m + k'+k-1}=H$, and, for every $i\in\{0,\dots,m + k'+k-2\}$, $U_{i+1}$ is a 2-splitting of $U_i$. If  $(G,H,\vec{B},\mathcal{U})$ is globally consistent for $\P_k$, then $G$ can be embedded on $\P_k$.
\end{lemma}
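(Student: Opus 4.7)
The plan is to mimic the proof of Lemma~\ref{lem:certif-Tk-centralized} for the path-duplication and cycle-duplication phases, and then to introduce a third, new, phase that handles the cross-caps resulting from cycle doublings. First, we use Condition~1 (path-duplication checking) inherited from the orientable case to iteratively recover, from the face~$\phi$ of the planar embedding of $H=U_{m+k'+k-1}$, a collection of $k$ faces $\psi_1,\dots,\psi_k$ of the planar graph $U_{m+k'}$. As in the proof of Lemma~\ref{lem:certif-Tk-centralized}, each step preserves planarity, by separating a face $\chi_{i+1}$ of $U_{m+k'+i}$ into two faces $\chi_i$ and $\psi^{(i-1)}_{i+1}$ connected by a path, so that planarity of $U_{m+k'}$ is guaranteed. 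The directed boundary walk of each of the faces $\psi_1,\dots,\psi_k$ inherits its orientation from the clockwise orientation originally given to~$B$.

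Next, we apply Condition~2 (cycle-duplication checking) for the $k'$ pairs among the faces $\psi_1,\dots,\psi_k$ that correspond to handles. Exactly as in Lemma~\ref{lem:certif-Tk-centralized}, the boundaries of these $2k'$ faces can be identified in pairs, their orientations guaranteeing that the gluing operations produce handles (and not Klein-bottle-like constructions). After $k'$ such identifications, we obtain the graph $U_m$ together with an embedding on~$\T_{k'}$, in which the $m$ remaining special faces $\phi'^{(m)}_1,\dots,\phi'^{(m)}_m$ (left untouched in this phase) still have boundary walks of the form $(v'_0,v'_1,\dots,v'_{2p-1},v'_0)$ of even length~$2p$.

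We now carry out the new phase, which exploits the cycle-doubling checking condition. For $i$ going from $m$ down to $1$, we process the face $\phi'^{(i)}_i$ of $U_i$, whose boundary walk $(v'_0,\dots,v'_{2p-1},v'_0)$ satisfies $|\sigma_{i-1}^{-1}(\{v'_j,v'_{j+p}\})|=1$ for every $0\leq j<p$. This local antipodal pairing specifies an identification of pairs of opposite vertices along the boundary of~$\phi'^{(i)}_i$, which is precisely the boundary identification that, by the construction recalled in Section~\ref{sec:surface} and illustrated in Figure~\ref{fig:hcc}, transforms the disk bounded by this cycle into a cross-cap. Performing this identification merges $v'_j$ and $v'_{j+p}$ into the single vertex $\sigma_{i-1}^{-1}(\{v'_j,v'_{j+p}\})$ of $U_{i-1}$, consistently with the fact that $\sigma_{i-1}$ is prescribed to be a $2$-splitting, and produces an embedding of $U_{i-1}$ on a closed surface obtained from the ambient surface of $U_i$ by adding one cross-cap. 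By the Classification Theorem and Lemma~\ref{euler}, adding a cross-cap to $\T_{k'}$ yields $\NT_{2k'+1}$, and adding a cross-cap to $\NT_j$ yields $\NT_{j+1}$; hence, after these $m$ successive operations we reach an embedding of $U_0=G$ on $\NT_{2k'+m}=\NT_k$, as desired.

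The main obstacle I anticipate is step three. One needs to verify that the antipodal pairing prescribed by the cycle-doubling checking condition indeed glues the boundary of $\phi'^{(i)}_i$ into a cross-cap, rather than, say, another handle. This requires checking that the combinatorics imposed by $|\sigma_{i-1}^{-1}(\{v'_j,v'_{j+p}\})|=1$ for all $j$ matches exactly the antipodal identification on the boundary of a disk (as in Section~\ref{sec:surface}), and that no coherent global orientation can be extended across the gluing, so that the resulting surface is genuinely non-orientable when $m\geq 1$. Once this topological lemma is in place, the Euler-characteristic bookkeeping provided by Lemma~\ref{euler} forces the resulting surface to be $\NT_k$, which completes the argument.
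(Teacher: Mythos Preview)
Your proposal is correct and follows essentially the same approach as the paper, which explicitly omits the details and merely says the proof ``is essentially the same as the proof of Lemma~\ref{lem:certif-Tk-centralized}, in which an argument should be added, for handling cycle doublings, that is, for identifying opposite vertices of the cycle $D'_i$ in order to create a cross-cap.'' Your three phases (path merging, handle creation, cross-cap creation via antipodal identification) and the Euler-characteristic bookkeeping via $\T_{k'}\#\NT_1\cong\NT_{2k'+1}$ are exactly what the paper has in mind; the ``obstacle'' you flag is not a real issue, since identifying antipodal points on a single boundary circle is, by the construction of Section~\ref{sec:surface}, precisely the definition of attaching a cross-cap and cannot produce a handle.
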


Thanks to Lemma~\ref{lem:certif-Pk-centralized}, the overall outcome of this section is that the tuple
$
c=(H^*,f^*,\phi^*,B^*,\mathcal{U}^*)
$
constructed in Section~\ref{subsect:unfolding_Pk}  is indeed a certificate that $G$ can be embedded on~$\P_k$. 
\subsection{From Centralized  Certificate to Local Certificate }

The method to distribute  the centralized certificates  uses the same approach and the same tools as those used in Section \ref{sec:tools} in the orientable case.  Only the differences are pointed out in this section. 
In the non-orientable case, the  set of types is
$$S_k=\{D'_1,\dots,D'_\ell,C'_1,\dots,C'_{k'}, C''_1,\dots,C''_{k'},P'_1,\dots,P'_{k-1},P''_1,\dots,P''_{k-1}\}. $$
The footprints and their construction are identical to the orientable case, except that a 
 cross-cap  rule is introduced (see Figure~\ref{fig:crosscap}). 
 
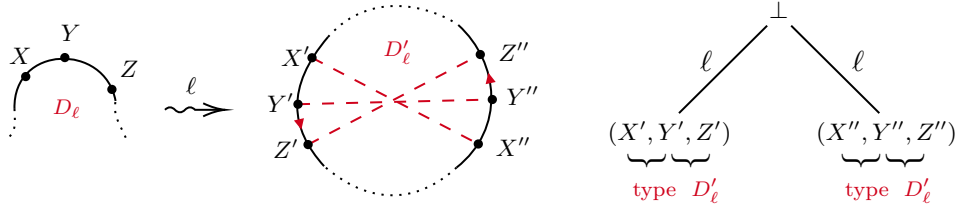
\begin{figure}[tb]
   \centering
   \tikzset{every picture/.style={line width=0.75pt}} 

\begin{tikzpicture}[x=0.75pt,y=0.75pt,yscale=-1,xscale=1]

\draw [color={rgb, 255:red, 208; green, 2; blue, 27 }  ,draw opacity=1 ] [dash pattern={on 4.5pt off 4.5pt}]  (176.85,133.48) -- (262.82,88.18) ;
\draw [color={rgb, 255:red, 208; green, 2; blue, 27 }  ,draw opacity=1 ] [dash pattern={on 4.5pt off 4.5pt}]  (178.82,89.82) -- (261.82,133.18) ;
\draw [color={rgb, 255:red, 208; green, 2; blue, 27 }  ,draw opacity=1 ] [dash pattern={on 4.5pt off 4.5pt}]  (171.82,113.18) -- (268.82,110.82) ;
\draw    (251.83,73.33) .. controls (272.5,92.33) and (275.4,125.6) .. (251,145) ;
\draw  [fill={rgb, 255:red, 0; green, 0; blue, 0 }  ,fill opacity=1 ] (78.93,107.54) .. controls (77.93,107.55) and (77.11,106.74) .. (77.11,105.73) .. controls (77.11,104.73) and (77.92,103.91) .. (78.92,103.91) .. controls (79.92,103.91) and (80.74,104.72) .. (80.74,105.72) .. controls (80.75,106.73) and (79.93,107.54) .. (78.93,107.54) -- cycle ;
\draw  [fill={rgb, 255:red, 0; green, 0; blue, 0 }  ,fill opacity=1 ] (55.43,91.53) .. controls (54.43,91.53) and (53.61,90.72) .. (53.61,89.72) .. controls (53.61,88.72) and (54.42,87.9) .. (55.42,87.9) .. controls (56.43,87.9) and (57.24,88.71) .. (57.24,89.71) .. controls (57.25,90.71) and (56.44,91.53) .. (55.43,91.53) -- cycle ;
\draw  [fill={rgb, 255:red, 0; green, 0; blue, 0 }  ,fill opacity=1 ] (35.91,101.09) .. controls (34.91,101.09) and (34.09,100.28) .. (34.09,99.28) .. controls (34.09,98.27) and (34.9,97.46) .. (35.9,97.46) .. controls (36.9,97.45) and (37.72,98.26) .. (37.72,99.27) .. controls (37.73,100.27) and (36.92,101.09) .. (35.91,101.09) -- cycle ;
\draw    (30.3,115.53) .. controls (29.43,83.19) and (75.42,82.05) .. (80.76,111.54) ;
\draw [color={rgb, 255:red, 0; green, 0; blue, 0 }  ,draw opacity=1 ] [dash pattern={on 0.84pt off 2.51pt}]  (30.3,115.53) .. controls (30.5,123) and (31.5,126) .. (26.5,131) ;
\draw    (105.42,116) .. controls (107.09,114.33) and (108.75,114.33) .. (110.42,116) .. controls (112.09,117.67) and (113.75,117.67) .. (115.42,116) .. controls (117.09,114.33) and (118.75,114.33) .. (120.42,116) -- (125.42,116) -- (133.42,116) ;
\draw [shift={(135.42,116)}, rotate = 180] [color={rgb, 255:red, 0; green, 0; blue, 0 }  ][line width=0.75]    (10.93,-3.29) .. controls (6.95,-1.4) and (3.31,-0.3) .. (0,0) .. controls (3.31,0.3) and (6.95,1.4) .. (10.93,3.29)   ;
\draw  [fill={rgb, 255:red, 0; green, 0; blue, 0 }  ,fill opacity=1 ] (177,89.82) .. controls (177,88.81) and (177.81,88) .. (178.82,88) .. controls (179.82,88) and (180.63,88.81) .. (180.63,89.82) .. controls (180.63,90.82) and (179.82,91.63) .. (178.82,91.63) .. controls (177.81,91.63) and (177,90.82) .. (177,89.82) -- cycle ;
\draw  [fill={rgb, 255:red, 0; green, 0; blue, 0 }  ,fill opacity=1 ] (267,110.82) .. controls (267,109.81) and (267.81,109) .. (268.82,109) .. controls (269.82,109) and (270.63,109.81) .. (270.63,110.82) .. controls (270.63,111.82) and (269.82,112.63) .. (268.82,112.63) .. controls (267.81,112.63) and (267,111.82) .. (267,110.82) -- cycle ;
\draw    (186.83,79.33) .. controls (165.5,104.67) and (167.83,125.33) .. (184.83,144.33) ;
\draw  [dash pattern={on 0.84pt off 2.51pt}]  (251,145) .. controls (234.17,160.67) and (210.5,166.67) .. (184.83,144.33) ;
\draw  [fill={rgb, 255:red, 0; green, 0; blue, 0 }  ,fill opacity=1 ] (175.03,133.48) .. controls (175.03,132.48) and (175.85,131.67) .. (176.85,131.67) .. controls (177.85,131.67) and (178.67,132.48) .. (178.67,133.48) .. controls (178.67,134.49) and (177.85,135.3) .. (176.85,135.3) .. controls (175.85,135.3) and (175.03,134.49) .. (175.03,133.48) -- cycle ;
\draw  [dash pattern={on 0.84pt off 2.51pt}]  (186.83,79.33) .. controls (209.17,54) and (240.5,65.33) .. (251.83,73.33) ;
\draw  [fill={rgb, 255:red, 0; green, 0; blue, 0 }  ,fill opacity=1 ] (260,133.18) .. controls (260,132.18) and (260.81,131.37) .. (261.82,131.37) .. controls (262.82,131.37) and (263.63,132.18) .. (263.63,133.18) .. controls (263.63,134.19) and (262.82,135) .. (261.82,135) .. controls (260.81,135) and (260,134.19) .. (260,133.18) -- cycle ;
\draw  [fill={rgb, 255:red, 0; green, 0; blue, 0 }  ,fill opacity=1 ] (261,88.18) .. controls (261,87.18) and (261.81,86.37) .. (262.82,86.37) .. controls (263.82,86.37) and (264.63,87.18) .. (264.63,88.18) .. controls (264.63,89.19) and (263.82,90) .. (262.82,90) .. controls (261.81,90) and (261,89.19) .. (261,88.18) -- cycle ;
\draw  [fill={rgb, 255:red, 0; green, 0; blue, 0 }  ,fill opacity=1 ] (170,113.18) .. controls (170,112.18) and (170.81,111.37) .. (171.82,111.37) .. controls (172.82,111.37) and (173.63,112.18) .. (173.63,113.18) .. controls (173.63,114.19) and (172.82,115) .. (171.82,115) .. controls (170.81,115) and (170,114.19) .. (170,113.18) -- cycle ;
\draw [color={rgb, 255:red, 0; green, 0; blue, 0 }  ,draw opacity=1 ] [dash pattern={on 0.84pt off 2.51pt}]  (85.83,130) .. controls (82.5,122.33) and (79.5,119.33) .. (80.76,111.54) ;
\draw    (418,73.67) -- (462,117.67) ;
\draw    (406,73.67) -- (362,117.67) ;
\draw [color={rgb, 255:red, 208; green, 2; blue, 27 }  ,draw opacity=1 ]   (172.5,121.67) -- (172.83,123.08) ;
\draw [shift={(173.5,126)}, rotate = 257.01] [fill={rgb, 255:red, 208; green, 2; blue, 27 }  ,fill opacity=1 ][line width=0.08]  [draw opacity=0] (5.36,-2.57) -- (0,0) -- (5.36,2.57) -- cycle    ;
\draw [color={rgb, 255:red, 208; green, 2; blue, 27 }  ,draw opacity=1 ]   (267.5,100.33) -- (267.48,100.26) ;
\draw [shift={(266.83,97.33)}, rotate = 437.47] [fill={rgb, 255:red, 208; green, 2; blue, 27 }  ,fill opacity=1 ][line width=0.08]  [draw opacity=0] (5.36,-2.57) -- (0,0) -- (5.36,2.57) -- cycle    ;

\draw (26.08,83.4) node [anchor=north west][inner sep=0.75pt]  [font=\footnotesize]  {$X$};
\draw (52,71.73) node [anchor=north west][inner sep=0.75pt]  [font=\footnotesize]  {$Y$};
\draw (81.33,89.73) node [anchor=north west][inner sep=0.75pt]  [font=\footnotesize]  {$Z$};
\draw (114.42,98.4) node [anchor=north west][inner sep=0.75pt]  [font=\footnotesize]  {$\ell $};
\draw (213,79.4) node [anchor=north west][inner sep=0.75pt]  [font=\scriptsize,color={rgb, 255:red, 208; green, 2; blue, 27 }  ,opacity=1 ]  {$D'_{\ell }$};
\draw (48,109.4) node [anchor=north west][inner sep=0.75pt]  [font=\scriptsize,color={rgb, 255:red, 208; green, 2; blue, 27 }  ,opacity=1 ]  {$D_{\ell }$};
\draw (162,82.73) node [anchor=north west][inner sep=0.75pt]  [font=\footnotesize]  {$X'$};
\draw (155,106.07) node [anchor=north west][inner sep=0.75pt]  [font=\footnotesize]  {$Y'$};
\draw (158,129.07) node [anchor=north west][inner sep=0.75pt]  [font=\footnotesize]  {$Z'$};
\draw (269.08,128.4) node [anchor=north west][inner sep=0.75pt]  [font=\footnotesize]  {$X''$};
\draw (274.33,102.4) node [anchor=north west][inner sep=0.75pt]  [font=\footnotesize]  {$Y''$};
\draw (270.33,80.4) node [anchor=north west][inner sep=0.75pt]  [font=\footnotesize]  {$Z''$};
\draw (325,120.07) node [anchor=north west][inner sep=0.75pt]  [font=\footnotesize,color={rgb, 255:red, 0; green, 0; blue, 0 }  ,opacity=1 ]  {$( X',Y',Z')$};
\draw (405.5,60.07) node [anchor=north west][inner sep=0.75pt]  [font=\footnotesize,color={rgb, 255:red, 0; green, 0; blue, 0 }  ,opacity=1 ]  {$\perp $};
\draw (429,120.35) node [anchor=north west][inner sep=0.75pt]  [font=\footnotesize,color={rgb, 255:red, 0; green, 0; blue, 0 }  ,opacity=1 ]  {$( X'',Y'',Z'')$};
\draw (379,135) node [anchor=north west][inner sep=0.75pt]  [font=\Large,rotate=-90]  {$\}$};
\draw (357,135) node [anchor=north west][inner sep=0.75pt]  [font=\Large,rotate=-90]  {$\}$};
\draw (464,135) node [anchor=north west][inner sep=0.75pt]  [font=\Large,rotate=-90]  {$\}$};
\draw (486,135) node [anchor=north west][inner sep=0.75pt]  [font=\Large,rotate=-90]  {$\}$};
\draw (337.43,149) node [anchor=north west][inner sep=0.75pt]  [font=\scriptsize,color={rgb, 255:red, 208; green, 2; blue, 27 }  ,opacity=1 ] [align=left] {type $\displaystyle \ D'_{\ell }$};
\draw (443.52,149) node [anchor=north west][inner sep=0.75pt]  [font=\scriptsize,color={rgb, 255:red, 208; green, 2; blue, 27 }  ,opacity=1 ] [align=left] {type $\displaystyle \ D'_{\ell }$};
\draw (371,85) node [anchor=north west][inner sep=0.75pt]    {$\ell $};
\draw (447,85) node [anchor=north west][inner sep=0.75pt]    {$\ell $};

\end{tikzpicture}
   \caption{The cross-cap rule.}
   \label{fig:crosscap}
\end{figure}

 \begin{description}
\item[Cross-cap rule.]   Assuming $X \xrightarrow{\ell} X' ,X'' , \; Y \xrightarrow{\ell} Y', Y'', \; \mbox{and} \; Z \xrightarrow{\ell} Z',Z''$, the cross-cap rule matches two footprints of two children $Y'$ and $Y''$, and produces none at the parent~$Y$: 
\[
(X',Y',Z'), \; (X'',Y'',Z'') \xrightarrow{\ell}  \bot.
\]
The cross-cap rule applies to the case of identifying opposite vertices of the boundary of a face, in the reverse operation of doubling.   The corresponding face disappears, and their boundaries can be discarded. 
\end{description}

The assignments  of types to footprints is performed in the same as in Section~\ref{sec:tools}, and the same distributed algorithm is used for checking the planarity of $H$. An important  difference with the orientable case appears in  the definition  of the local consistency of distributed certificates (previously defined in Section~\ref{subsec:consistent}). Again,  an additional condition is introduced, for reflecting the creation of cross-caps. 
\begin{itemize}
\item  For every $\ell=1,\dots,m$, the collection of footprints at the nodes at level $\ell$ whose both edges have type $D'_{\ell}$  in the histories in~$\hist(G)$ can be ordered as $(X'_0,Y'_0,Z'_0),\dots,$ $(X'_{2r_\ell-1},Y'_{2r_\ell-1},Z'_{2r_\ell-1})$, for some $r_\ell\geq 1$, such that:
\begin{enumerate}
\item for every $i=0,\dots,r_\ell-1$, $Y_i \xrightarrow{\ell} \{Y'_i,Y''_{i + r_\ell}\}$; 
\item for every $i=0,\dots,2r_\ell-1$, $Y_i=Z_{i-1}=X_{i+1}$ (where  indices are taken  modulo $2r_\ell$); 
\end{enumerate}
\end{itemize}
The following lemma is the analog of Lemma \ref{lem:verification}, but for non-orientable surfaces. 
Its proof is identical to the proof of Lemma \ref{lem:verification}, with an additional argument, stating that the conditions added for handling non-orientable surfaces enable opposite vertices of  the face surrounded by $D'_\ell$  in $U_\ell$,  $1\leq \ell \leq 2k-1$,  to be  identified for creating a cross-cap in  $U_{\ell-1}$. 

\begin{lemma}\label{lem:verif_Pk}
Let  $H$ be a splitting of a graph $G$, let $f$ be a planar embedding of $H$, let $\phi$ be a face of $H$ with boundary walk $\vec{B}$ directed clockwise. Let $\hist(G)$ be an history of all the vertices in $G$. If $(G,H,\vec{B},\hist(G))$ is locally consistent, then $(G,H,\vec{B},\mathcal{U})$ is globally consistent, where $\mathcal{U}=U_0,\dots,U_{m+ k'+ k-1}$ is  a sequence of graphs enabling the global consistency of $(G,H,\vec{B},\hist(G))$ to hold. 
\end{lemma}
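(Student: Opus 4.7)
The plan is to mirror the proof of Lemma~\ref{lem:verification} closely, adding a new phase that reverses the cycle-doubling operations to produce cross-caps. First, from Condition~\ref{consistent:1} of local consistency (as adapted to the non-orientable case), I extract the sequence of graphs $\mathcal{U}=U_0,\dots,U_{m+k'+k-1}$, where each $U_{i+1}$ is a degree-2 splitting of $U_i$ consistent with the vertex sets, neighborhoods, footprints, and typing read off from the histories. Just as in the orientable case, the footprints at the leaves of the histories collectively encode the directed boundary walk $\vec{B}$ of $\phi$ in $U_{m+k'+k-1}=H$, which matches the end of the face-reduction phase.

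Next, I reverse the face-reduction phase using the conditions of Section~\ref{subsec:consistent} that carry over verbatim (the conditions analogous to Condition~\ref{consistent:3}). Starting from $\chi_{k-1}=\phi$ with its clockwise orientation, the path-duplication conditions allow me to iteratively identify, for $\ell=k-2,\dots,0$, a path $P'_{\ell+1},P''_{\ell+1}$ inside $U_{m+k'+\ell}$ whose merging splits $\chi_{\ell+1}$ into a face $\chi_\ell$ and a face $\psi_{\ell+1}$ with consistently oriented boundary walks. After $k-1$ such steps, I obtain the planar graph $U_{m+k'}$ equipped with $k=m+2k'$ oriented faces $\psi_1,\dots,\psi_k$. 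I then reverse the face-duplication phase using the cycle-duplication condition (Condition~\ref{consistent:4}), pairing up the last $2k'$ of these faces into $k'$ handles as in Lemma~\ref{lem:verification}, yielding $U_m$ embedded on $\T_{k'}$ together with the remaining $m$ oriented faces $\phi'^{(m)}_1,\dots,\phi'^{(m)}_m$ that come from the doublings.

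The genuinely new step is reversing the doubling phase using the added cross-cap condition. For $\ell=m,m-1,\dots,1$, the condition provides a cyclic enumeration $(X'_0,Y'_0,Z'_0),\dots,(X'_{2r_\ell-1},Y'_{2r_\ell-1},Z'_{2r_\ell-1})$ of the footprints of type $D'_\ell$ along $\phi'^{(\ell)}_\ell$, together with the matching $Y'_i,Y'_{i+r_\ell}\xrightarrow{\ell}Y_i$ of antipodal avatars. Identifying each antipodal pair along this boundary exactly implements the inverse of the cycle-doubling construction of Section~\ref{subsect:unfolding_Pk}: it contracts the face $\phi'^{(\ell)}_\ell$ into a cross-cap glued along the (non-orientable) cycle $D_\ell$ in $U_{\ell-1}$. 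The crucial point is that the antipodal matching is forced by the condition $Y'_i,Y'_{i+r_\ell}\xrightarrow{\ell}Y_i$, so this gluing is well defined and produces $U_{\ell-1}$ embedded on $\NT_{k-\sum_{j>\ell}(\text{handles/crosscaps removed})}$ with a cross-cap replacing the face; an easy Euler-characteristic check (Lemma~\ref{euler}) identifies the resulting surface as $\NT_{k-m+\ell-1}$ at each step, consistent with the behavior of doubling described in Section~\ref{subsect:unfolding_Pk}. After $m$ such steps, $U_0=G$ is embedded on $\NT_k$, establishing global consistency for $\P_k$.

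The main obstacle is the last paragraph: verifying that the antipodal identifications along $\phi'^{(\ell)}_\ell$ really do yield a surface homeomorphic to $\NT_{k-m+\ell-1}$ rather than accidentally gluing into a handle or a mixed surface. This is where the local encoding of the antipodal pairing (via the map $Y'_i,Y'_{i+r_\ell}\xrightarrow{\ell}Y_i$ together with the ordering of footprints along $D'_\ell$) must be shown to unambiguously determine a cross-cap gluing and not an orientable one; the fact that the whole boundary is traversed with a single consistent clockwise direction inherited from $\vec{B}$, combined with the parity of the antipodal shift $i\mapsto i+r_\ell$, is what rules out a handle-like identification, exactly paralleling how in the orientable case the opposite-direction traversal of the paired boundary walks of $\phi'^{(i)}_i$ and $\phi''^{(i)}_i$ forced a handle (not a Klein-bottle) identification.
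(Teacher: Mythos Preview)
Your proposal is correct and follows essentially the same approach as the paper: the paper's proof is stated as being identical to that of Lemma~\ref{lem:verification} with the single added argument that the new cross-cap condition allows opposite vertices of the face bounded by $D'_\ell$ in $U_\ell$ to be identified, creating a cross-cap in $U_{\ell-1}$. Your write-up simply spells this out in more detail, including the orientation/parity reasoning in your final paragraph that the paper leaves implicit.
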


\subsection{Verification Procedure}

The verification procedure is similar to the one described in Section~\ref{subsec:verif-procedure}, and is therefore omitted.


\section{Conclusion}
\label{sec:conclusion}

In this paper, we have designed proof-labeling schemes for the class of graphs of bounded genus, as well as for the class of graphs with bounded demigenus. All our schemes use certificates on $O(\log n)$ bits, which is optimal, as it is known that even certifying the class of planar graphs requires proof-labeling schemes with certificates on $\Omega(\log n)$ bits~\cite{FeuilloleyFRRMT}. The existence of ``compact'' proof-labeling schemes (i.e., schemes using certificates of polylogarithmic size) for other classes of sparse graphs is still not known. In particular, proving or disproving the existence of such a scheme for $H$-minor-free graphs appears to be a challenging problem. Indeed, Robertson and Seymour's decomposition theorem states that every $H$-minor-free graph can be expressed as a tree structure of ``pieces'', where each piece is a graph that can be embedded in a surface on which $H$ cannot be embedded, plus a bounded number of so-called \emph{apex} vertices, and a bounded number of so-called \emph{vortex} subgraphs. The  decomposition theorem provides a powerful tool for the design of (centralized or distributed) algorithms. However, this theorem is not a characterization, that is, there are graphs that are not $H$-minor-free, and yet can be expressed as a tree structure satisfying the required properties (surfaces of bounded genus, bounded number of apices, bounded number of vortices, etc.).  It follows that, although Robertson and Seymour's decomposition theorem should most probably play a crucial role for designing a compact proof-labeling scheme for $H$-minor-free graphs (if such a scheme exists), this development may require identifying additional properties satisfied by these graphs.


%

\bigbreak 

\paragraph{Acknowledgements.} The first and fifth authors are thankful  to Gelasio Salazar for his very detailed answers to their questions about closed surfaces.

\bibliographystyle{plain}
\bibliography{pls-genus}



\end{document}